\documentclass[reprint,amsfonts, amssymb, amsmath,  showkeys,pra,nofootinbib, twocolum,,longbibliography,aps, superscriptaddress]{revtex4-2}

\usepackage{float}
\makeatletter
\let\newfloat\newfloat@ltx
\makeatother
\usepackage[english]{babel}
\usepackage[utf8]{inputenc}
\usepackage{graphics}
\usepackage{selinput}
\usepackage[normalem]{ulem}
\usepackage[shortlabels]{enumitem}
\usepackage{braket}
\usepackage{amsthm}
\usepackage{mathtools}
\usepackage{physics}
\usepackage[dvipsnames]{xcolor}
\usepackage{graphicx}
\usepackage[left=16mm,right=16mm,top=35mm,columnsep=15pt]{geometry} 
\usepackage{adjustbox}
\usepackage{placeins}
\usepackage[T1]{fontenc}
\usepackage{lipsum}
\usepackage{csquotes}
\usepackage{bm}
\usepackage[linesnumbered,ruled,vlined]{algorithm2e}
\SetKwInput{kwInit}{Init}
\usepackage{changes}
\definechangesauthor[name=Ilya Safro, color=red]{is}
\usepackage[makeroom]{cancel}
\usepackage[toc,page]{appendix}
\usepackage[colorlinks=true,citecolor=blue,linkcolor=magenta]{hyperref}
\usepackage{tikz}
\tikzset{every picture/.style=remember picture}
\usepackage{graphicx}

\usepackage{enumitem}

\usepackage{dcolumn}
\usepackage{bm,bbm}
\usepackage{physics}
\usepackage{amsmath}
\usepackage{amsthm}
\usepackage{amsfonts}
\usepackage{hyperref}
\usepackage{xcolor}
\usepackage{float}
\usepackage{soul}
\sethlcolor{yellow}

\usepackage{lipsum}

\usepackage{cancel}

\newcommand{\1}{\mathbbm{1}}

\newcommand{\sx}{\sigma_x}
\newcommand{\sy}{\sigma_y}

\newcommand{\fsnull}[1]{}
\newcommand{\old}[1]{}

\def\LC{\mathcal{L}}

\usepackage{mathrsfs}

\DeclareMathOperator{\sinc}{sinc}
\DeclareMathOperator{\Ebb}{\mathbb{E}}

\newcommand{\DC}{\mathcal{D}}

\newcommand{\VC}{\mathcal{V}}

\renewcommand{\geq}{\geqslant}
\renewcommand{\leq}{\leqslant}

\DeclareMathOperator*{\argmin}{arg\,min}

\DeclareMathOperator{\Var}{{\rm Var}}
\DeclareMathOperator{\Covar}{{\rm Cov}}

\renewcommand{\vec}[1]{\boldsymbol{#1}}

\newcommand*{\id}{\openone}

\newcommand{\bs}{\textsf{BS}}

\newcommand{\al}{\alpha }

\renewcommand{\th}{\theta }

\newcommand{\thv}{\vec{\theta}}
 
\newcommand{\alv}{\vec{\alpha} }

\def\be{\begin{equation}}
\def\ee{\end{equation}}
\def\bs{\begin{split}}
\def\e{\end{split}}
\def\ba{\begin{eqnarray}}
\def\bea{\begin{eqnarray}}

\def\tea{\end{eqnarray}}
\def\ea{\end{eqnarray}}
\def\eea{\end{eqnarray}}

\newtheorem{theorem}{Theorem}

\newtheorem{proposition}{Proposition}

\newtheorem{definition}{Definition}

\usepackage{dsfont}

\def\be{\begin{equation}}
\def\te{\end{equation}}
\def\ee{\end{equation}}
\def\ba{\begin{eqnarray}}
\def\bea{\begin{eqnarray}}

\def\tea{\end{eqnarray}}
\def\ea{\end{eqnarray}}
\def\eea{\end{eqnarray}}

\newcommand{\beq}{\begin{equation}}
\newcommand{\eeq}{\end{equation}}

\newcommand{\paramh}{x}

\makeatletter

\newif\ifeqcontrib@this
\newif\ifeqcontrib@any
\eqcontrib@thisfalse
\eqcontrib@anyfalse

\newcommand{\eqcontrib}{\global\eqcontrib@thistrue\global\eqcontrib@anytrue}

\newcommand{\eqcontribmark}{\textsuperscript{\ensuremath{,\ddagger}}}

\newcommand{\eqcontrib@maybe}{%
  \ifeqcontrib@this
    \global\eqcontrib@thisfalse
    \eqcontribmark
  \fi
}

\newcommand{\printEqContrib}{%
  \ifeqcontrib@any
    \begingroup
      \renewcommand{\thefootnote}{\ensuremath{\ddagger}}%
      \footnotetext{These authors contributed equally to this work.}%
    \endgroup
  \fi
}

\def\doauthor#1#2#3{%
  \ignorespaces#1\unskip\@listcomma
  \begingroup
    #3%
  \@if@empty{#2}{\endgroup{}{}}{\endgroup{\comma@space}{}\frontmatter@footnote{#2}}%
  \eqcontrib@maybe
  \space \@listand
}%
\makeatother

\begin{document}

\preprint{APS/123-QED}

\title{Warm Starts, Cold States: Exploiting Adiabaticity for Variational Ground-States}

\author{Ricard Puig\eqcontrib}
\email{ricard.puigivalls@epfl.ch}

\affiliation{Institute of Physics, Ecole Polytechnique F\'{e}d\'{e}rale de Lausanne (EPFL), CH-1015 Lausanne, Switzerland}
\affiliation{Centre for Quantum Science and Engineering, Ecole Polytechnique F\'{e}d\'{e}rale de Lausanne (EPFL), CH-1015 Lausanne, Switzerland}

\author{Berta Casas\eqcontrib}
\email{berta.casas@bsc.es}

\affiliation{Barcelona Supercomputing Center, Plaça Eusebi G\"uell, 1-3, 08034 Barcelona, Spain}
\affiliation{Universitat de Barcelona, 08007 Barcelona, Spain}

\author{Alba Cervera-Lierta}
\affiliation{Barcelona Supercomputing Center, Plaça Eusebi G\"uell, 1-3, 08034 Barcelona, Spain}

\author{Zo\"e Holmes}
\affiliation{Institute of Physics, Ecole Polytechnique F\'{e}d\'{e}rale de Lausanne (EPFL), CH-1015 Lausanne, Switzerland}
\affiliation{Centre for Quantum Science and Engineering, Ecole Polytechnique F\'{e}d\'{e}rale de Lausanne (EPFL), CH-1015 Lausanne, Switzerland}

\author{Adri\'an P\'erez-Salinas}
\affiliation{Institute for Theoretical Physics, ETH Zürich, 8093, Zürich, Switzerland}

\date{\today}

\begin{abstract}
Reliable preparation of many-body ground states is an essential task in quantum computing, with applications spanning areas from chemistry and materials modeling to quantum optimization and benchmarking. A variety of approaches have been proposed to tackle this problem, including variational methods. However, variational training often struggle to navigate complex energy landscapes, frequently encountering suboptimal local minima or suffering from barren plateaus. In this work, we introduce an iterative strategy for ground-state preparation based on a stepwise (discretized) Hamiltonian deformation. By complementing the Variational Quantum Eigensolver (VQE) with adiabatic principles, we demonstrate that solving a sequence of intermediate problems facilitates tracking the ground-state manifold toward the target system, even as we scale the system size. We provide a rigorous theoretical foundation for this approach, proving a lower bound on the loss variance that suggests trainability throughout the deformation, provided the system remains away from gap closings. Numerical simulations, including the effects of shot noise, confirm that this path-dependent tracking consistently converges to the target ground state.
\end{abstract}

\maketitle
\printEqContrib

\section{Introduction}

Learning ground states and their energies remain among the most compelling uses of quantum computing, with applications in electronic-structure calculations~\cite{cao2019quantum}, large-scale optimization~\cite{farhi2014quantum}, high energy physics~\cite{di2024quantum}, and quantum sensing~\cite{hassani2025many, puig2024dynamical}. Yet a persistent bottleneck underlies essentially all known approaches to ground-state preparation and ground-state energy estimation~\cite{nielsen2000quantum,temme2011quantum,kitaev1995quantum,brassard2000quantum,tan2020quantum,low2019hamiltonian,gilyen2019quantum,ge2019faster,motta2020determining,motlagh2024ground,gluza2024double, suzuki2025double}:
they typically require initialization with a state that already has nontrivial overlap with the true ground state. How to prepare such “good” initial states is currently the subject of active debate.

Classical experience provides insights into how one might approach approximate state preparation. One natural strategy is to optimize within a structured variational family; for example, density matrix renormalization group (DMRG) methods form a bedrock of electronic-structure calculations. This perspective has, in turn, motivated variational quantum methods such as the Variational Quantum Eigensolver (VQE), a hybrid quantum–classical algorithm that minimizes an energy objective of a parameterized quantum state preparation circuit. A second common strategy is to exploit adiabaticity—slowly deforming the ground state of an easy-to-solve Hamiltonian into that of a harder one~\cite{garcia2018addressing,harwood2022improving,albash2018adiabatic, hibat2021variational, vzunkovivc2025variational}. This principle underlies quantum adiabatic algorithms for ground-state preparation (in both analogue~\cite{farhi2000quantum} and digital settings~\cite{kolotouros2024simulating}) where one evolves under a slowly varying Hamiltonian and leverages a spectral gap to suppress unwanted excitations.

\begin{figure*}[t!]
    \centering
\includegraphics[width=1\linewidth]{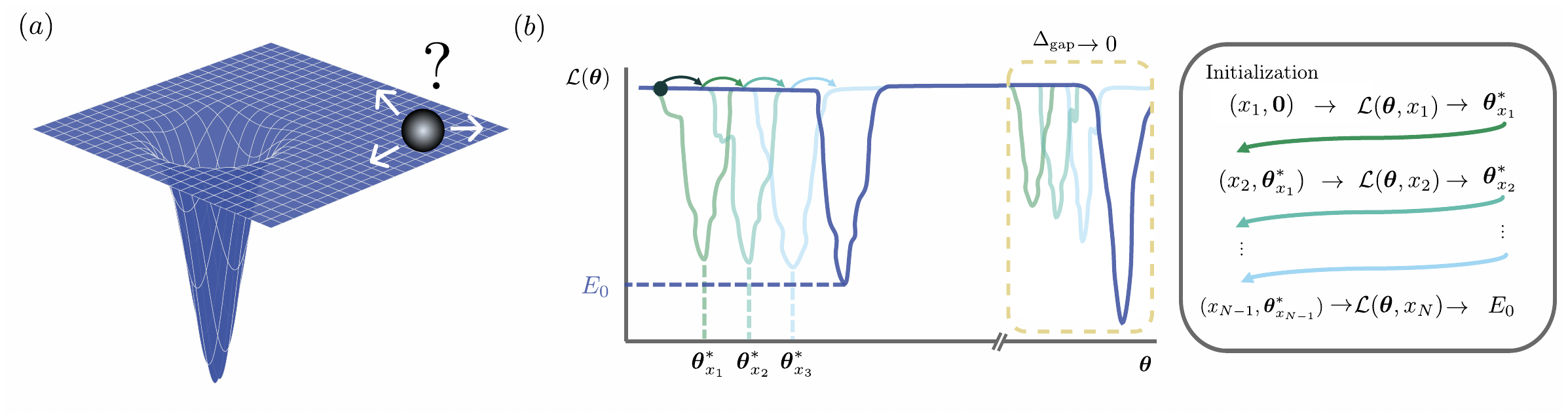}
    \caption{
    \textbf{Schematic overview of the iterative training mechanism and its failure mode at level crossings.} \textbf{(a)} Illustrative loss landscape at random initialization: the minimum is narrow and the landscape is otherwise nearly flat, so a random start typically lies in a low-gradient region (barren plateau). \textbf{(b)} \textbf{Left:} The loss landscape $\mathcal{L}(\thv)$ as a function of the variational parameters $\thv$ at successive iterations (colors). When the ground state varies smoothly and the spectral gap remains open, the location of the minimum drifts continuously, so initializing each step with the previous optimum keeps the optimizer in a high-gradient region that reliably tracks the moving solution. As the gap closes (yellow dashed box), a competing global minimum can emerge far from the followed basin, causing the iterative procedure to remain trapped in the “wrong” minimum past the crossing. \textbf{Right:} visual representation of the iterative warm-start strategy, where the optimal parameters at iteration $k$ are used to initialize the step $k+1$. }
    \label{fig:mainfig}
\end{figure*}

In this work, we develop an iterative training strategy that integrates adiabatic state preparation~\cite{farhi2000quantum, albash2018adiabatic} into VQE~\cite{peruzzo2014variational}. Starting from a trivial Hamiltonian with known parameters that prepare the ground state, we gradually deform the Hamiltonian along an adiabatic interpolation. At each step, the variational parameters are initialized in the optimal solution from the preceding Hamiltonian in the path, and then optimized. We further generalize this approach from vanilla VQE, which targets the ground state of a single Hamiltonian, to meta-VQE~\cite{cervera2020meta}, in which a single parameterized model is trained to output near-optimal parameters across a family of Hamiltonians. We stress that in both cases our iterative strategy is not limited to quantum hardware in that it could also be applied to ``quantum-inspired'' classical methods that rely on simulating quantum circuits or variational states to find ground states and energies~\cite{lerch2024efficient,rudolph2025pauli}.

The iterative training strategy proposed here mitigates key limitations of VQE and adiabatic state preparation when each is used in isolation. Compared to adiabatic methods, an iterative variational scheme can potentially prepare similar quality states with substantially shallower circuits, and it provides a natural setting in which to learn and implement shortcuts to adiabaticity~\cite{guery2019shortcuts}. Relative to standard VQE, the method is designed to improve trainability in the presence of barren-plateau phenomena: by exploiting the smooth deformation of the Hamiltonian and its objective landscape, each step can be warm-started from the preceding solution, keeping the optimization localized and the variational search in a favorable regime.

To formalize the benefits of warm starts, we derive a lower bound on the gradient variance, showing that at each step of the iterative procedure both VQE and Meta-VQE maintain trainable gradients, provided the preceding step converges sufficiently. Notably, we find that these guarantees are sensitive to the spectral gap and fail to hold when it vanishes.
Such breakdowns are unsurprising:
finding ground states of general many-body Hamiltonians is QMA-complete~\cite{kempe2004complexity}.

To investigate the practicality of our method, we supplement our theoretical results with numerical simulations of the full training process, accounting for realistic shot noise. Our results confirm that the method consistently converges 
when the gap in the ground state energy does not vanish, providing a promising path for ground-state preparation on near-term devices in such cases. However, we see that when the gap closes the method can break down in line with our theoretical predictions and the QMA-completeness of generic Hamiltonians.

\section{Preliminaries}\label{sec:prelimianries}

\subsection{Variational Quantum Eigensolver and Meta-VQE}

The Variational Quantum Eigensolver (VQE) is a hybrid quantum–classical variational algorithm that approximates the ground-state energy of a Hamiltonian $H_P$ by preparing a trial state on a quantum processor and minimizing its measured energy over a parametrized circuit family. Precisely, the algorithm optimizes the following loss function
\begin{align}\label{eq:lc_vqe}
    \LC_{\rm VQE}(\thv) =\expval{U^\dagger(\thv) H_P U(\thv)}{\psi},
\end{align}
where $U(\thv)$ is a parametrized quantum circuit, $\ket{\psi}$ is an initial state, and $\thv$ is the vector of variational parameters. These parameters are optimized via a hybrid quantum-classical loop: the quantum computer estimates the objective and (when required) its gradients, which are later fed to a classical optimizer suggesting a new set of parameters. The loop is repeated until convergence to a candidate solution $\thv^*$.
In this case, we focus on a very general circuit of the form
\begin{align}\label{eq:circuit_vqe}
    U_{\rm VQE}(\thv) = \prod_{j=1}^M V_j e^{-i\th_j P_j}\,,
\end{align}
where $\{P_j\}_{j=1}^{M}$ is a set of Hermitian generators such that $P_j^2 = \1$, $M$ is the total number of variational parameters, and  $\{V_j\}_{j=1}^M$ are fixed (non-parametrized) gates. Throughout this work, we assume the parameters $\{\th_j\}$ to be uncorrelated.

A very natural extension of VQE for parametrized Hamiltonians is the Meta-VQE~\cite{cervera2020meta}. The Meta-VQE is an algorithm that contains an additional input (or encoding parameter) $x$ in the circuit, and is trained on a family of parametrized Hamiltonians $\{H_P(x_i)\}_{i=1}^N$. The loss function is the average energy over a training set $\{x_i\}_{i = 1}^N$,
\begin{align}\label{eq:lc_metavqe}
    \LC_{\rm MVQE}(\thv) = \frac{1}{N}\sum_{i=1}^N \expval{U^\dagger(\thv,x_i)H_P(x_i)U(\thv, x_i)}{\psi}\,.
\end{align}
Minimizing $ \LC_{\rm MVQE}(\thv)$ results in a single parameter vector $\thv$ that can be used with the same circuit family $U(\thv,x)$ to prepare approximate ground states for all training points $x_i$.  
A central motivation is that the learned parameters may also transfer to previously unseen values of $x$, for example between the trained points.

Because the circuit explicitly depends on $x$, we write it as
\begin{align}\label{eq:circuit_metavqe}
    U_{\rm MVQE}(\thv,x) = \prod_{j}^MV_j e^{-if_j(\th_j,x) P_j}\,,
\end{align}
where the notation carries over from Eq.~\eqref{eq:circuit_vqe}. The encoding function $f_j(\th_j,x)$ specifies how the input $x$ modulates the $j$-th rotation. In this sense, the choice of encoding determines how the data (here, the Hamiltonian parameter $x$) is injected into the circuit, and can strongly influence the induced distribution of states and the resulting optimization  of the model~\cite{casas2024role}.  For the purpose of this work we will consider $f_j(\th_j,x)$ to be a linear function of the training parameters, i.e., $f_j(\th_j,x)  = g_j(x) \th_j$, where each $g_j(x)$ has bounded derivatives in the domain of $[{x}_{\min},{x}_{\max}]$. 

\subsection{Gradient magnitudes and barren plateaus}
Significant effort has been devoted to understanding the trainability of VQE and other Variational Quantum Algorithms (VQAs). In practice, training can be hindered by limited expressivity, unfavorable optimization landscapes, and local minima. One of the most studied phenomena is the vanishing gradient problem, also known as the barren plateau problem~\cite{mcclean2018barren, larocca2024review}, which is typically characterized by an exponentially vanishing variance of the cost landscape (over an ensemble of parameters or circuits). This implies that most of the parameter settings in the landscape yield cost values that are very close to the ensemble average, by Chebyshev's inequality
\begin{align}
    {\rm Pr}(|\LC(\thv)- \Ebb_{\thv\sim\DC}[\LC(\thv)]|\geq\epsilon)\leq \frac{{\rm Var}_{\thv\sim\DC}[\LC(\thv)]}{\epsilon^2}.
\end{align}
If the fluctuations are exponentially small, exponentially many physical measurements are required, making optimization inefficient. 
Relatedly, gradients can also exhibit exponential concentration \cite{perez-salinas2024analyzing, arrasmith2021equivalence}.
More broadly, these effects can be viewed as manifestations of a curse of dimensionality~\cite{cerezo2023does}.
This has prompted the search for specific ansätze in which the variance vanishes at most polynomially with the number of qubits.

\subsection{Iterative methods}
Iterative methods combat trainability barriers by breaking down a single hard optimization task into a sequence of smaller, more manageable subproblems.  
The key idea is that the optimal solutions vary smoothly across the sequence, so that the parameters optimized at one step provide a good warm-start initialization for the next.

One way to generate such a sequence is to introduce an interpolation between an ``easy'' initial Hamiltonian $H_{\mathrm{ini}}$ with known ground state and the target problem Hamiltonian $H_P$, similar to an adiabatic path~\cite{farhi2000quantum}
\begin{align}
    H(s) = (1-s)H_{\mathrm{ini}} + s H_P,
\end{align}
As $s$ increases from $0$ to $1$, this defines a continuous path between an easy instance to the target instance, which mirrors the spirit of adiabatic state preparation. Unlike true adiabatic evolution, however, the state is produced by a parametrized circuit and updated through variational optimization at discrete values of ${s_k}$. This has two main consequences: on the one hand, success depends on the expressivity of the ansatz and the ability to remain on the relevant low-energy branch. On the other hand, the depth required to learn and prepare the ground states can potentially be significantly shallower in a digital setting than simulating the adiabatic time evolution.

In the context of many-body Hamiltonians, it is often more natural to consider families parameterized by a physical coupling $x$, for example
\begin{equation}\label{eq:param_H_iterative_general}
    H(x) = H_0 + x H_1,
\end{equation}
For any fixed $x$, the goal is to approximate the ground state of the corresponding instance $H(x)$ (which we denote by $H$ when $x$ is clear from context).
By incrementally changing the value of $x$, we can create a set of optimization problems that are similar enough. Concretely, we choose a set $x_1<x_2<\cdots<x_K$ and define a family of loss functions $\{\LC(\thv;x_k)\}_{k = 1}^K$.
The optimal parameters $\thv^*_{x_{k-1}}$ found at step $k-1$ are expected to provide a useful inductive bias for step $k$. As we will explore, this depends on several factors, such as the gap of the Hamiltonian, the smoothness of the encoding function, and the circuit expressivity. In practice, we initialize the parameters for the $k$-th problem by sampling them randomly from a small region (for instance, a hypercube) centered at $\thv^*_{x_{k-1}}$.

We formalize this idea in the following definition.
\begin{definition}[Iterative training strategy]\label{def:iterative_training_strat}
Consider a general loss function $\LC(\thv)$. An iterative training strategy defines a family of loss functions $\{\LC(\thv, x_k)\}_{k=1}^K$, and uses the solution at step $k-1$
\begin{equation}
    \thv^*_{x_{k-1}}= \argmin_{\thv}\LC(\thv, x_{k-1})\,,
\end{equation}
as an initialization bias for optimizing $\LC(\thv, x_{k})$. Concretely, the initial parameters for $\LC(\thv,x_{k})$ are chosen according to a distribution  $\DC$ peaked at $\thv^*_{x_{k-1}}$.  
\end{definition}

For a VQE loss function of the form of Eq.~\eqref{eq:lc_vqe}, the family of loss functions is defined as 
\begin{align}
     \LC_{\rm VQE}(\thv, x_k) =\expval{U^\dagger(\thv) H(x_k) U(\thv)}{\psi},
\end{align}
with $H(x_k) = H_0 + x_k H_1$ as in Eq.~\eqref{eq:param_H_iterative_general}. Importantly, $\LC(\thv,x_K)$ corresponds to the original problem we wanted to solve $\LC(\thv)$. Similarly, for a Meta-VQE loss function of the form of Eq.~\eqref{eq:lc_metavqe}, the family of loss functions that we define are $\{\LC_{\rm MVQE}(\thv, k)\}_{k = 1}^{K}$, where
\begin{multline}
   \LC_{\rm MVQE}(\thv,k) = \\ \frac{1}{k}\sum_{j=1}^k\expval{U^\dagger(\thv,x_j)H(x_j)U(\thv,x_j)}{\psi}\,.
\end{multline}
Notice the following key difference between the Meta-VQE and the simple VQE iterative strategy: in the Meta-VQE, instead of changing each summand in the loss function, an extra term is added to it, leaving us with sequence of objective functions indexed by the number of training points.

\section{Main results}\label{sec:results}

\subsection{Overview}
Barren plateaus are most commonly established by showing that the cost function concentrates under a chosen parameter ensemble, often corresponding to random initialization. Importantly, this is an average-case statement: it does not exclude the existence of trainable regions of the landscape.
In particular, large enough gradients typically exist if the optimization is initialized closely enough to a minimum~\cite{mhiri2025unifying}. While we generally cannot guarantee initialization in such a neighborhood a priori, we can devise algorithms and techniques to find good starting points with provable guarantees. In particular, it has been shown that small angle initializations can \textit{sometimes} avoid small gradients.

To formalize these results, let us define a hypercube in the parameter space centered at a reference point $\thv^* \in \mathbb{R}^M$ and with side $2r$ as
\begin{align}
    \VC(\thv^*,r) = \{\thv = \thv^* + \alv\,|\,\al_i\in[-r,r]\}\,.
\end{align}
Then, we we can define a localized distribution. Concretely, we denote $\DC(\thv^*,r)$ the uniform distribution over the hypercube $\VC(\thv^*,r)$, i.e., the distribution obtained by drawing each $\al_i$ independently and uniformly from $[-r,r]$.

Prior work showed that initializing a circuit uniformly around $\thv^*= \vec{0}$ with $r\in\order{1/\sqrt{L}}$, where $L$ is the number of layers, can lead to non-exponentially vanishing gradients for a specific hardware and initial state~\cite{wang2023trainability}. Similar conclusions were reached in~\cite{dborin2022matrix, truger2024warm,rudolph2022synergy, goh2023lie,sauvage2021flip, verdon2019learning, okada2023classically, ravi2022cafqa,gibbs2024exploiting, mitarai2022quadratic, tate2021classically, niu2023warm, egger2021warm, wurtz2021fixed, mari2020transfer, wilson2019optimizing, liu2023mitigating,zhou2020quantum, akshay2021parameter, grimsley2022adapt, mele2022avoiding}, and most of these results have been unified and generalized in~\cite{mhiri2025unifying}, where it is shown that gradients vanish at worst polynomially with the number of qubits around any reference point with non-exponentially vanishing second derivative.  However, since identity initialization can fail to have gradients in full generality, the work conjectures that warm starts need to get increasingly \textit{smarter} with the number of qubits.

Variational methods have also been used to compress Trotter circuits via warm starts~\cite{puig2024variational} by learning one Trotter step at a time, using the previous step as an inductive bias. More concretely, the authors use established methods to iteratively compress a Trotter evolution, and rigorously prove that at each new initialization gradients can be found. This concept is generalized to any overlap type loss, with a rank-1 target state. They show that for any target state that changes sufficiently slow, we can use iterative methods to enhance the training of a VQA. Crucially, this work does not apply to loss functions with observables with rank larger than one. Or in other words, the method fails for ground state preparation. 

This critical limitation motivate us to explore \textit{iterative training} beyond loss functions that are just the overlap between two states, and to study whether iterative approaches can be used to find ground states of arbitrary observables. In particular, to approximate the ground state of $H(x)$, we propose to split the problem into an iterative sequence of easier optimization tasks. Indeed, by starting our training in a reference Hamiltonian $H_0$, whose ground state is known or easy to find, and incrementally modifying it until we reach the final $H(x)$, we can potentially remain in regions with non-exponentially vanishing gradients. 
 
Concretely, in this work we provide analytical guarantees for iterative training methods covering a general family of quantum circuits and variational algorithms. The guarantees are based on the fact that, by finding a good approximation to the ground state of an intermediate step, we can use it as an effective warm-start initialization for the next iteration, as long as the change in $H(x)$ is \textit{small enough}. This procedure is sketched in Fig.~\ref{fig:mainfig}. 

Our results can be viewed as a discretized analogue of adiabatic quantum computation in the following sense. In adiabatic state preparation, one follows a continuous path of Hamiltonians $H(s)$ and, provided the evolution is slow compared to the inverse spectral gap, the system remains close to the instantaneous ground state. Here, instead of implementing a continuous-time evolution, we consider a \emph{sequence} of Hamiltonians along an interpolation path and solve a variational optimization problem at each point. The underlying mechanism is analogous to adiabatic reasoning: if successive Hamiltonians are sufficiently close and the gap does not become too small, the ground state varies smoothly along the path, so the parameters learned at step $k-1$ provide an effective warm start for step $k$. Our analysis makes this intuition quantitative by translating the notion of ``slow enough'' into explicit, step-wise conditions expressed in terms of how much the Hamiltonian changes between consecutive points on the path. In this way, we provide analytical guarantees for when iterative training remains in a trainable regime and when it can fail, for example when gap closures or level crossings cause the optimum to move abruptly from one step to the next.

\subsection{Lower-bound on the variance}\label{sec:lower_bound_main}
We now proceed to analyzing iterative strategies through the local geometry of the loss landscape around the warm-start initialization. At iteration $x_k$, we do not optimize from scratch; instead, we initialize parameters in a small neighborhood of the previously optimized solution $\thv^*_{x_{k-1}}$. The key question is then whether this \emph{local} region remains sufficiently informative for optimization as we move along the path.

To move away from the intuition and turn this into a formal result, in this section we provide a lower bound on the variance of the loss function at a new iteration $x_k$ when parameters are sampled from a hypercube around $\thv^*_{x_{k-1}}$. This bound serves as a quantitative certificate that the loss does not become overly flat in the warm-start region. Equivalently, it guarantees that typical parameter perturbations around $\thv^*_{x_{k-1}}$ induce non-negligible changes in the objective, which is a necessary ingredient for reliable training and for assessing progress with a feasible number of measurements.

\begin{theorem}[Lower-bound on the loss variance for iterative methods, Informal]\label{thm:main}
    Consider a loss function of the form in either Eq.~\eqref{eq:lc_vqe} or Eq.~\eqref{eq:lc_metavqe}, with \textit{ans\"atze} as defined in Eq.~\eqref{eq:circuit_vqe}, and Eq.~\eqref{eq:circuit_metavqe} with $M$ variational parameters. Suppose we use an iterative training strategy according to Definition~\ref{def:iterative_training_strat}. Assume that the following conditions are fulfilled:
    \begin{enumerate}
        \item The Hamiltonian change rate, $x_k-x_{k-1}$ is bounded by
         \begin{align}
            |x_k-x_{k-1}|\in \order{\frac{|\Delta_{\rm gap}|}{\|H_1\|_s + M\|H(x_k)\|_s}}\,.
        \end{align}
         \item  The infidelity between the true ground state and the one learned variationally in the previous iteration is bounded by a constant.
         \item The generator of the first gate acts non-trivially on the initial state, i.e., $\expval{P_1}{\psi} = 0$.
    \end{enumerate}
   
   Consider sampling the parameters $\thv_{k}$ from a hypercube of width $2r$ around the previous solution $\thv_{k-1}^*$ such that 
    \begin{align}
        r^2\in\order{\frac{|\Delta_{\rm gap}|}{M (\|H(x_k)\|_s + |\Delta_{\rm gap}|)}}\,.
    \end{align}
    Then we can guarantee the variance lower bound
    \begin{align}
        {\rm Var}_{\thv\sim\DC(\thv^*,r)}[\LC(\thv, x_k)]\in\Omega\left(r^4\right)\,.
    \end{align}
    \end{theorem}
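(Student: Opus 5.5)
Our plan is to reduce the variance bound to a statement about the local quadratic structure of the loss around the previous optimum $\thv^*_{x_{k-1}}$, and then to control that structure using the spectral gap. We write $\thv=\thv^*_{x_{k-1}}+\alv$ with $\alpha_i$ i.i.d.\ uniform on $[-r,r]$. Since each parameter enters through $e^{-i\th_j P_j}$ with $P_j^2=\1$, the loss is a trigonometric polynomial of degree one in each $\th_j$. The Meta-VQE case is the same, because $f_j=g_j(x)\th_j$ is linear in $\th_j$ and $g_j$ has bounded derivatives. We then Taylor expand $\LC(\thv^*+\alv,x_k)$ to second order with a remainder controlled by $\|H(x_k)\|_s$ and $M r^3$. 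Because the $\alpha_i$ are independent and symmetric, this gives
\begin{align}
{\rm Var}[\LC]\;\geq\; \frac{r^2}{3}\sum_i (\partial_i\LC)^2+\frac{r^4}{45}\sum_i(\partial_i^2\LC)^2 - \text{(cross/remainder terms)}\,,
\end{align}
in analogy with the small-angle analyses of~\cite{mhiri2025unifying,puig2024variational}. The condition on $r^2$ is chosen so that the remainder, which scales like $M r^2\|H(x_k)\|_s$ relative to the leading curvature term, is a small fraction of the $r^4$ term. It therefore suffices to show that at least one diagonal second derivative $|\partial_j^2\LC(\thv^*_{x_{k-1}},x_k)|$ is bounded below by a quantity of order $|\Delta_{\rm gap}|$.

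To lower bound this curvature, we take the first gate. A direct computation gives
\begin{align}
\partial_1^2\LC = 2\big(\langle \psi|P_1 \tilde H P_1|\psi\rangle - \langle\psi|\tilde H|\psi\rangle\big),
\end{align}
where $\tilde H$ is $H(x_k)$ conjugated by the rest of the circuit at $\thv^*$; the precise form depends on gate ordering, and we orient it so that $P_1$ acts directly on $\ket{\psi}$. Write $\ket{\phi}=U(\thv^*)\ket\psi$ and $\ket{\phi'}$ for the state with $P_1$ inserted. Condition 3, $\expval{P_1}{\psi}=0$, makes $\ket{\phi'}$ orthogonal to $\ket{\phi}$. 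Condition 2 says $\ket\phi$ has constant fidelity with the ground state $\ket{g_{k-1}}$ of $H(x_{k-1})$. Combining these, $\ket{\phi'}$ must carry weight at least a constant on the excited subspace. This yields $\langle\phi'|H|\phi'\rangle-\langle\phi|H|\phi\rangle\gtrsim c\,\Delta_{\rm gap}$ minus the energy error of $\ket\phi$, which is controlled by the infidelity times $\|H\|_s$. This step is delicate, and we may need the constant in condition 2 to be sufficiently small.

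Next we transfer the statement from $H(x_{k-1})$ to $H(x_k)$. Standard perturbation theory (Davis--Kahan) gives $\|\ket{g_k}-\ket{g_{k-1}}\|\lesssim |x_k-x_{k-1}|\,\|H_1\|_s/\Delta_{\rm gap}$, and the gap itself moves by at most $|x_k-x_{k-1}|\|H_1\|_s$. The first condition on the step size makes both corrections a constant fraction of $\Delta_{\rm gap}$. The extra $M\|H(x_k)\|_s$ in its denominator absorbs the first-derivative cross terms, since $\partial_i\LC(\thv^*_{x_{k-1}},x_k)=(x_k-x_{k-1})\partial_i\langle H_1\rangle$ because $\thv^*$ is stationary for $x_{k-1}$.

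The main obstacle is the curvature step: showing rigorously that the curvature in direction $\th_1$ is of order $\Delta_{\rm gap}$. This needs the stationarity of $\thv^*_{x_{k-1}}$, the orthogonality from $\expval{P_1}{\psi}=0$, and only approximate ground-state fidelity, all at once. If a direct argument fails, we would bound the variance instead by restricting to fluctuations of $\alpha_1$ alone, conditioned on the others (law of total variance). That reduces the problem to a one-dimensional sinusoid $a\cos 2\alpha_1+b\sin 2\alpha_1+c$, whose variance on $[-r,r]$ is $\Omega(a^2 r^4+b^2r^2)$, with $|a|$ averaged over the other parameters staying close to its value at $\thv^*$ under the $r^2$ condition.
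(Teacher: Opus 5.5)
Your main route has a gap at the step that fixes the window for $r$. The claim that the remainder ``scales like $Mr^2\|H(x_k)\|_s$ relative to the leading curvature term'' does not follow from Taylor's theorem. The third-order remainder is $\frac16\sum_{i,j,l}\partial_{ijl}\LC(\vec{\nu})\,\alpha_i\alpha_j\alpha_l$. It involves $M^3$ mixed derivatives of size $O(\|H\|_s)$, evaluated at an $\alv$-dependent point $\vec{\nu}$, so the available bound is of order $\|H\|_s(Mr)^3$. Keeping this below the signal's standard deviation $\sim r^2|\Delta_{\rm gap}|$ forces $r\lesssim|\Delta_{\rm gap}|/(M^3\|H\|_s)$, far smaller than the stated window. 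And since $\vec{\nu}$ depends on $\alv$, you cannot use independence and symmetry to control its covariance with the quadratic part.

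Your fallback is the argument that works, and it is the paper's proof. The steps are:
the variance decomposition of Proposition~\ref{prop:var_decomp} gives $\Var[\LC]\geq\Var_{\alpha_1}\big[\Ebb_{\alv_{\overline{1}}}[\LC]\big]$ (your conditional version also works, via $\Ebb[a^2]\geq(\Ebb[a])^2$);
averaging each $\alpha_l$, $l\geq 2$, \emph{exactly} turns conjugation into $k_+(\cdot)+k_-P_l(\cdot)P_l$ with $k_-=\frac12(1-\sinc 2r)\leq r^2/3$, giving a convex mixture $H_{\rm eff}$;
the $\sin 2\alpha_1$ part is uncorrelated with the $\cos 2\alpha_1$ part and is dropped, leaving $h(r)\big(\langle H_{\rm eff}\rangle_\psi-\langle P_1H_{\rm eff}P_1\rangle_\psi\big)^2$ with $h(r)\geq(1-4r^2/7)\,4r^4/45$;
the bracket differs from $K_+\big(E(\phi)-E(\phi')\big)$ by at most $(1-K_+)\|H\|_s$, where $K_+=k_+^{M-1}\geq 1-(M-1)r^2/3$.
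This remainder-free averaging is the real source of the $Mr^2\|H\|_s$ correction and of the condition $r^2\lesssim|\Delta_{\rm gap}|/(M\|H\|_s)$. Make it the main line and drop the Taylor expansion. The Davis--Kahan step is also unnecessary: the paper writes $H(x_k)=H(x_{k-1})+\Delta x\,H_1$ and bounds the $H_1$ contribution to $E(\phi)-E(\phi')$ by $|\Delta x|\,\|H_1\|_s$.

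Second, ``the Meta-VQE case is the same'' skips the part that produces condition~1 as stated. In Meta-VQE the $\alpha_1$-dependence is a sum of sinusoids with frequencies $2g_1(x_j)$. The bound therefore becomes $\frac{1}{N^2}\sum_{j,k}h(x_j,x_k)\,a_ja_k$, with $a_j$ the per-term analogue of your coefficient $a$. This needs all $a_j$ to share one sign and the cross-covariances $h(x_j,x_k)$ to be positive; the paper imposes a second constraint on $r$ for the latter. More importantly, the newly added point $x_N$ has no fidelity guarantee, because the circuit $U(\thv^*,x_N)$ itself differs from $U(\thv^*,x_{N-1})$. The paper handles this by Lipschitz continuity in $x$, with derivative at most $M\max_l|\partial_x g_l|\,\|H(x_N)\|_s$. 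That is where the $M\|H(x_k)\|_s$ in condition~1 comes from; for plain VQE the step condition is only $|\Delta x|\lesssim|\Delta_{\rm gap}|/\|H_1\|_s$. Your reading of that term as absorbing first-derivative cross terms via stationarity is not needed, since in the exact decomposition first-order contributions only add variance.

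Finally, your caution about the curvature step is justified and sharper than the paper. Orthogonality plus fidelity alone give $E(\phi)-E(\phi')\leq(1-\epsilon^2)\Delta_{\rm gap}+\epsilon^2\|H\|_s$ (with $\Delta_{\rm gap}<0$). So in general one needs $\epsilon^2\lesssim|\Delta_{\rm gap}|/\|H\|_s$, not merely a small constant. The paper's cleaner $(1-2\epsilon^2)\Delta_{\rm gap}$ relies on an extra assumption: that $U(\thv^*)P_1\ket{\psi}$ lies in the span of $\ket{{\rm gs}}$ and the same $\ket{{\rm gs}^\perp}$ that appears in $U(\thv^*)\ket{\psi}$.
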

Here we used $\|\cdot\|_s = \lambda_{\max}^{(\cdot)} -  \lambda_{\min}^{(\cdot)}$, a semi-norm~\cite{boixo2007generalized} that captures the scale of a Hamiltonian, where $\lambda_{\max}^{(\cdot)}$ and $\lambda_{\min}^{(\cdot)}$ denote, respectively, the largest and smallest eigenvalues of the corresponding Hamiltonian. Furthermore, we defined $|\Delta_{\rm gap}| = |\lambda_{\min}^{(\cdot)} - \lambda_{1}^{(\cdot)}|$, the absolute difference between the ground state and the first eigenvalue. 
The formal version of Theorem~\ref{thm:main} is presented and proven in Appendix~\ref{app:noencoding} for the VQE and Appendix~\ref{app:lowerbound_metavqe} for the Meta-VQE algorithm. For clarity, we present in the main text a single unified theorem that covers both cases by adopting the more restrictive set of assumptions needed across the two analyses.

Theorem~\ref{thm:main} provides a step-wise guarantee that, around each warm-start point, the loss landscape does not become exponentially flat. Concretely, it establishes that inside a hypercube of side length $2r$, with $r^2\propto |\Delta_{\rm gap}|/(M\|H\|_s)$, the variance decays at worst polynomially in $r$. Consequently, as long as the gap $|\Delta_{\rm gap}|$ does not close along the path, and as long as $\|H\|_s$ and $M$ do not scale exponentially with the number of qubits, the variance decays polynomially, ensuring a local region with provable gradients along the iterative path.

The rate at which the Hamiltonian is iteratively updated, i.e., $|x_k-x_{k-1}|$,  also needs to be sufficiently gradual to guarantee a non-exponentially vanishing variance (and hence a non-flat loss landscape). That said, the first condition in Theorem~\ref{thm:main}, $|x_k-x_{k-1}|\in |\Delta_{\rm gap}|/(\|H_{1}\|_s + M \|H(x_{k-1})\|_s)$, is obtained from a loose, worst-case bound and may be improvable in practice. In contrast, the dependence on the gap is structural: as we exemplify in Section~\ref{sec:training}, the gap closing can render the training infeasible. Indeed, the closure of the gap in the ground states acts as a roadblock, similar to what has been found in other contexts~\cite{farhi2000quantum, albash2018adiabatic}. Thus, the scaling with the gap is unlikely to be improved.

The second condition in Theorem~\ref{thm:main} captures the intuitive fact that the training at each iteration needs to be sufficiently successful to ensure that, at the next iteration, we are close enough to a minimum to have gradients. If the training of the loss function $k-1$ fails (or does not converge well enough), the subsequent warm start may fall outside the guaranteed region, which can cause the iterative training scheme to be ineffective.

The final condition to guarantee loss variances in Theorem~\ref{thm:main} is that the first gate in our circuit acts non-trivially on the initial state (e.g. does not commute entirely with the target Hamiltonian). We impose $\expval{P_1}{\psi} = 0$ as a convenient sufficient condition to simplify the proof, but this requirement might not be strictly necessary. Its role is simply to ensure that the ansatz produces a nontrivial local variation of the prepared state (and hence of the loss) around the initialization.

We devote the rest of this section to studying the tightness of the bound with respect to the number of parameters $M$.
Fig.~\ref{fig:variance_scaling} analyzes the local geometry of the loss landscape $\LC_{\rm VQE}(\thv)$ for a single iterative step $k$ of the training procedure. Concretely, we estimate the variance of the loss under random perturbations of the optimized parameters $\thv^*(x_k)$, which serve as a warm start for the next iteration of the algorithm. This characterizes how loss fluctuations concentrate as a function of the neighborhood size $r$ and the parameter count $M$ (linear in the system size $n$), providing guidance on how far one can move in $x$ while maintaining usable gradient information when $n$ grows.

In our simulations, we consider a one-dimensional Heisenberg Hamiltonian with a transversal magnetic field of the form:
\begin{multline}
    H({x})
    = - \sum_{i=1}^{n} Z_i + x \sum_{\langle i,j\rangle}
\bigl(
X_i X_j + Y_i Y_j + Z_i Z_j
\bigr),
    \label{eq:Ising_Ham}
\end{multline}
where the first term corresponds to a gapped Hamiltonian with a product-state ground state, and the second term defines a Heisenberg interaction Hamiltonian with $\langle i,j\rangle$ denoting nearest neighbor interactions with periodic boundary conditions. The interpolation parameter $x$ therefore connects an easy-to-prepare initial regime to a strongly interacting regime as $x$ is increased. 

Following the ansatz definition in Eq.~\eqref{eq:circuit_vqe}, we use a circuit of the form $U(\boldsymbol{\theta}) = \prod_{l=1}^{L}U_l(\boldsymbol{\theta}_l)$, where each layer $U_l(\boldsymbol{\theta}_l)$ consists of single-qubit $R_z$ rotations on all qubits, followed by two-qubit rotations $R_{xx}$, $R_{yy}$, and $R_{zz}$ acting on nearest-neighbor qubits with periodic boundary conditions, with each gate parameterized by a different angle. The total number of trainable parameters therefore scales as $M = 4 n L$, and in the numerics we take $L = n$.

\begin{figure*}
    \centering    
    \includegraphics[width=1\linewidth]{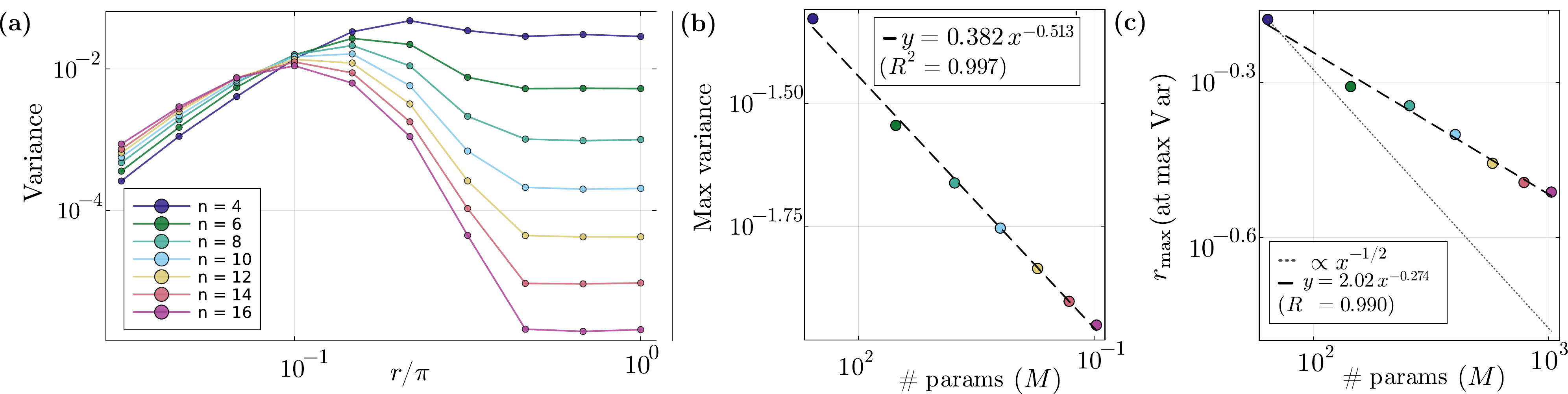}
    \caption{\textbf{Variance of the loss function in a hypercube of side $2r$ in parameter space}
(i.e., side $2r/\pi$ in the rescaled $r/\pi$ units used on the horizontal axis).
For each system size $n$ and radius $r$, we sample random directions in the
variational parameter space around the optimized point $\theta^\star$ obtained
from warm-start training, and estimate the variance of the loss at $x_2 = 0.2$, with the Heisenberg model described in Eq.~\eqref{eq:Ising_Ham}.
\textbf{(a)} Variance as a function of $r/\pi$ for different qubit numbers $n$
(log–log scale). \textbf{(b)} Maximum variance  as a function of the total number of variational parameters $M$. \textbf{(c)} Radius
$r_{\max}$ at which the variance is maximal, plotted versus $M$. In panel \textbf{(c)} we also include a reference line proportional to $M^{-1/2}$, corresponding
to regime $r \lesssim 1/\sqrt{M}$ for which Theorem~\ref{thm:main} guarantees at most polynomially small variance. In contrast, the
fitted behavior $r_{\max} \sim M^{-0.274}$ shows a more favorable empirical
scaling. All runs use $L = n$ layers, synthetic shot noise
with $\texttt{nshots}=10^4$ for the underlying optimization and $10^4$ samples to estimate the variance.
}
    \label{fig:variance_scaling}
\end{figure*}

In Fig.~\ref{fig:variance_scaling} $(a)$ we plot the variance of the loss $\LC_{\rm VQE}$ for $x_2 = 0.2$, corresponding to the second iteration step. We compute the variance for different values of $r$, the half-width of the hypercube centered at the solution $\thv_1^*$, from which we sample the new initial parameters in random directions. As expected, for $r =\pi$ the variance vanishes exponentially with the number of qubits. This is because in this regime, the initialization is uniform over the full landscape, so the concentration of the loss function phenomena (BP) manifests. However, as the size of the hypercube shrinks, the variance increases and its decay transitions to a polynomial scaling with the number of qubits.

We recall that these are sufficient conditions: they guarantee substantial loss variation and gradients when they are satisfied, but they do not imply that training must fail when they are violated. In particular, the bound ensures that under these conditions, the initialization strategy is guaranteed to be in a region with non-negligible gradients. A natural follow-up question is whether one can still train from larger neighborhoods than those covered by the theorem.

We address this question numerically in Fig.~\ref{fig:variance_scaling} (b) and (c). There, we show the relationship between the maximum value of the variance and the value of $r$ that maximizes it, with respect to the number of variational parameters $M$ (i.e., the dimension of $\thv$). We find that $r_{\max}$ scales roughly as $M^{-1/3}$, very close to the $M^{-1/2}$ bound predicted by Theorem~\ref{thm:main}. In fact, the observed scaling is slightly more favorable than the worst-case bound provided by Theorem~\ref{thm:main}, but almost saturates the bound.

\subsection{Training via iterative methods}\label{sec:training}

So far we have shown that iterative methods can work as a warm starting strategy, ensuring that the variational optimization can be initialized in a locally trainable region. However, this is not sufficient to guarantee that training will converge to a sufficiently good minimum. In this section we train different variational algorithms to assess practical advantages and limitations of iterative training methods. 

We start by presenting the models that we will study using the VQE and Meta-VQE algorithms. In particular, we consider one-dimensional spin chains governed by two different Hamiltonians: the anisotropic XY model and the transverse-field Ising model. Both are parameterized by a control parameter ${x}$ that interpolates between distinct interaction regimes. 
The anisotropic XY model is defined as
\begin{equation}
    H_{\mathrm{XY}}(x)
    = -J \big[H_{\sx} + H_{\sy}+x(H_{\sx} - H_{\sy})\big],
    \label{eq:XY_Ham}
\end{equation}
where $H_{\sigma_{j}} = \sum_{i=1}^{n-1}\sigma_j^{(i)} \sigma_j^{(i+1)},\,\sigma_j\in\{\sx,\sy\} $ and
where $x$ controls the anisotropy between $\sx\sx$ and $\sy\sy$ couplings.  
For $x=0$ the model reduces to the isotropic XX chain, while for $x=1$ it reproduces the Ising limit.
The generalized Ising Hamiltonian used in our simulations is 
 \begin{align}
     H_{\mathrm{I}}(x)
     = &-J \sum_{i=1}^{n} Z_i Z_{i+1}
       - x \sum_{i=1}^{n} X_i\\
       &- \, Y_1 X_2 \cdots X_{n-1} Y_n,
\end{align}
 where $J$ sets the strength of the nearest-neighbor coupling and $x$ acts as a tunable transverse field.  The last term $X_i
       - \, Y_1 X_2 \cdots X_{n-1} Y_n$ cancels the periodic boundary term $Z_1 Z_n$ after the Jordan-Wigner
transformation in order to solve the system as it was infinite~\cite{cervera2018exact}.

\begin{figure*}[ht!]
    \centering    \includegraphics[width=1\linewidth]{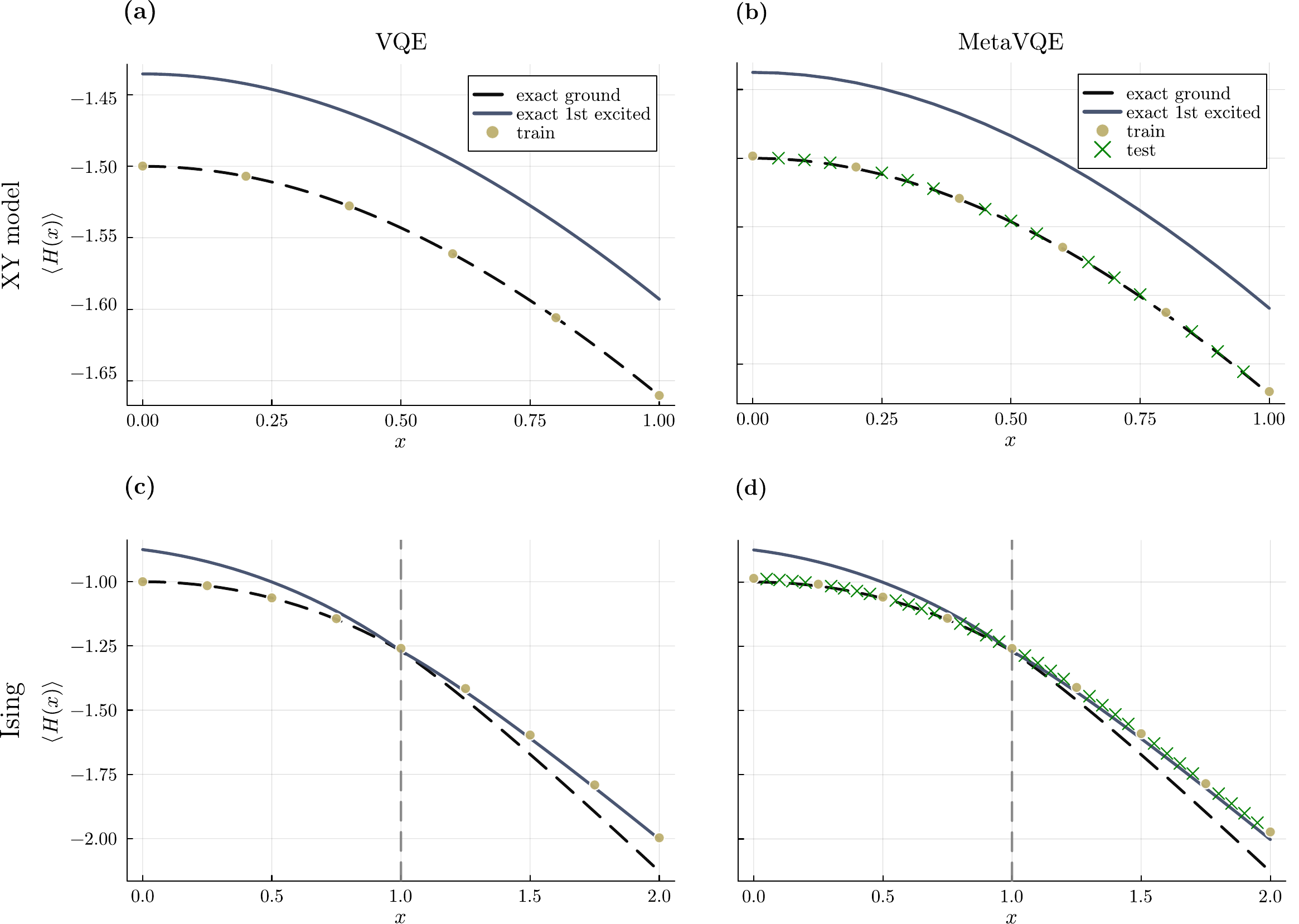}
   \caption{\textbf{Energy spectra learned with warm-start VQE and warm-start Meta-VQE.}
Rows correspond to two target models (top: XY; bottom: transverse-field Ising), while columns compare VQE (left) and Meta-VQE (right). In all panels we use \emph{warm starts}, i.e., for each value of the field parameter $x$ the optimization is initialized with the parameters obtained at the previous point.
Solid lines show the exact ground-state and first-excited energies as functions of $x$, and markers show the corresponding estimates obtained under synthetic shot noise with $\texttt{nshots}=10^4$ (circles: training points; crosses: test points).
In the Ising row, the vertical dashed line indicates the critical point at $x=1$.
All data are for $n=16$ spins and circuit depth $L=16$.}
    \label{fig:training}
\end{figure*}

In order to assess whether training remains feasible in a realistic setting, it is important to account for finite-shot measurement noise. To this end, we model measurement uncertainty through a synthetic shot-noise model at the level of measurement groups. For a fixed total shot budget $n_{\mathrm{shots}}$, we partition the Hamiltonian into $G$ disjoint groups of Pauli terms, $H(x)=\sum_{g=1}^G H_g(x)$, where each group corresponds to a single measurement setting and is assigned a group-dependent shot budget $S_g$ with $\sum_{g=1}^G S_g=n_{\mathrm{shots}}$. For a given parameter vector $\boldsymbol{\theta}$, we then define the (synthetic) finite-shot energy estimator as
\begin{equation}
    \widehat{E}(\boldsymbol{\theta})
    :=
    \sum_{g=1}^G
    \left(
        \LC_g(\thv)
        +
        \sqrt {V_g(\boldsymbol{\theta})}\,\xi_g
    \right),
    \quad
    \xi_g\sim\mathcal{N}(0,1),
    \label{eq:synthetic_shot_energy_groups}
\end{equation}
where $ \LC_g(\thv)$ is the loss function associated to the exact contribution of group $g$, and $V_g(\boldsymbol{\theta})$ is chosen to match the leading-order finite-shot variance of estimating that group with $S_g$ shots (see Appendix~\ref{ap:shot_noise}). Concretely, under the standard approximation that intra-group covariances are negligible,
\begin{equation}
    V_g(\boldsymbol{\theta})
    \approx
    \sum_{\alpha\in\mathcal{I}_g} c_\alpha^2\,
    \frac{1-\langle P_\alpha\rangle_{\boldsymbol{\theta}}^{\,2}}{S_\alpha},
    \qquad
    \sum_{\alpha\in\mathcal{I}_g} S_\alpha = S_g,
    \label{eq:sigma_group_maintext}
\end{equation}
with $S_\alpha$ the shots allocated to each Pauli term within the group. We draw independent noise $\{\xi_g\}_{g=1}^G$ for each circuit evaluation, reflecting independent measurement batches across settings. This construction reproduces the expected statistical variance of finite-shot experiments.

In Fig.~\ref{fig:training} $(a)$ and $(b)$ we show an iterative training method for both VQE and Meta-VQE applied to the $H_{\rm XY}$ Hamiltonian in Eq.~\eqref{eq:XY_Ham}. In this plot we can clearly see that the iterative method works. Indeed for both strategies the training can follow the ground state. Not only that, but we also see that for Meta-VQE the test points also match the exact energies. 

The situation changes for the Ising Hamiltonian $H_I$ in Eq.~\eqref{eq:Ising_Ham} (Fig.~\ref{fig:training} panels (c) and (d)). Near the level crossing at $x=1$, the optimization fails to remain on the true ground state. Instead, it continues along the smooth continuation of the previously learned state, in agreement with the behavior predicted by Theorem~\ref{thm:main}. This highlights how level crossings and gap closings can obstruct iterative training and, in some cases, prevent convergence to the global optimum.

This behavior can be understood from the locality of the optimization. Before the crossing, the tracked state coincides with the global minimum; after the crossing, the same continuous trajectory persists but corresponds only to a \emph{local} minimum. Since the optimizer explores the landscape locally, nearby parameter updates typically increase the energy, making it difficult to transition to a different branch that becomes energetically favorable after the crossing. Escaping this would require an initialization already aligned with the post-crossing ground state, which cannot be guaranteed without prior knowledge of the solution.

Consequently, the issue at hand is not due to barren plateau phenomena or shot noise, but rather a fundamental limitation of local, iterative optimization strategy in the presence of spectral crossings. In this sense, the phenomenon is closely related to the bottlenecks encountered in adiabatic (or adiabatic-like) ground-state preparation strategies when the gap closes.

The existence of gradient paths between the old and new global minima, previously dubbed fertile valleys~\cite{puig2024variational}, has attracted attention recently. While 
as ground state finding for many-body Hamiltonians is QMA-complete~\cite{kempe2004complexity}, fertile valleys provably cannot exist in the worst case, numerical evidence suggests that they may sometimes be possible to find for some simpler problems. How widespread or rare this phenomenon is, 
as well as whether there are particular strategies that can make these fertile valleys more likely, is thus one important open direction for future research.

\section{Discussion}

This work investigates iterative warm-start training for variational optimization, where each instance is initialized using parameters optimized for a neighboring instance. By viewing VQE and Meta-VQE through the lens of a discretized adiabatic deformation, we analyzed the conditions under which  the optimization remains within an informative region of the parameter space. Theorem~\ref{thm:main} establishes a lower bound on the loss variance within a hypercube centered at the previous optimum, provided that: (i) consecutive Hamiltonians are sufficiently close, (ii) the previous step achieved adequate accuracy, and (iii) the initial gate induces a nontrivial local state variation. We find that the permissible warm-start neighborhood shrinks as the number of variational parameters increases, but expands with a larger spectral gap. Within this neighborhood, the loss landscape retains sufficient local variation to provide useful gradients, thereby avoiding barren plateaus. These analytical findings are corroborated by numerical landscape probes for system sizes up to $n=16$. As the warm-start neighborhood contracts from ``global'' initializations (large $r$) toward a local regime, the variance transitions from an exponentially vanishing scaling to a polynomial one. 

Training simulations under a synthetic shot-noise model further demonstrate that warm-start VQE and Meta-VQE reliably track the ground-state energy along paths with a sufficient spectral gap. However, this behavior changes qualitatively when the gap closes: near a level crossing, the optimization tends to follow the smooth continuation of the pre-crossing solution, which becomes a \emph{local} minimum after the crossing. This failure mode is consistent with the gap sensitivity identified in Theorem~\ref{thm:main} and reflects the limitations of local iterative optimization in the presence of spectral crossings.

Despite these limitations, this iterative framework offers distinct advantages. Unlike standard adiabatic evolution, which often requires significant evolution time (or circuit depth, if the implementation is digitalized) to satisfy the adiabatic theorem, this variational approach can search for a ``shortcut to adiabaticity'' within a constrained, lower-depth ansatz. Furthermore, the ability to reliably produce approximate ground states makes this method highly valuable for fault-tolerant algorithms; specifically, these states can serve as high-overlap inputs, thereby reducing the total runtime required to achieve high-precision energy estimates. 

It is perhaps worth highlighting that recent results have proven that local regions of variational landscapes may be modeled by a surrogate~\cite{lerch2024efficient}. In this approach, one can take several measurements at the beginning of the protocol to simulate regions of the quantum landscape (and train using those regions), which is potentially applied to the method proposed here. While this typically requires a comparable overall amount of quantum hardware access, it can offer practical advantages in real deployments—most notably by reducing the need for tight quantum-classical feedback loops, making the optimization less sensitive to time-dependent noise drift and calibration changes, and enabling more efficient workflows when access is mediated through cloud queues and intermittent device availability.

Several open directions follow naturally from our work here. To mitigate the issue of level crossings, one might track not only the ground state but also a subspace of low-lying excited states, ensuring the target branch can be recovered after eigenvalue reordering. Variational Quantum Deflation~\cite{higgott2019variational} offers a concrete framework for this, though it remains an open question whether multi-state tracking provides provable guarantees in the iterative setting. 

Another strategy might be to exploit the control space: by starting from various solvable Hamiltonians, one could attempt to construct trajectories that circumvent spectral gap closures entirely. This enables the variational discovery of shortcuts to adiabaticity through the careful selection of the Hamiltonian path. Ultimately, this suggests a paradigm complementary to the heuristic parameter search of non-iterative versions of VQE: instead of simply guessing initialization points and ansatz, one shapes the optimization path itself. By enforcing symmetries or conserved quantities along the Hamiltonian trajectory, the optimizer can be confined to a protected subspace, shielding the process from symmetry-breaking level crossings.

Taken together, these results position iterative warm-start variational methods as a potential route to overcoming trainability barriers in variational quantum algorithms. Furthermore, by exposing their deep connection to adiabatic evolution, point toward a new paradigm in which the mature toolbox of adiabatic control and shortcut techniques can be systematically repurposed. Indeed, this opens the door to designing scalable and robust variational training strategies.

\newpage

\acknowledgments

We thank Marc Drudis, Paolo Braccia, Santiago Llorens, and Pauline Besserve for insightful discussions. 
R.P. acknowledges the support of the SNF Quantum Flagship Replacement Scheme (Grant No. 215933). B. C. acknowledges funding from the Spanish Ministry for Digital Transformation and the Civil Service of the Spanish Government through the QUANTUM ENIA project call - Quantum Spain, EU, through the Recovery, Transformation and Resilience Plan – NextGenerationEU, within the framework of Digital Spain 2026. A.C.-L. acknowledges funding by the European Union, supported by the EuroHPC Joint Undertaking and its members under the Grant Agreement Nº 101159808, including top-up funding by Ministry for Digital Transformation and the Civil Service of the Spanish Government, and from the grant RYC2022-037769-I funded by MICIU/ AEI/ 10.13039/ 501100011033 and by “ESF+". Z.H. acknowledges support from the Sandoz Family Foundation-Monique de Meuron program for Academic Promotion. A.P.S acknowledges funding from Swiss National Science Foundation through the grant TMPFP2\_234085, and Juan Carrasquilla for his support. 

\bibliography{quantum,bib}

\onecolumngrid

\vfill
\newpage

\appendix

\section{Preliminaries}\label{app:preliminaries}

\subsection{Taylor remainder theorem}\label{app:taylor}

We present the Taylor remainder theorem which expresses a multivariate differentiable function as a series expansion. We refer the reader to Ref.~\cite{taylorbook} for further details.    

\begin{theorem}[Taylor reminder theorem] \label{thm:taylor}
Consider a function $f(\vec{x})$ such that $f: \mathbb{R}^N \rightarrow \mathbb{R}$ which is $K$ times differentiable around a point $\vec a \in \mathbb R^N$, for $K$ a positive integer $K$. The function $f(\vec{x})$ can be expanded around $\vec{a}$ as
\begin{align}
    f(\vec{x}) = \sum_{k = 0}^{K} \sum_{i_1, i_2, ...,i_k}^N \frac{1}{k!} \left( \frac{\partial^k  f(\vec{x})}{\partial x_{i_1} \partial x_{i_2} ... \partial x_{i_k}} \right)\bigg|_{\vec{x} = \vec{a}} (x_{i_1} - a_{i_1})(x_{i_2} - a_{i_2}) ... (x_{i_k} - a_{i_k}) + R_{K,\vec{a}}(\vec{x}) \;,
\end{align}
where the remainder is of the form
\begin{align}
     R_{K,\vec{a}}(\vec{x}) = \sum_{i_1, i_2, ...,i_{K+1}}^N \frac{1}{(K+1)!} \left( \frac{\partial^{K+1}  f(\vec{x})}{\partial x_{i_1} \partial x_{i_2} ... \partial x_{i_{K+1}}} \right)\bigg|_{\vec{x} = \vec{\nu}} (x_{i_1} - a_{i_1})(x_{i_2} - a_{i_2}) ... (x_{i_{K+1}} - a_{i_{K+1}}) \;,
\end{align}
with $\vec{\nu} = c \vec{x} + (1-c) \vec{a}$ for some $c \in [0,1]$. 
\end{theorem}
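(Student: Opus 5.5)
The plan is to reduce the multivariate statement to the one-variable Taylor theorem with Lagrange remainder, by restricting $f$ to the segment joining $\vec a$ and $\vec x$. Fix $\vec x$ and define $g:[0,1]\to\mathbb R$ by $g(t) = f(\vec a + t(\vec x - \vec a))$. Since the remainder $R_{K,\vec a}$ involves $(K+1)$-th derivatives, I will read the hypothesis as requiring $f$ to be $K+1$ times continuously differentiable on an open set containing the segment $\{\vec a + t(\vec x-\vec a): t\in[0,1]\}$. With that reading, $g$ is $K+1$ times differentiable on $[0,1]$.

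The first key step is to compute the derivatives of $g$ by repeated use of the chain rule. The claim, proved by induction on $k$, is
\begin{align}
g^{(k)}(t) = \sum_{i_1,\dots,i_k=1}^N \frac{\partial^k f}{\partial x_{i_1}\cdots\partial x_{i_k}}\bigg|_{\vec a + t(\vec x-\vec a)} (x_{i_1}-a_{i_1})\cdots(x_{i_k}-a_{i_k}).
\end{align}
The base case $k=0$ is trivial. For the inductive step, differentiate each summand in $t$. The increments $(x_{i}-a_{i})$ are constants, and the chain rule gives $\tfrac{d}{dt}\,\partial^k f(\vec a+t(\vec x-\vec a)) = \sum_{i_{k+1}} \partial_{i_{k+1}}\partial^k f \cdot (x_{i_{k+1}}-a_{i_{k+1}})$. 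This is exactly the formula at order $k+1$. Continuity of the mixed partials is what allows the summed form over ordered index tuples to be used without worrying about the order of differentiation.

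The second key step is to apply the one-dimensional Taylor theorem with Lagrange remainder to $g$ on $[0,1]$:
\begin{align}
g(1) = \sum_{k=0}^{K}\frac{g^{(k)}(0)}{k!} + \frac{g^{(K+1)}(c)}{(K+1)!}, \qquad \text{for some } c\in[0,1].
\end{align}
For completeness, this can itself be obtained from Rolle's theorem applied to $\phi(s) = g(1) - \sum_{k\le K} g^{(k)}(s)(1-s)^k/k! - C(1-s)^{K+1}$. Here $C$ is chosen so that $\phi(0)=0$. Then $\phi(1)=0$, and the derivative $\phi'$ telescopes, so its vanishing point determines $C = g^{(K+1)}(c)/(K+1)!$.

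Finally, substitute the derivative formula. Using $g(1)=f(\vec x)$ and $g^{(k)}(0)$ equal to the $k$-th order sum at $\vec a$ recovers the stated expansion. The remainder term evaluated at $t=c$ gives $R_{K,\vec a}(\vec x)$, with $\vec\nu = \vec a + c(\vec x-\vec a)$. This is a convex combination of $\vec a$ and $\vec x$, so it matches the statement's $\vec\nu$ up to relabelling $c\leftrightarrow 1-c$.

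I expect the only genuine subtlety to be the regularity bookkeeping, namely ensuring the $(K+1)$-th derivatives exist along the whole segment. I would simply make this strengthened assumption explicit. It is satisfied trivially in the intended application, since the loss is a trigonometric polynomial in $\thv$ and hence smooth.
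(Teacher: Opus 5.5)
The paper gives no proof of this statement; it quotes it as a standard tool with a pointer to a textbook, so there is no in-paper argument to compare against.

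Your proof is the standard one and it is correct. You restrict to the segment via $g(t)=f(\vec a+t(\vec x-\vec a))$, obtain $g^{(k)}$ by induction with the chain rule, and apply the one-variable Lagrange remainder derived from Rolle's theorem.

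You are also right that the hypothesis as written, ``$K$ times differentiable around $\vec a$'', is too weak. The remainder involves $(K+1)$-th derivatives evaluated at a point $\vec\nu$ that can lie anywhere on the segment. Assuming $f$ is $C^{K+1}$ on an open set containing the segment is the right repair. Note that in the paper the theorem is only applied to smooth one-variable functions of $r$, such as $k_+(r)$ and $h(r)$, not to the loss itself.

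Two small inaccuracies, neither affecting validity:
\begin{enumerate}
\item $\vec a+c(\vec x-\vec a)$ is already $c\vec x+(1-c)\vec a$, so no relabelling $c\leftrightarrow 1-c$ is needed.
\item Continuity of the partials is what makes the $k$-th partials differentiable, so that the chain rule applies along the segment. It is not what removes the dependence on the order of differentiation. A sum over all ordered index tuples is unchanged under any fixed reordering of the differentiation indices: reindex the tuples and use that the product of increments is symmetric.
\end{enumerate}
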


\subsection{Variance decomposition of a multivariable function}\label{app:variance_decomp}

Here we show how the variance of a multivariate function can be decomposed into a sum of expected values and variances for different variables.

\begin{proposition}\label{prop:var_decomp}
Consider a multi-variable function $f(\thv)=f(\th_1,\dots,\th_M)$ depending on $M$ parameters such that $f(\thv): \mathbb{R}^M \rightarrow \mathbb{R}$. We assume that each parameter is i.i.d  sampled from some distribution $\DC$ i.e., $\thv \sim \DC^{\otimes M}$. Then we can express the variance of $f(\thv)$ as
\begin{align}
    \Var_{\vec{\th} \sim \DC^{\otimes M}}\left[f(\vec{\th})\right] = \sum_{k=1}^M \Ebb_{\bm\th_{\overline{k + 1}}}[\Var_{\th_k}[\Ebb_{\bm\th_{\underline{k-1}}}[f(\vec{\th})]]]\,,
\end{align}
and therefore we can lower bound the variance as
\begin{align}
     \Var_{\vec{\th} \sim \DC^{\otimes M}}\left[f(\vec{\th})\right]\geq \Var_{\th_M}[\Ebb_{{\thv_{\overline{M}}}}[f(\thv)]]\,,
\end{align}
where $\thv_{\underline{k}} = (\th_1,..,\th_{k})$ and $\thv_{\overline{k}} = (\th_{k + 1}, \ldots, \th_M)$.
\end{proposition}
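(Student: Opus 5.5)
This is the law of total variance applied recursively, one coordinate at a time. Independence of the coordinates ($\thv\sim\DC^{\otimes M}$) is what makes the conditional expectations well defined as functions of the remaining variables alone. The plan is to build a telescoping family of partially averaged functions and decompose the variance one variable at a time, starting from the outermost variable $\th_M$ and working inward.

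Define $g_k(\thv_{\overline{k}}) := \Ebb_{\thv_{\underline{k}}}[f(\thv)]$ for $k=0,\dots,M$. Thus $g_0=f$, $g_M=\Ebb[f]$ is a constant, and $g_{k}=\Ebb_{\th_{k}}[g_{k-1}]$. By independence, $g_{k-1}$ depends only on $(\th_k,\dots,\th_M)$.

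Then write the telescoping identity
\[
f-\Ebb[f]=\sum_{k=1}^M \big(g_{k-1}-g_k\big).
\]
Call $D_k := g_{k-1}-g_k$. These increments are orthogonal in $L^2$. For $j<k$, $D_k$ is a function of $\th_k,\dots,\th_M$ only, and $\Ebb_{\th_j}[D_j]=0$ with $\th_j$ averaged first. Conditioning on $\thv_{\overline{j}}$ gives
\[
\Ebb[D_jD_k]=\Ebb\big[D_k\,\Ebb_{\th_j}[D_j]\big]=0.
\]
Hence $\Var[f]=\sum_k \Ebb[D_k^2]$.

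Next identify each term. We have $\Ebb[D_k^2]=\Ebb_{\thv_{\overline{k}}}\big[\Ebb_{\th_k}[(g_{k-1}-\Ebb_{\th_k}g_{k-1})^2]\big]$, which equals $\Ebb_{\thv_{\overline{k}}}\Var_{\th_k}[\Ebb_{\thv_{\underline{k-1}}} f]$. This is exactly the claimed decomposition; the outer expectation is over the variables with index above $k$, matching the notation $\thv_{\overline{k+1}}$ up to the paper's indexing convention. Since every summand is a nonnegative expected variance, dropping all terms except $k=M$ gives the lower bound. For $k=M$ there are no outer variables, and the remaining term is $\Var_{\th_M}[\Ebb_{\thv_{\underline{M-1}}}f]$, i.e. the variance over $\th_M$ of $f$ averaged over all other parameters.

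There is no real obstacle here. The only care points are these:
\begin{itemize}
\item verifying that independence justifies Fubini and the orthogonality step; this needs $f$ square-integrable, which holds since $f$ is bounded in our setting;
\item reconciling the index notation for the outer average.
\end{itemize}

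An alternative is to induct on $M$ using the two-variable law of total variance $\Var[f]=\Ebb_{\th_M}\Var_{\thv_{\underline{M-1}}}[f]+\Var_{\th_M}\Ebb_{\thv_{\underline{M-1}}}[f]$, and then expand the first term recursively. This gives the same result.
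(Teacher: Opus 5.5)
Your proof is correct and is essentially the paper's argument. The paper uses the same partial averages $f_{\overline{k}}=\Ebb_{\thv_{\underline{k}}}[f]$ (your $g_k$) and applies the law of total variance $\Var[f_{\overline{k}}]=\Ebb_{\thv_{\overline{k+1}}}[\Var_{\th_{k+1}}[f_{\overline{k}}]]+\Var[f_{\overline{k+1}}]$ recursively; this is your orthogonality of the increments $D_k$ taken one step at a time, and your stated alternative is literally the paper's route. Your index reading (outer average over $\thv_{\overline{k}}$, and the bound as the $k=M$ term $\Var_{\th_M}[\Ebb_{\thv_{\underline{M-1}}}[f]]$) correctly repairs the typos in the statement, and the $\th_1$ version the paper later uses follows by permuting the i.i.d.\ coordinates.
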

\begin{proof}

     The proof can be obtained by recursion over the number of independent parameters $M$. Precisely, let us introduce the function $f_k$ defined recursively as follows
    \begin{align}
        f_{\overline{0}}(\vec\theta)&= f(\vec{\th})\\
        f_{\overline{k}}(\thv) &= \Ebb_{\theta_{k}}[f_{\overline{k-1}}(\thv)] = \Ebb_{\thv_{\underline{k}}}[f(\thv)]
    \end{align}
    Here, one can see that the function $f_{\overline{k}}$ only depends on the parameters $\vec{\th}_{\overline{k}}$. 
    Moreover, we can show that $\forall \; 0\leq  k \leq m$ 
    \begin{align}
        \Var_{\thv_{\overline{k}}}[f_{\overline{k}}(\thv_{\overline{k}})] &= \Ebb_{\thv_{\overline{k}}}[f_{\overline{k}}^2(\thv_{\overline{k}})] - \Ebb_{\thv_{\overline{k}}}[f_{\overline{k}}(\thv_{\overline{k}})]^2 \label{eq:recursive_f_start}\\
        &=  \Ebb_{\thv_{\overline{k+1}}}[\Ebb_{\thv_{k + 1}}[f_{\overline{k}}^2(\thv_{\overline{k}})]] - \Ebb_{\thv_{\overline{k+1}}}[\Ebb_{\thv_{k + 1}}[f_{\overline{k}}(\thv_{\overline{k}})]]^2\\
        & \begin{aligned}= \Ebb_{\thv_{\overline{k+1}}}&[\Ebb_{\thv_{k + 1}}[f_{\overline{k}}^2(\thv_{\overline{k}})]]- \Ebb_{\thv_{\overline{k+1}}}[\Ebb_{\thv_{k + 1}}[f_{\overline{k}}(\thv_{\overline{k}})]^2] \\ & + \Ebb_{\thv_{\overline{k+1}}}[\Ebb_{\thv_{k + 1}}[f_{\overline{k}}(\thv_{\overline{k}})]^2] - \Ebb_{\thv_{\overline{k+1}}}[\Ebb_{\thv_{k + 1}}[f_{\overline{k}}(\thv_{\overline{k}})]]^2\end{aligned}\\
        &= \Ebb_{\thv_{\overline{k+1}}}[\Var_{\thv_{k + 1}}[f_{\overline{k}}(\thv_{\overline{k}})]] + \Var_{\thv_{\overline{k+1}}}[\Ebb_{\thv_{k + 1}}[f_{\overline{k}}(\thv_{\overline{k}})]]\\
        &= \Ebb_{\thv_{\overline{k+1}}}[\Var_{\thv_{k + 1}}[f_{\overline{k}}(\thv_{\overline{k}})]] + \Var_{\thv_{\overline{k+1}}}[f_{\overline{k+1}}(\thv_{\overline{k+1}})]
    \end{align}

    Hence, we have that
    \begin{align}
    \Var[f(\vec{\th})] 
    = \Var_{\thv_{\overline{0}}}[f_{\overline 0}(\thv_{\overline{0}})] & =\Ebb_{\thv_{\overline{1}}}[\Var_{\thv_{1}}[f_{\overline 0}(\thv_{\overline{0}})]] + \Var_{\thv_{\overline{1}}}[f_{\overline 1}(\thv_{\overline{1}})]\\
    & =\Ebb_{\thv_{\overline{1}}}[\Var_{\thv_{1}}[f_{\overline 0}(\thv_{\overline{0}})]] + \Ebb_{\thv_{\overline{2}}}[\Var_{\thv_{2}}[f_{\overline 2}(\thv_{\overline{2}})]] + \Var_{\thv_{\overline{2}}}[f_{\overline{2}}(\thv_{\overline{2}})] \\ 
    & = \sum_{k=0}^{m-1} \Ebb_{\thv_{\overline{k + 1}}}[\Var_{\thv_{k + 1}}[f_{\overline{k}}(\thv_{\overline{k}})]]\\
    &= \sum_{k=1}^{m} \Ebb_{\thv_{\overline{k}}}[\Var_{\thv_{k - 1}}[\Ebb_{\thv_{\underline{k - 1}}}[f(\vec{\th})]]] \label{eq:recursive_f_end}
    \end{align}

    Furthermore, because the variance of a function is always positive, every term in the previous sum in Eq.~\eqref{eq:recursive_f_end} is larger or equal to zero. Thus, we can use that the sum of positive terms is lower bounded by one of the terms in the sum to find the following lowerbound
    \begin{align}
         \Var[f(\vec{\th})] \geq \Var_{\th_1}[\Ebb_{{\thv_{\overline{1}}}}[f(\thv)]]
    \end{align}
    which concludes the proof. Additionally, any permutation of the parameters allows to find the same relationship for any individual $0\leq k \leq M$.
\end{proof}

\section{Lower bound on the variance of the VQE loss function}\label{app:noencoding}
In this section, we analyze the variational problem in the absence of explicit data encoding within the quantum circuit. 
That is, the parameters $x_j$ appears only in the Hamiltonian, while the circuit parameters $\thv$ are optimized independently for each value of $x_j$. 
We show that if the parameters $\thv^*$ prepare the ground state of $H(x_{j-1})$ with precision $\epsilon$, and the step 
$\Delta x_j = x_j - x_{j-1}$ is sufficiently small, then the variance remains non-vanishing, ensuring that we could estimate the gradients that lead to a solution $\boldsymbol{\theta}$ for $H(x_j)$. 

This analysis constitutes the first step toward generalizing the result to the case where the loss function is a sum of $N$ terms evaluated with a parameterized quantum circuit that includes explicit encodings $x_j$.  The result is given in the following theorem:

\begin{theorem}
     Let the loss function associated with the $j$-th Hamiltonian be 
     \begin{equation}
         \LC_j(\thv)= \expval{U^\dagger (\thv) H(\paramh_j) U(\thv)}{\psi}.
     \end{equation}
     Assume that the parameters $\thv^*$ generate a state 
    $U(\boldsymbol{\theta}^*)|\psi\rangle$ whose fidelity with the ground state 
    $|{\rm gs}_{j-1}\rangle$ of $H(x_{j-1})$ satisfies
    \begin{equation}
        |\langle {\rm gs}_{j-1} | U(\boldsymbol{\theta}^*)|\psi\rangle|^2 = 1 - \epsilon^2,
    \end{equation}
    i.e., the prepared state is $\epsilon$-close to the true ground state 
    in the Hilbert-space $2$-norm,  in particular we consider $\epsilon\leq 1/\sqrt 2 $. Assume moreover that the first circuit generator $P_1$ acts non-trivially on the input state, in the sense that $|\expval{P_1}{\psi}|^2 = 0$.
    
    Suppose further that the Hamiltonians at consecutive steps is given by
    \[
        H(x_j) = H(x_{j-1}) + \Delta x_j\, H_1,
    \]
    where $H_1$ is a fixed Hermitian operator, and that the step size obeys
    \begin{equation}
        |\Delta x_j|
        \leq
        \widetilde{\gamma}\,
        \frac{(1-2\epsilon^2)\,|\Delta_{\mathrm{gap}}|}{\|H_1\|_s},
        \label{eq:delta_x_condition}
    \end{equation}
    with $\widetilde{\gamma}\in(0,1)$, and $|\Delta_{\mathrm{gap}}|$ being the absolute value of the gap of $H(x_j)$.
    Then, for a uniform distribution 
    $\mathcal{D}(\boldsymbol{\theta}^*, r)$ centered at $\boldsymbol{\theta}^*$ 
    within a hypercube of side $2r$, the variance of 
    $\mathcal{L}_j(\boldsymbol{\theta})$ satisfies
    \begin{equation}
        \Var_{\thv\sim\DC(\thv^*, r)} [\LC_j(\thv)] \geq \left(  1-\frac{4r^2}{7} \right) \frac{4 r^4}{45} \left( (1-\gamma)(1-\widetilde{\gamma})(1-2\epsilon^2)|\Delta_{\rm gap}| \right)^2,
    \end{equation}
    given that 
    \begin{equation}
        r^2<\gamma\frac{3}{(M-1)}\frac{(1-2\epsilon^2)(1-\widetilde{\gamma})|\Delta_{\rm gap}|}{\|H(\paramh_j)\|_s + (1-2\epsilon^2)(1-\widetilde{\gamma})|\Delta_{\rm gap}|},
    \end{equation}
    Here $\gamma,\widetilde{\gamma}\in(0,1)$ are constants, $\epsilon$ quantifies the infidelity with the previous ground state, $|\Delta_{\mathrm{gap}}|$ denotes the spectral gap of $H(x_j)$,
    $\|\cdot\|_s$ is the spectral semi-norm, and $M$ is the number of trainable parameters in the circuit.

    Consequently, if $r \in \Theta(\mathrm{poly}(n))$, then $\Var_{\boldsymbol{\alpha}\sim\mathcal{D}(\vec{0},r)}[\mathcal{L}_j(\boldsymbol{\alpha})]
    \in \Omega(\mathrm{poly}(n))$.
\end{theorem}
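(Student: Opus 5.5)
The plan is to apply Proposition~\ref{prop:var_decomp} with the first parameter singled out. Writing $\thv=\thv^*+\alv$ with $\alv\sim\DC(\vec 0,r)$, this gives
\begin{equation}
\Var_{\alv}[\LC_j]\;\geq\;\Var_{\al_1}\big[\,\Ebb_{\al_2,\dots,\al_M}[\LC_j(\thv^*+\alv)]\,\big].
\end{equation}
This reduces the problem to a single-variable function $g(\al_1):=\Ebb_{\al_{2},\dots,\al_M}[\LC_j]$. Since $P_1^2=\1$, the $\al_1$-dependence of $\LC_j$ at fixed other angles is a trigonometric polynomial,
\begin{equation}
A+B\cos 2\al_1+C\sin 2\al_1 .
\end{equation}
Here $A,B,C$ depend on the remaining angles, and so does $g$ after averaging. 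The distribution of $\al_1$ is symmetric, and the covariance of an even and an odd function under a symmetric law vanishes. Hence $\Var[g]\geq \bar B^2\,\Var_{\al_1}[\cos 2\al_1]$, where $\bar B=\Ebb[B]$, and the $\sin$ term can be discarded. Expanding $\cos 2\al_1=1-2\al_1^2+\dots$ and using
\begin{equation}
\Ebb[\al^4]-\Ebb[\al^2]^2=\frac{r^4}{5}-\frac{r^4}{9}=\frac{4r^4}{45},
\end{equation}
yields the $\frac{4r^4}{45}$ factor. The higher-order terms of the cosine, bounded via the Taylor remainder (Theorem~\ref{thm:taylor}), produce the correction $(1-\frac{4r^2}{7})$.

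The core step is a lower bound on $|\bar B|$, first at $\al_2=\dots=\al_M=0$. I take $e^{-i\th_1P_1}$ to be the gate acting directly on $\ket{\psi}$. Let $\ket{\phi^*}=U(\thv^*)\ket{\psi}$ and $\ket{\phi^\perp}=U(\thv^*)P_1\ket{\psi}$; the condition $\expval{P_1}{\psi}=0$ makes these orthogonal. Then $U(\thv^*+\al_1 e_1)\ket\psi=\cos\al_1\ket{\phi^*}-i\sin\al_1\ket{\phi^\perp}$, so the even part has coefficient $B=\tfrac12\big(\langle H(x_j)\rangle_{\phi^*}-\langle H(x_j)\rangle_{\phi^\perp}\big)$. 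I then need $\langle H\rangle_{\phi^\perp}-\langle H\rangle_{\phi^*}\geq (1-2\epsilon^2)|\Delta_{\rm gap}|$ up to perturbative corrections. The idea: $\ket{\phi^*}$ has weight $1-\epsilon^2$ on $\ket{{\rm gs}_{j-1}}$, while $\ket{\phi^\perp}$, being orthogonal, has weight at most $\epsilon^2$ there. Comparing energies with respect to $H(x_{j-1})$ then gives a gap-weighted difference of at least $(1-2\epsilon^2)$ times the gap. Passing from $H(x_{j-1})$ to $H(x_j)=H(x_{j-1})+\Delta x_j H_1$ costs at most $|\Delta x_j|\,\|H_1\|_s$, and the step-size condition \eqref{eq:delta_x_condition} turns this into the factor $(1-\widetilde\gamma)$. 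Relating the gap of $H(x_{j-1})$ to that of $H(x_j)$ uses the same Weyl-type perturbation bound.

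The remaining task is to control how averaging over $\al_2,\dots,\al_M$ shifts $B$. I would Taylor-expand $B(\al_2,\dots,\al_M)$ to second order around $0$. The first-order terms vanish in expectation by symmetry. Each second derivative is bounded by $\|H(x_j)\|_s$-type quantities, since the generators satisfy $P_k^2=\1$, and $\Ebb[\al_k^2]=r^2/3$. This gives $|\bar B-B(0)|\lesssim (M-1)\tfrac{r^2}{3}\big(\|H(x_j)\|_s+\text{(gap term)}\big)$. The hypothesis on $r^2$ is exactly what forces this to be at most $\gamma$ times the main term, giving the factor $(1-\gamma)$.

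I expect the main obstacle to be the gap argument for $\ket{\phi^\perp}$. It is the one place where the $\epsilon$-dependence and the sign of $B$ must be handled carefully, and it needs a clean decomposition of both states in the eigenbasis of $H(x_{j-1})$. A secondary obstacle is making the Taylor bound on the multi-parameter correction tight enough to produce the stated constant $3/(M-1)$.
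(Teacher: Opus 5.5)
Your skeleton matches the paper's. You isolate $\al_1$ via Proposition~\ref{prop:var_decomp}, drop the odd $\sin 2\al_1$ part, and lower-bound the even coefficient $\bar B$. But the step you flagged as the main obstacle is not just delicate: it is false, and no argument can close it under the stated hypotheses.

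The weights do give $\langle H\rangle_{\phi^\perp}\ge\lambda_{\min}+(1-\epsilon^2)|\Delta_{\rm gap}|$. On the other side they give only $\langle H\rangle_{\phi^*}\le\lambda_{\min}+\epsilon^2\|H\|_s$, because the $\epsilon$-component of $\ket{\phi^*}$ can sit near $\lambda_{\max}$. So all you get is $\langle H\rangle_{\phi^\perp}-\langle H\rangle_{\phi^*}\ge(1-\epsilon^2)|\Delta_{\rm gap}|-\epsilon^2\|H\|_s$, and the difference can actually vanish.

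Here is a counterexample to the statement itself. Take $\Delta x_j=0$ and $H=\mathrm{diag}(0,\,0.1,\,0.25,\,1)$ in the two-qubit computational basis $\ket{e_0},\dots,\ket{e_3}$. Let $\ket{\psi}=\ket{e_0}$, $P_1=X\otimes\1$, so $P_1\ket{\psi}=\ket{e_2}$. Let $U(\thv)=\bigl(\prod_{k\ge2}e^{-i\th_k Z\otimes\1}\bigr)\,W\,e^{-i\th_1 P_1}$ with $\thv^*=\vec{0}$, where $W$ is a fixed unitary sending $\ket{e_0}\mapsto\tfrac{\sqrt3}{2}\ket{e_0}+\tfrac12\ket{e_3}$ and $\ket{e_2}\mapsto\ket{e_2}$. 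Every hypothesis holds, with $\epsilon^2=1/4$ and $|\Delta_{\rm gap}|=0.1$. Yet $\langle H\rangle_{\phi^*}=\langle H\rangle_{\phi^\perp}=1/4$ and $\langle\phi^*|H|\phi^\perp\rangle=0$. The $Z$-rotations commute with $H$, so $\LC_j\equiv 1/4$, which even makes $\thv^*$ a global minimizer of $\LC_{j-1}$ in this ansatz. The variance is therefore exactly $0$, while the claimed bound is strictly positive for every admissible $r$ when $M\ge 2$.

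The paper reaches $(1-2\epsilon^2)\Delta_{\rm gap}$ only by asserting that $\langle P_1\rangle_\psi=0$ lets one write $U(\thv^*)P_1\ket{\psi}=\epsilon\ket{{\rm gs}}-\sqrt{1-\epsilon^2}\ket{{\rm gs}^\perp}$. That is, it assumes $\ket{\phi^\perp}$ lies in the span of the two components of $\ket{\phi^*}$. This does not follow, and the example above violates it.

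A hypothesis that makes your weight argument go through is an energy-accuracy condition, $\langle H(x_{j-1})\rangle_{\phi^*}-\lambda_{\min}\le\epsilon^2|\Delta_{\rm gap}|$. It forces ground-state weight at least $1-\epsilon^2$, and then $\langle H\rangle_{\phi^\perp}-\langle H\rangle_{\phi^*}\ge(1-\epsilon^2)|\Delta_{\rm gap}|-\epsilon^2|\Delta_{\rm gap}|=(1-2\epsilon^2)|\Delta_{\rm gap}|$. Alternatively, keep the fidelity hypothesis but require $\epsilon^2$ small compared with $|\Delta_{\rm gap}|/\|H\|_s$.

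Two smaller points.

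\emph{Linear-in-$M$ control of the averaging.} A joint second-order Taylor expansion of $B(\al_2,\dots,\al_M)$ does not by itself give $|\bar B-B(\vec{0})|\lesssim(M-1)\tfrac{r^2}{3}\|H\|_s$. In the Lagrange remainder the Hessian is evaluated at an $\alv$-dependent point, so the $(M-1)(M-2)$ mixed terms no longer average to zero. That yields an $(M-1)^2r^2$ estimate and would spoil the $1/(M-1)$ in the $r$-condition. The paper gets linear dependence from the product structure instead. Each $\al_k$ is averaged exactly, giving the convex combination $k_+(\cdot)+k_-P_k(\cdot)P_k$ with $k_+=\tfrac12(1+\sinc 2r)\ge 1-r^2/3$. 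The unperturbed term then keeps weight $K_+=k_+^{M-1}\ge 1-(M-1)r^2/3$, and the remainder contributes at most $(1-K_+)\|H\|_s$. You can reproduce this by averaging one coordinate at a time.

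\emph{The gap transfer.} Your Weyl-type step moving the gap from $H(x_{j-1})$ to $H(x_j)$ is genuinely needed, since the paper silently uses a single gap for both. However, it costs a further $|\Delta x_j|\,\|H_1\|_s$, so the stated constant $(1-\widetilde{\gamma})$ would not come out exactly.
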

\begin{proof}

First we recall that the loss function is defined as
    \begin{equation}\label{eq:loss_app_exact_form_simpler}
        \LC_j(\thv) = \expval{U^\dagger (\thv) H(\paramh_j) U(\thv)}{\psi}\,, 
    \end{equation}
    where $\ket{\psi}$ represents the initial state, $H(\paramh_j)$ a parametrized observable and $U(\thv)$ a parametrized quantum circuit. Finally, $x_j$ is a parameter of the Hamiltonian.
    
    The unitary matrices are defined as 
    \begin{align}
        U(\thv) = \prod_{i=1}^M  U_i(\th_i)V_i\,,
    \end{align}
    where $\thv$ is a vector with the variational parameters $\thv = (\th_1, \th_2,...,\th_M)$ and each unitary gate takes the form of $U_k(\th_k) = e^{-i \th_k P_k}$, where $P_k$ is a Pauli string, and $V_k$ are arbitrary non-parameterized unitary transformations. 

    The loss function $\LC_N(\thv)$  is trained for a new $\paramh_N$. Note that at each training iteration we change the loss function. Indeed, we have a different loss function for every $\paramh_1,...\paramh_N$, however, as we will see, we need the different Hamiltonians to be sufficiently close to each other. We assume training as an iterative method. In each new iteration, we will initialize the parameters around the optimal solution of the previous iteration $\thv^*_{j-1}$. Crucially, the parametrized gates can be re-expressed as perturbations $\alv$ around the previous optimal solution, i.e. $\th_i = \th_i^*+\al_i$ as shown below
    \begin{align}
        U(\thv) =& \prod_{i=1}^M V_i U_i(\th_i) \\
        = & \prod_{i=1}^M V_i U_i(\th_i^*)U_i(\al_i) \\
        =&  \prod_{i=1}^M \widetilde{V}_i(\th_i^*)U_i(\al_i)\, , 
    \end{align}
    where we used that $e^{-i (a+b)P} =e^{-i aP}e^{-i bP} $ in the first equality to separate the parametrized unitary into two. In the final equality we absorbed the $\theta^*$-dependent $U_i$ into $V_i$, i.e. $\widetilde{V}_i(\th_i^*) = V_iU_i(\th_i^*) $. To ease the notation, we will denote $\widetilde{V}_i(\th_i^*)$ as just $\widetilde{V_i}$.

    We consider a region of parameters around the previous solution of the form
    \begin{align}
        \VC(\thv^*,r) = \{\thv = \thv^* + \alv\,|\,\alv\in[-r,r]^M.\}
    \end{align}
    where $r$ is the characteristic length of the region. 

    We are interested in the variance of the loss function in Eq.~\eqref{eq:loss_app_exact_form_simpler} over the region $\VC(\thv^*,r)$ such that the parameters are uniformly sampled 
    \begin{align}
        \DC(\thv^*,r) := {\rm Unif}[ \VC(\thv^*,r)]\,,
    \end{align}
    or in other words we are interested in
    \begin{align}\label{eq:variance_of_loss_app_exact_simpler}
     \Var_{\thv\sim\DC(\thv^*, r)} [\LC_j(\thv)] = &\Var_{\alv\sim\DC(\vec{0}, r)} [\LC_j(\thv=\thv^* + \alv)]\,.
     \end{align}

     Using this, we now move to finding a general lower bound for the variance of the loss function.

    From now onward we are going to use $\alv$ as the variables of interest instead of $\thv$, as explained in the previous change of variable. Indeed, $\alv$ allows us to assume, without loss of generality, that the hypercube of interest is centered at zero.

 As we are interested in independently sampled parameters from the same distribution, we can use Proposition~\ref{prop:var_decomp} to immediately find a lower bound on Eq.~\eqref{eq:variance_of_loss_app_exact_simpler}
\begin{align}\label{eq:lowerbound_variance_simpler}
    \Var_{\alv\sim\DC(\vec{0},r)}[\LC(\alv)]\geq \Var_{\al_1\sim\DC(0,r)}[\Ebb_{\alv_{\overline{1}}\sim\DC(\vec{0}_{\overline{1}},r)}[\LC(\alv)]]\,
\end{align}
where we denote $\alv_{\overline{k-1}} = (\al_k,...\al_M)$, and similarly, $\vec{0} = (0,...,0)$ is the vector of $M-1$ null components.

We start by computing the expected values of Eq.~\eqref{eq:lowerbound_variance_simpler}. Because the variables are independent, we can compute the different expected values one-by-one, starting with $\al_2$ to $\al_{M}$, that is
\begin{align}
\Ebb_{\alv_{\overline{1}}\sim\DC(\vec{0}_{\overline{1}},r)}[\LC(\alv)] = & \Ebb_{\al_2}[... \Ebb_{\al_M}[\LC(\alv)]...]\,.
\label{eq:recursive_expectation_value}
\end{align}

To do this, we express each parametrized unitary $U_i(\al_i) = e^{-i \al_i P_i} = \cos(\al_i)\1  -i\sin(\al_i)P_i$. 

We start by evaluating the expectation value of the $M$-th variable, the first one that appears in Eq.~\eqref{eq:recursive_expectation_value}.
Therefore, we have 
\begin{align}
    \Ebb_{\al_M}[\LC(\alv)] =& \Ebb_{\al_M}[\expval{U^\dagger (\thv) H(\paramh_j) U(\thv)}{\psi} ] \\ 
    =&\expval{\left( \prod_{i=1}^{M-1} \widetilde{V}_i(\th_i^*)U_i(\al_i)\right)^\dagger \underbrace{\Ebb_{\al_M}[U_M^\dagger(\al_M)\widetilde{V}_M^\dagger H(\paramh_j)\widetilde{V}_M U_M(\al_M)]}\left( \prod_{i=1}^{M-1} \widetilde{V}_i(\th_i^*)U_i(\al_i)\right)}{\psi} \,,
\end{align}

where in the last equation have make explicit the gates that depend on $\alpha_M$ and we have used linearity of expectation and of the inner product . Now, we focus only on the expectation value over the under-braced quantity.
\begin{align}
    \Ebb_{\al_M\sim\DC(0,r)}[ U_M^\dagger(\al_M)\widetilde{V}_M^\dagger H(\paramh_j)\widetilde{V}_M U_M(\al_M)] =& \Ebb_{\al_M\sim\DC(0,r)}[\cos^2(\al_M)\widetilde{V}_M^\dagger H(\paramh_j)\widetilde{V}_M   \\
    &+\sin^2(\al_M)P_M\widetilde{V}_M^\dagger H(\paramh_j)\widetilde{V}_M P_M \\
    &- i\sin(\al_M)\cos(\al_M)[P_M, \widetilde{V}_M^\dagger H\widetilde{V}_M] ]\label{eq:expansion_in_sin_cos_simpler}\\
    =&  k_{+} \widetilde{V}_M^\dagger H(\paramh_j)\widetilde{V}_M +  k_{-} P_M  \widetilde{V}_M ^\dagger H(\paramh_j)\widetilde{V}_M P_M\label{eq:expval_onegate_simpler}.
\end{align}
In the first equality we expanded the exponential into its trigonometric expansion. Then, in the last equation, we integrate explicitly, defining 
\begin{align}
    k_{+} :=& \frac{1}{2r}\int_{-r}^{r}d \al_i \cos^2(\al_i) = \frac{1}{2}(1+\sinc(2r))\label{eq:deffinition_kp_simpler}\\
    k_{-} :=& \frac{1}{2r}\int_{-r}^{r}d \al_i \sin^2(\al_i) = \frac{1}{2}(1-\sinc(2r)).
\end{align}
We also used the fact that the integral with an odd number of powers of $\sin$ over a symmetric space vanishes.

Because all the variables $\alpha_i$ are independent and follow the same distribution, we iteratively repeat the calculation in Eq.~\eqref{eq:expval_onegate_simpler} for all the subsequent parameters in a telescopic fashion. The only thing that changes is the central objects in Eq.~\eqref{eq:expval_onegate_simpler}, as we characterize now. 
Let us denote the object in Eq.~\eqref{eq:expval_onegate_simpler} as $H' := k_{+} \widetilde{V}_MH(\paramh_j)\widetilde{V}_M^\dagger +  k_{-} P_M  \widetilde{V}_M ^\dagger H(\paramh_j)\widetilde{V}_M P_M$. Now, integrating over the second variable $\alpha_{M-1}$ we have: 
\begin{align}
    &\Ebb_{\al_{M-1}\sim\DC(0,r)}\left[ \left(\prod_{s=1}^{M-1} \widetilde{V}_s U_s(\th_s) \right)^\dagger H'(\paramh_j)\left(\prod_{s=1}^{M-1} \widetilde{V}_s U_s(\th_s)\right) \right] \\
    &= \Ebb_{\al_{M-1}\sim\DC(0,r)}\left[ \left(\prod_{s=1}^{M-2} \widetilde{V}_sU_s(\th_s) \right)^\dagger U^\dagger_{M-1}(\alpha_{M-1}) \tilde V^\dagger_{M-1}H'(\paramh_j) \tilde V_{M-1} U_{M-1}(\alpha_{M-1})\left(\prod_{s=1}^{M-2} \widetilde{V}_s U_s(\th_s)\right) \right] \\
    & \hspace{4cm} = \left[ \left(\prod_{s=1}^{M-2} \widetilde{V}_sU_s(\th_s) \right)^\dagger H^{''}(x_j) \left(\prod_{s=1}^{M-2} \widetilde{V}_s U_s(\th_s)\right) \right] 
    \label{eq:integration_alpha_M-1}
\end{align}
where we denoted
\begin{align}
    H''(\paramh_j) =& k_+^2\widetilde{V}_{M-1}^\dagger\widetilde{V}_{M}^\dagger H(\paramh_j)\widetilde{V}_{M}\widetilde{V}_{M-1} + k_+k_-P_{M-1}\widetilde{V}_{M-1}^\dagger \widetilde{V}_{M}^\dagger H(\paramh_j)\widetilde{V}_{M} \widetilde{V}_{M-1} P_{M-1}  \\
    &+ k_-k_+\widetilde{V}^\dagger_{M-1}P_{M}\widetilde{V}^\dagger_{M}H(\paramh_j)\widetilde{V}_{M}P_{M}\widetilde{V}_{M-1} + k_-^2P_{M-1}\widetilde{V}_{M-1}^\dagger P_{M}\widetilde{V}_{M}^\dagger H(\paramh_j)\widetilde{V}_{M}P_{M} \widetilde{V}_{M-1}P_{M-1}.
\end{align}
In Eq.~\eqref{eq:integration_alpha_M-1} we have used the same derivations as in Eq.~\eqref{eq:expval_onegate_simpler}. We define the expression after integrating all the variables as
\begin{equation}
    H_{\rm eff}(x_j) := \Ebb_{\al_2\sim\DC(0,r)}\left[...\Ebb_{\al_{M-2}\sim\DC(0,r)} \left[ \left(\prod_{s=1}^{M-2} \widetilde{V}_sU_s(\th_s) \right)^\dagger H^{''}(x_j) \left(\prod_{s=1}^{M-2} \widetilde{V}_s U_s(\th_s)\right) \right] \cdots \right],
\end{equation}
which takes the explicit form
\begin{align}\label{eq:expectedval_hj_simpler}
    H_{\rm eff}(\paramh_j) = \widetilde{V}_1^\dagger \sum_{\vec{z}}\bigcirc_{l=2}^M k_{z_l}\mathcal{P}_{z_l}^{(l)} (H(\paramh_j)) \widetilde{V}_1
\end{align}
where $z_l\in\{+,-\}$, $\bigcirc$ represents the composition operation ($f\circ g = f(g)$), and the super-operator $\mathcal{P}_{z}^{(l)}(\cdot) $ is defined as 
\begin{align}
    \mathcal{P}_{z}^{(l)}(\cdot) =
    \begin{cases}
        \widetilde{V}_l^\dagger (\cdot) \widetilde{V}_l  & {\, \rm if \, } z\to +\\
       P_l \widetilde{V}_l^\dagger (\cdot)  \widetilde{V}_l P_l & {\, \rm if \, } z\to -
    \end{cases}
\end{align}
where it either applies the $l-$th non parametrized gates when the subscript $z$ is assigned to a plus ($+$) or conjugates the operator with two Pauli strings $P_l$ and then applies the non parametrized gates $V_l$ when $z$ is assigned to a minus ($-$). 

Because the expected value is linear with the sum (i.e. $\Ebb[a+b] = \Ebb[a] + \Ebb[b]$), we can now use Eq.~\eqref{eq:expectedval_hj_simpler} to compute the expected value of the loss function $\LC(\thv)$. Indeed, we can recover the explicit formula of the loss function in Eq.~\eqref{eq:loss_app_exact_form_simpler}, and use it to express the expected value in Eq.~\eqref{eq:lowerbound_variance_simpler}
\begin{align}
    \Ebb_{\alv_{\overline{1}}\sim\DC(\vec{0}_{\overline{1}},r)}[\LC(\thv)] =&  \expval{U_1^\dagger(\al_1)\widetilde{V}_1^\dagger H_{\rm eff}(\paramh_j)\widetilde{V}_1U_1(\al_1) }{\psi}.
\end{align}
Furthermore, we stress that the dependency on $\alpha_1$ is because we still need to compute the variance over it, which we address now. Using the previous equation, we are ready to compute the right hand side of Eq.~\eqref{eq:lowerbound_variance_simpler}, a lower bound on the variance over the constrained landscape. To do it, we start by expanding the gate $U_1$ again into its trigonometric form 
\begin{align}
     \Var_{\al_1\sim\DC(0,r)}[\Ebb_{\alv_{\overline{1}}\sim\DC(\vec{0}_{\overline{1}},r)}[\LC(\alv)] = &\Var_{\al_1\sim\DC(0,r)}\Bigg[\bra{\psi}\big(\cos^2(\th_1)  H_{\rm eff} (\paramh_j) +\sin^2(\th_1) P_1  H_{\rm eff} (\paramh_j)  P_1 \big)\ket{\psi}\Bigg]  \\
     &+ \Var_{\al_1\sim\DC(0,r)}\Bigg[ \sin(\th_1)\cos(\th_1) \expval{ i[P_M,  H(\paramh_j)] } {\psi} \Bigg]\label{eq:sincos_term_todrop_simpler}\\
     \geq &\Var_{\al_1\sim\DC(0,r)}\Bigg[ \bra{\psi}\bigg(\cos^2(\th_1) H_{\rm eff}(\paramh_j) +\sin^2(\th_1) P_1 H_{\rm eff}(\paramh_j) P_1\bigg) \ket{\psi}\Bigg] \label{eq:lowerbound_sincos_out_simpler}
\end{align}
where we used that $U_i(\al_i) = e^{-i \al_i P_i} = \cos(\al_i)\1  -i\sin(\al_i)P_i$. Furthermore, the term proportional to $\sin(\th_1)\cos(\th_1)$ is orthogonal to the terms $\sin^2(\th_1)$ and $\cos^2(\th_1)$. Therefore, the covariance between the two is zero.  Indeed, the integrals $\int\sin(\th_1)\cos^3(\th_1)$ or $\int\sin^3(\th_1)\cos(\th_1)$ over a symmetric space are zero, as it is an integral of an antisymmetric function times another symmetric function. Also, in the first inequality in Eq.~\eqref{eq:lowerbound_sincos_out_simpler}, we use the fact that the variance of a function is always positive. Therefore we can drop the variance in Eq.~\eqref{eq:sincos_term_todrop_simpler}, as it is positive.

Now we rewrite the variance as a sum of variances and covariances
\begin{align}
   &\Var_{\al_1\sim\DC(0,r)}[ \bra{\psi}\cos^2(\th_1)   H_{\rm eff}(\paramh_j)  +\sin^2(\th_1) P_1   H_{\rm eff}(\paramh_j)P_1  \ket{\psi}]\\
   =&\Var_{\al_1\sim\DC(0,r)}[\cos^2(\th_1)]|\expval{  H_{\rm eff}(\paramh_j) }{\psi}|^2 \\
    &+\Var_{\al_1\sim\DC(0,r)}[\sin^2(\th_1)]|\expval{  P_1 H_{\rm eff}(\paramh_j)P_1  }{\psi}|^2  \\
    &+ 2\Covar_{\al_1\sim\DC(0,r)}[\sin^2(\th_1),\cos^2(\th_1)]\expval{  P_1 H_{\rm eff}(\paramh_j)P_1  }{\psi}\expval{  H_{\rm eff}(\paramh_j) }{\psi}\label{eq:middle_calcul_for_lowerbound_simpler}
    \\
    =&h(r)\left( \expval{  H_{\rm eff}(\paramh_j) }{\psi} - \expval{  P_1 H_{\rm eff}(\paramh_j) P_1  }{\psi} \right)^2\label{eq:final_calcul_for_lowerbound_simpler}
\end{align}
where in the first equality we decomposed the variance of a sum into its covariances, i.e. $\Var[\varphi_1 + \varphi_2] = \Var[\varphi_1] + \Var[\varphi_2] + 2 \Covar[\varphi_1,\varphi_2]$. The covariance in defined as $\Covar_{\al\sim\DC}[\varphi_1(\al)\varphi_2(\al)] = \Ebb_{\al\sim\DC}[\varphi_1(\al)\varphi_2(\al)] - \Ebb_{\al\sim\DC}[\varphi_1(\al)]\Ebb_{\al\sim\DC}[\varphi_2(\al)]$. In the second equality we have introduced the function $h(r)$, given by 
\begin{align}
    h(r) = \frac{\cos (4 r)-1}{32 r^2} + 
  \frac{1+\sinc (4 r)}{8}\label{eq:h_function_simpler},
\end{align}
which conveniently corresponds to the three terms in Eq.~\eqref{eq:middle_calcul_for_lowerbound_simpler}, that is 
\begin{align}
   h(r) = 
       \Var_{\al_1\sim\DC(0,r)}[\sin^2(\th_1)] =
   \Var_{\al_1\sim\DC(0,r)}[\cos^2(\th_1)] =
    - \Covar_{\al_1\sim\DC(0,r)}[\cos^2(\th_1),\sin^2(\th_1)].
\label{eq:values_of_h_as_covars_simpler}
\end{align}

By recovering Eq.~\eqref{eq:lowerbound_variance_simpler}, and using the lower bound derived in Eq.~\eqref{eq:lowerbound_sincos_out_simpler} and the result of this term in Eq.~\eqref{eq:final_calcul_for_lowerbound_simpler}, we can finally derive a lower bound to the variance of the loss function that reads as follows
\begin{align}
    \Var_{\alv\sim\DC(\vec{0},r)}[\LC(\alv)]\geq &  h(r) (\expval{  \left (H_{\rm eff}(\paramh_j) - P_1 H_{\rm eff}(\paramh_j) P_1 \right) }{\psi})^2 \label{eq:lower_bound_variance_general_simpler}
\end{align}

Now we need to find a lower bound to the squared expectation value. To do so, we first show that the expecation value is negative, and thus we will find an upper bound of it (a lower bound on the term squared). To do so, we can check Eq.~\eqref{eq:expectedval_hj_simpler} to see the first term of $H_{\rm eff}(\paramh_j)$. By inspecting the equation, we see that $H_{\rm eff}(\paramh_j)$ is a convex combination of rotated Hamiltonians, as the parameters $k_{+} + k_{-} = 1$, and the difference in the Hamiltonian depends only on unitary matrices. Indeed
\begin{align}
    k_+ H_1 + k_-H_2 
\end{align}
is a convex combination of $H_1, H_2$. In the case of Eq.~\eqref{eq:expectedval_hj_simpler}, we see that each of the terms in the sum is the original $H(\paramh_j)$, but to which we apply a series of unitary transformations in the form of non-parametrized gates $\widetilde{V_i}$ and Pauli strings $P_l$.

Furthermore, $K_+:=  k_{+}^{M-1}$ is clearly the dominant term out of all of the other (for small enough values of $r$). Indeed, by Eq.~\eqref{eq:deffinition_kp_simpler}, we can use a Taylor expansion, which leads us to $k_{+}\sim 1 - \frac{r^2}{3} + \order{r^4}$, $k_{-}\sim \frac{r^2}{3}  + \order{r^4}$. Therefore, for small values of $r$ we expect the term proportional to $K_+$ to be dominant. In what follows we formalize this argument and find strict bounds on the value of $r$, such that the term $\expval{\widetilde{V}_1 \left (H_{\rm eff}(\paramh_j) - P_1 H_{\rm eff}(\paramh_j) P_1 \right)\widetilde{V}_1^\dagger}{\psi}$ is always negative. We start by rewriting this term as a sum of the terms depending on $K_+$: 
\begin{align}
    \expval{  \left (H_{\rm eff}(\paramh_j) - P_1 H_{\rm eff}(\paramh_j) P_1 \right) }{\psi} =& K_+ \expval{U(\thv^*) H(\paramh_j) U^\dagger (\thv^*) }{\psi}  \\
    &-  K_+ \expval{  P_1   U(\thv^*) H(\paramh_j) U^\dagger (\thv^*)   P_1   }{\psi}  \\
    &+\expval{   \widetilde{V}_1^\dagger \sum_{\vec{z}\neq \{+\}^{\otimes M}}\bigcirc_{l=2}^M k_{z_l}\mathcal{P}_{z_l}^{(l)} (H(\paramh_j)) \widetilde{V}_1}{\psi}  \\
    &- \expval{   P_1 \widetilde{V}_1^\dagger \sum_{\vec{z}\neq \{+\}^{\otimes M}}\bigcirc_{l=2}^M k_{z_l}\mathcal{P}_{z_l}^{(l)} (H(\paramh_j)) \widetilde{V}_1 P_1 }{\psi}\\
    =&  K_+\bra{\psi}[U(\thv^*) H U^\dagger (\thv^*) -  P_1   U(\thv^*) H(\paramh_j) U^\dagger (\thv^*)   P_1   ]\ket{\psi}  +\xi\label{eq:to_upperbound_xi_simpler}.
\end{align}
In the first equality we use the definition of $H_{\rm eff}$ introduced in Eq.~\eqref{eq:expectedval_hj_simpler}, while in the second equality we introduce $\xi$, a quantity that contains all the other terms. As discussed, $H_{\rm eff}(\paramh_j)$ can be understood as a convex combination of rotations of the Hamiltonian $H(\paramh_j)$. Therefore, $\xi$ is upper bounded by the norm of $H(\paramh_j)$: 
\begin{align}
    |\xi|=&\left|\sum_{\vec{z}\neq \{+\}^{\otimes M}} \expval{ \widetilde{V}_1 ^\dagger \bigcirc_{l=2}^M k_{z_l}\mathcal{P}_{z_l}^{(l)} (H(\paramh_j))\widetilde{V}_1}{\psi} - \expval{  P_1 \widetilde{V}_1^\dagger\bigcirc_{l=2}^M k_{z_l}\mathcal{P}_{z_l}^{(l)} (H(\paramh_j)) \widetilde{V}_1 P_1}{\psi}\right|\\
    \leq&\sum_{\vec{z}\neq \{+\}^{\otimes M}} \prod_{l=2}^M k_{z_l} \max_{\vec{z}'\in\{+,-\}^{\otimes M}}\| \widetilde{V}_1^\dagger\bigcirc_{k=2}^M \mathcal{P}_{z'_k}^{(k)} (H(\paramh_j))\widetilde{V}_1 -  P_1\widetilde{V}_1^\dagger \bigcirc_{k=2}^M \mathcal{P}_{z_k}^{(k)} (H(\paramh_j)) \widetilde{V}_1^\dagger P_1\|_{\infty}\\
    \leq & \|H(\paramh_j)\|_{s} \sum_{\vec{z}\neq \{+\}^{\otimes M}} \prod_{l=2}^M k_{z_l} = (1-K_+) \|H(\paramh_j)\|_{s}.
\end{align}
In the first equality we use a triangle inequality to split the expected value of the sum into the sum of absolute values. Then, we apply H\"older's inequality to maximize the value over $\psi$ that has 1-norm equal to one. In the last inequality, we use that the largest eigenvalue of the term inside the infinity norm can be at most the maximum difference between the eigenvalues of $H(\paramh_j)$, $\lambda_{\max} - \lambda_{\min}$. This is because the two terms are just rotations of the original Hamiltonian $H(\paramh_j)$. Note that we define the semi-norm $\|H\|_{s}:=\lambda_{\max} - \lambda_{\min}$ (see Ref.~\cite{boixo2007generalized, puig2024dynamical}). Using this result, we can upper bound equation Eq.~\eqref{eq:to_upperbound_xi_simpler}:
\begin{multline}
     \expval{ \left (H_{\rm eff}(\paramh_j) - P_1 H_{\rm eff}(\paramh_j) P_1 \right)}{\psi}\\  \leq  K_+\bra{\psi}[U^\dagger (\thv^*) H(\paramh_j) U (\thv^*)- P_1 U^\dagger(\thv^*) H(\paramh_j) U (\thv^*) P_1 ]\ket{\psi}  + (1-K_+)\|H(\paramh_j)\|_s.\label{eq:h0_h1_difference_simpler}
\end{multline}

Now, we focus on the remaining expected value in Eq.~\eqref{eq:h0_h1_difference_simpler} to show that this term is indeed negative. To do so, we recall that $\thv^*$ is the set of parameters we converged to in the previous training iteration. Without loss of generality we can write $U(\thv^*)\ket{\psi} = \sqrt{1-\epsilon^2}\ket{{\rm gs}_{j-1}} + \epsilon \ket{{\rm gs}_{j-1}^\perp}$, where $\ket{{\rm gs}_{j-1}}$ is the ground state of $H(\paramh_{j-1})$ and $\ket{{\rm gs}_{j-1}^\perp}$ is orthogonal to its ground state. We additionally assume that $|\expval{P_1}{\psi}|^2 = 0$, as it just means that the first generator of the circuit acts non-trivially on the initial input state. This also implies that $\ket{\psi},P_1\ket{\psi}$ are orthogonal, and therefore we can write $U (\thv^*,\paramh_j)P_1\ket{\psi} = \epsilon\ket{{\rm gs}_{j}} - \sqrt{1-\epsilon^2}\ket{{\rm gs}_{j}^\perp}$, where $\ket{{\rm gs}_{j}}$ is the ground state of $H(\paramh_j)$. Using this, we can readily simplify Eq.~\eqref{eq:h0_h1_difference_simpler} by rewriting $H(\paramh_j)= H(\paramh_{j-1}) +\Delta\paramh H_1$, where $H_1$ is the additional term added to the Hamiltonian at each iteration:
\begin{align}
    \bra{\psi}[U^\dagger (\thv^*) H(\paramh_j) U (\thv^*)- P_1 U^\dagger(\thv^*) H(\paramh_j) U (\thv^*) P_1 ]\ket{\psi} =& \bra{\psi}[U^\dagger (\thv^*) H(\paramh_{j-1}) U (\thv^*)- P_1 U^\dagger(\thv^*) H(\paramh_{j-1}) U (\thv^*) P_1 ]\ket{\psi} \\
    &+ \Delta\paramh\bra{\psi}[U^\dagger (\thv^*) H_1 U (\thv^*)- P_1 U^\dagger(\thv^*) H_1 U (\thv^*) P_1 ]\ket{\psi}  \\
    \leq & (1-\epsilon^2)\lambda_{\min} + \epsilon^2(\lambda_{\min}- \Delta_{\rm gap}) - \epsilon^2\lambda_{\min}\\
    & -(1-\epsilon^2)(\lambda_{\min}-\Delta_{\rm gap}) + |\Delta\paramh|\|H_1\|_s\\
    = & (1-2\epsilon^2)\Delta_{\rm gap} + |\Delta\paramh|\|H_1\|_s,
\label{eq:upper_bound_on_expval_simpler}
\end{align}
where in the inequality we used the explicit form of $\ket{\psi}$ described above, with the gap defined as $\Delta_{\rm gap} = \lambda_{\min} - \lambda_{1}$, with $\lambda_{\min}$ being the ground state energy and $\lambda_1$ being the first excited state energy. In the final equality we see that for sufficiently small $|\Delta\paramh|$ this quantity is negative, given that $\Delta_{\rm gap}<0$, assuming a good enough training in $\epsilon$ for the previous iteration. In particular, we consider $\epsilon^2\leq 1/2$. Indeed, if we choose 
\begin{align}
    |\Delta\paramh|\leq \widetilde{\gamma}\frac{(1-2\epsilon^2)|\Delta_{\rm gap}|}{\|H_1\|_s},
\end{align}
where $\widetilde{\gamma}\in(0,1)$ is a constant. Then, we can rewrite the upper bound in Eq.~\eqref{eq:upper_bound_on_expval_simpler} as 
\begin{align}
     \bra{\psi}[U^\dagger (\thv^*) H(\paramh_j) U (\thv^*)- P_1 U^\dagger(\thv^*) H(\paramh_j) U (\thv^*) P_1 ]\ket{\psi} \leq (1-2\epsilon^2)(1-\widetilde{\gamma})\Delta_{\rm gap} 
\end{align}
Finally, we can input this last equation in Eq.~\eqref{eq:h0_h1_difference_simpler} to obtain
\begin{align}
     \expval{ \left (H_{\rm eff}(\paramh_j) - P_1 H_{\rm eff}(\paramh_j) P_1 \right)}{\psi} \leq& K_+(1-2\epsilon^2)(1-\widetilde{\gamma})\Delta_{\rm gap}  + (1-K_+)\|H(\paramh_j)\|_s.
\end{align}
Imposing the negativity of this expression, we can derive the following bound on $K_+$
\begin{align}\label{eq:lowerbound_on_K_simpler}
    K_{+}^{(j)} > \frac{\|H(\paramh_j)\|_s}{\|H(\paramh_j)\|_s - (1-2\epsilon^2)(1-\widetilde{\gamma})\Delta_{\rm gap}}.
\end{align}
Finding a condition for $r$ such that the previous condition on $K_+$ is fulfilled is not trivial. However, we can leverage the Taylor Reminder Theorem~\ref{thm:taylor} to derive it. We start with the definition of $ K_{+}(r)$, where we make the dependence on $r$ explicit:
\begin{align}
    K_{+}(r) =&  \prod_{l=2}^{M}k_{+}(r)\\
    \geq & \prod_{l=2}^{M}\left( 1-\frac{r^2}{3}\right)\\
    \geq &\min_{l\in\{2,...,M\}}  \left( 1-\frac{r^2}{3}\right)^{M-1}\\
    \geq & 1-\frac{r^2(M-1)}{3} .\label{eq:lowerbound_on_K_wrt_r_simpler}
\end{align}
In the first inequality we used the Taylor Reminder Theorem in the function $k_{+}(r)$ defined in Eq~\eqref{eq:deffinition_kp_simpler}. In the second equality we used that the product of $M-1$ terms is minimized by taking the $M-1$ power of the smallest one. Finally, we use the Bernoulli's identity; $(1-x)^k\geq 1-xk$ if $ x\geq0$.

We can now input this condition in Eq.~\eqref{eq:lowerbound_on_K_simpler} to find a sufficient condition for it being negative:
\begin{equation}\label{eq:lowerbound_on_K_and_r_via_H_simpler}
    K_{+} \geq  1-\frac{r^2(M-1)}{3}  > \frac{\|H(\paramh_j)\|_s}{\|H(\paramh_j)\|_s - (1-2\epsilon^2)(1-\widetilde{\gamma})\Delta_{\rm gap}},
\end{equation}
where the first inequality is due to the previous derivation in Eq.~\eqref{eq:lowerbound_on_K_wrt_r_simpler}, and we impose the second inequality such that the condition in Eq.~\eqref{eq:lowerbound_on_K_simpler} is fulfilled. Using this, we obtain the following condition in $r$
\begin{align}
    1-\frac{r^2(M-1)}{3}  > \frac{\|H(\paramh_j)\|_s}{\|H(\paramh_j)\|_s - (1-2\epsilon^2)(1-\widetilde{\gamma})\Delta_{\rm gap}}\Rightarrow r^2<\gamma\frac{3}{(M-1)}\frac{(1-2\epsilon^2)(1-\widetilde{\gamma})|\Delta_{\rm gap}|}{\|H(\paramh_j)\|_s + (1-2\epsilon^2)(1-\widetilde{\gamma})|\Delta_{\rm gap}|}\label{eq:r_condition_simpler},
\end{align}
where $\gamma\in(0,1)$ is a constant. Now we can recover Eq.~\eqref{eq:lower_bound_variance_general_simpler}, and assuming that $r$ fulfills the condition in Eq.~\eqref{eq:r_condition_simpler}, we can further lower bound the variance as
\begin{align}
    \Var_{\alv\sim\DC(\vec{0},r)}[\LC_j(\alv)]\geq &  h(r) (\expval{  \left (H_{\rm eff}(\paramh_j) - P_1 H_{\rm eff}(\paramh_j) P_1 \right) }{\psi})^2 \\
    \geq & h(r) \left( (1-\gamma)(1-\widetilde{\gamma})(1-2\epsilon^2)|\Delta_{\rm gap}| \right)^2.\label{eq:final_var_tobound_simpler}
\end{align}

To finally give an intuitive expression, we lower bound the function $h(r)$ defined in Eq~\eqref{eq:h_function_simpler}. We do so by leveraging the Taylor Reminder Theorem~\ref{thm:taylor} once again. Particularly, we expand the function to the 6th order around $r=0$, and then show that this is always a lower bound. We use that the first to third derivatives, as well as the fifth derivative, are equal to zero, around $r = 0$. Thus, we can express the function as
\begin{align}
     h(r) =& \frac{1}{4!}\partial_{r'}^4 h(r') \bigg|_{r' =0}r^4 + \frac{1}{6!}\partial_{r'}^6h(r')\bigg|_{r'=0}r^6 +\order{r^8}\\
     =& \frac{4 r^4}{45}-\frac{16r^6}{315} + \order{r^8}. \label{eq:taylor_expand_h_simpler}
\end{align}

To show that the terms we wrote explicitly in Eq.~\eqref{eq:taylor_expand_h_simpler} are indeed a lower bound of the function $h(r)$ we recall the form of $h(r)$ in Eq.~\eqref{eq:h_function_simpler}. Then, we can find the Taylor Expansions of the trigonometric functions and use this to find the Taylor expansion of $h(r)$. The expansions of the aforementioned functions are
\begin{align}
    \sinc(4r) = &\sum_{n=0}(-1)^n\frac{(4r)^{2n}}{(2n+1)!}\\
    \cos(4r) = & \sum_{n=0}(-1)^n\frac{(4r)^{2n}}{(2n)!}.
\end{align}

The upper limit in the sum is $\infty$ for all the subsequent sums. Inserting these expansions into $h(r)$, we find
\begin{align}
    h(r) =&  \frac{1}{2}\sum_{n=1}(-1)^n\frac{(4r)^{2n-1}}{(2n+1)!} + \frac{1}{8} + \frac{1}{8}\sum_{n=0}(-1)^n\frac{(4r)^{2n}}{(2n)!}\\
    =& \frac{1}{8}\sum_{n=2}(-1)^n \frac{n-1}{n+1}\frac{(4r)^{2n}}{(2n+1)!}\label{eq:sum_hr_simpler}
\end{align}
where, as expected, in the second inequality we use that the powers of $r$ up to the third ($n<2$) cancel out. Furthermore, we compacted the sums into one. 

Note that this sum is absolutely convergent. Indeed, we can take the absolute value of all the elements:
\begin{align}
    \sum_{n=2}\left|\frac{n-1}{n+1}\right|\frac{(4r)^{2n}}{(2n+1)!}\leq \frac{1}{4r}\sum_{n=0}\frac{(4r)^{2n+1}}{(2n+1)!} = \frac{\sinh(4r)}{4r}
\end{align}
where in the first inequality we used that $|(n-1)/(n+1)|<1$ for $n > 0$ and in the last equality we used that the hyperbolic sine $\sinh (x)$ is has a Taylor expansion of that form.

Because the sum in Eq.~\eqref{eq:sum_hr_simpler} is absolutely convergent, we can rearrange the terms of the sum. Indeed, absolute convergent series can be rearranged and not only will they still converge, but they will converge to the same limit (see Remark 10.13 in Ref.~\cite{banaszczyk2006additive}). Thus, we rearrange Eq.~\eqref{eq:sum_hr_simpler} by pairing the elements $(n, n+1), (n+2,n+3)...$
\begin{align}
    h(r) =&\frac{1}{8}\sum_{n=2}(-1)^n \frac{n-1}{n+1}\frac{(4r)^{2n}}{(2n+1)!}\\
    =&\frac{1}{8}\sum_{n=1} \frac{(4r)^{4n}}{(4n+1)!}\left( \frac{2n-1}{2n+1} - \frac{2n}{2n+2}\frac{(4r)^{2}}{(4n+2)(4n+3)} \right).\label{eq:hr_rearanged}
\end{align}
In the second inequality we paired one positive term $n$ with the next $n+1$ and redefined $n\to 2n$ in order to avoid double counting due to this rearrangement. Now it is easy to show that for $r<1$, $n\geq1$ all this terms are always positive. Indeed we can rewrite the term as
\begin{align}
    \frac{2n-1}{2n+1}-  \frac{n}{n+1}\frac{16}{(4n+2)(4n+3)}
\end{align}
then it suffices to see that 
\begin{align}
   \begin{cases}
       \frac{n}{n+1}\frac{16}{(4n+2)(4n+3)}&\leq \frac{n}{n+1}\frac{8}{21}\\
       \frac{n}{n+1}\frac{8}{21} &\leq   \frac{2n-1}{2n+1}
   \end{cases} 
\end{align}
to be able to write the following inequality
\begin{align}
   \frac{2n-1}{2n+1}\geq \frac{n}{n+1}\frac{8}{21}\geq \frac{n}{n+1}\frac{16}{(4n+2)(4n+3)}.
\end{align}

Finally, we can lower bound the whole sum by one of the elements. We do this for $n=1$ and obtain 
\begin{align}
    h(r)\geq& \frac{1}{8}\frac{(4r)^{4}}{(4+1)!}\left( \frac{2-1}{2+1} - \frac{2}{2+2}\frac{(4r)^{2}}{(4+2)(4+3)}\right)\\
    =&\frac{1}{8} \frac{(4r)^{4}}{120}\left( \frac{2-1}{2+1} - \frac{2}{2+2}\frac{(4r)^{2}}{(4+2)(4+3)}\right)\\
    =&\left( 1 - \frac{4r^2}{7} \right)\frac{4r^4}{45}.
\end{align}

Therefore, we can recover Eq.~\eqref{eq:final_var_tobound_simpler} and input the result obtained in Eq.~\eqref{eq:taylor_expand_h_simpler} to find that
\begin{align}
    \Var_{\alv\sim\DC(\vec{0},r)}[\LC_j(\alv)]\geq & \left(  1-\frac{4r^2}{7} \right) \frac{4 r^4}{45} \left( (1-\gamma)(1-\widetilde{\gamma})(1-2\epsilon^2)|\Delta_{\rm gap}| \right)^2,
\end{align}
where $r$ should fulfill the constraint in Eq.~\eqref{eq:r_condition_simpler}. 
Because $r\in\Theta({\rm poly}(n))$, then $\Var_{\alv\sim\DC(\vec{0},r)}[\LC(\alv)]\in\Omega({\rm poly}(n))$.

\end{proof}

\section{Lower bound on the variance of the Meta-VQE loss function} \label{app:lowerbound_metavqe}
In this appendix, we derive a lower bound on the variance of a loss function of the form
\begin{equation}
    \LC(\thv, \boldsymbol{x}) = \frac 1 N \sum_{j = 1}^N\expval{U^\dagger (\thv, x_j) H(\paramh_j) U(\thv, x_j)}{\psi} = \frac 1 N \sum_{j = 1}^N \LC_j(\thv, x_j)\,. 
    \label{eq:loss_def_app}
\end{equation}
Unlike the previous case, the parameterized quantum circuit now includes an explicit encoding 
of the classical variable $x_j$ in its structure.

The iterative training procedure proceeds as follows. 
We begin with the single-term loss $\mathcal{L}(\boldsymbol{\theta}, x_1) 
= \mathcal{L}_1(\boldsymbol{\theta}, x_1)$ and train the circuit to obtain an optimal set 
of parameters $\boldsymbol{\theta}_1^*$. 
Next, we consider the two-term loss 
$\mathcal{L}(\boldsymbol{\theta}, (x_1, x_2)) 
= \tfrac{1}{2}\big[\mathcal{L}_1(\boldsymbol{\theta}, x_1)
+ \mathcal{L}_2(\boldsymbol{\theta}, x_2)\big]$ 
and re-optimize the circuit by initializing the parameters near $\boldsymbol{\theta}_1^*$. 
This process is repeated iteratively until step~$N$, each time adding one more term to the 
loss function and refining the parameters accordingly.

Importantly, the global loss $\mathcal{L}(\boldsymbol{\theta}, \boldsymbol{x})$ 
changes at every iteration because a new encoded Hamiltonian term 
is incorporated at each step, thus modifying both the optimization landscape 
and the effective variance of the loss.

\begin{theorem}
    Consider a loss function $\mathcal{L}(\thv, \vec x)$ of the form of 
    \begin{align}
    \label{eq:loss_encoding_th}
        \mathcal{L}(\thv, \vec{x}) = \frac 1N \sum_{j=1}^N \mathcal{L}^{(j)}(\vec{\theta}, \paramh_j)
    \end{align}
    where
    \begin{align}
        \LC^{(j)}(\thv,\paramh_j) = \expval{U^\dagger (\thv,\paramh_j) H(\paramh_j) U(\thv,\paramh_j)}{\psi}\,, 
    \end{align}
    and with the generators being non parametrized gates and Pauli rotations of the form of
    \begin{align}
        e^{-i\th_j g_j(\paramh)P_j}.
    \end{align}
    Assume that every encoding function $g_j$ is differentiable on the relevant interval has bounded derivatives. 

 Assume moreover that the first circuit generator acts non-trivially on the input state in the sense that
\begin{equation}
\langle\psi|P_1|\psi\rangle = 0.
\end{equation}

 Let $\boldsymbol{\theta}^*$ denote the parameters carried over from the previous iteration (trained on $x_1,\dots,x_{N-1}$).Assume that for each $j\in\{1,\dots,N-1\}$ the prepared state has overlap
\begin{equation}
\big|\langle {\rm gs}_j|U(\boldsymbol{\theta}^*,x_j)|\psi\rangle\big|^2\ge 1-\epsilon^2,
\qquad \epsilon\le \frac {1}{ \sqrt2},
\end{equation}
where $|{\rm gs}_j\rangle$ is a ground state of $H(x_j)$. Assume further that each $H(x_j)$ has a nonzero spectral gap
$\Delta_{\rm gap}^{(j)}$ between its ground energy and first excited energy, and define
\begin{equation}
|\Delta_{\rm gap\,min}|:=\min_{1\le j\le N}|\Delta_{\rm gap}^{(j)}|.
\end{equation}

 Then, if we add to the Hamiltonian a term of $\Delta \paramh W$ (that is $H(x_N)=H(x_{N-1})+\Delta x\,H_1$), where $H_1$ is a fixed hermitian matrix, and $\Delta\paramh = \paramh_N-\paramh_{N-1}$, such that it is bounded as 
    \begin{align}
        |\Delta\paramh|< \widetilde{\gamma} \frac{(1-2\epsilon^2)|\Delta_{\rm gap}^{(N-1)}|}{\|H_1\|_s + M \max_l \partial_\paramh g_l(\paramh) \|H(\paramh_N)\|_s},
    \end{align}
    where $\widetilde{\gamma}\in(0,1)$ is a constant, $M$ is the number of trainable parameters and $\|\cdot\|_s = \lambda_{\max}^{(\cdot)} -  \lambda_{\min}^{(\cdot)}$ is a semi-norm~\cite{boixo2007generalized}.

    Let $\mathcal{D}(\boldsymbol{\theta}^*,r)$ be the uniform distribution over the hypercube
$\{\boldsymbol{\theta}^*+\boldsymbol{\alpha}:\alpha_i\in[-r,r]\}$ (with i.i.d.\ coordinates), and let us consider some value of the radius $r_t$ satisfy
    \begin{align}
        r_t^2\leq \gamma \min\left\{\min_{j,k}\frac{4 g_1^2(\paramh_j)g_1^2(\paramh_k)}{45 h_6(\paramh_j,\paramh_k)}, \frac{3}{g_{\min}^2(M-1)}\frac{\widetilde{\gamma}(1-2\epsilon^2)|\Delta_{\rm gap\, min}|}{\|H\|_{s,{\rm min}} + (1-2\epsilon^2)|\Delta_{\rm gap\, min}|}\right\}
    \end{align}
    where 
    \begin{align}
       h_6(\paramh_j,\paramh_k)=& 2^6\Bigg( \frac{3  g_{1}(\paramh_j)^6}{7}+ \frac{ g_{1}(\paramh_j)^5  g_{1}(\paramh_k)}{2}+ \frac{37  g_{1}(\paramh_j)^4  g_{1}(\paramh_k)^2}{7}+ \frac{5 g_{1}(\paramh_j)^3 g_{1}(\paramh_k)^3}{4}\\
    &+ \frac{37  g_{1}(\paramh_j)^2  g_{1}(\paramh_k)^4}{7}+ \frac{ g_{1}(\paramh_j)  g_{1}(\paramh_k)^5}{7}+ \frac{ g_{1}(\paramh_j)  g_{1}(\paramh_k)^5}{2}+ \frac{3  g_{1}(\paramh_k)^6}{7}\Bigg) 
    \end{align}

    Then the variance on a hyper cube of side $2r$ with $0< r\leq r_t$ is lower-bounded by 
    \begin{align}
        \Var_{\thv\sim\DC(\thv^*,r)}[\LC(\alv)]\geq (1-\gamma)\frac{4}{45}g_1^2(\paramh_j)g_1^2(\paramh_k)(1-\gamma)(1-\widetilde{\gamma})(1-2\epsilon^2)|\Delta_{\rm gap \, min}| \in\Omega\left( 1/{\rm Poly}(n) \right),
    \end{align}
where $\thv^*$ is the previous iteration best solution and where $\gamma\in(0,1)$ is a constant.

\end{theorem}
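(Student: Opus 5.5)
The plan is to reduce the Meta-VQE case to the single-Hamiltonian argument of Appendix~\ref{app:noencoding}, keeping track of two new features: the loss is an average of $N$ terms, and the rotation angles are rescaled by $g_l(x_j)$.

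\textbf{Step 1: shift to the hypercube and keep only $\alpha_1$.} Write $\thv=\thv^*+\alv$ and absorb $e^{-i\theta_l^* g_l(x_j)P_l}$ into $\widetilde V_l^{(j)}$, so that each gate becomes $e^{-i\alpha_l g_l(x_j)P_l}$. Proposition~\ref{prop:var_decomp} then gives
\[
\Var[\LC]\geq \Var_{\alpha_1}\big[\Ebb_{\alv_{\overline 1}}[\LC]\big].
\]

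\textbf{Step 2: average out $\alpha_2,\dots,\alpha_M$.} For each $j$, integrate these parameters term by term exactly as before. This gives an effective observable $H^{(j)}_{\rm eff}$, a convex combination of rotated copies of $H(x_j)$. The weights are now $k_\pm(r g_l(x_j))$, so the dominant weight satisfies
\[
K_+^{(j)}\geq 1-\frac{r^2 g_{\max}^2(M-1)}{3}.
\]
The paper writes $g_{\min}$ in the statement, but a $g_{\max}$-type constant is what this step produces.

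\textbf{Step 3: the $\alpha_1$ variance of the $N$-term average.} The expression to control is
\[
\Var_{\alpha_1}\Big[\tfrac1N\sum_j\big(\cos^2(g_1^{(j)}\alpha_1)A_j+\sin^2(g_1^{(j)}\alpha_1)B_j+\sin\cos\text{-terms}\big)\Big],
\qquad D_j:=A_j-B_j .
\]
Unlike the VQE case, the odd terms do not decouple trivially, because the frequencies $g_1(x_j)$ differ across $j$. Even so, all cross moments between an odd function and an even function of $\alpha_1$ still vanish on the symmetric interval. So dropping the odd part remains a valid lower bound.

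The even part equals $\sum_j \tfrac{D_j}{2N}\cos(2g_1^{(j)}\alpha_1)$ plus a constant. Its variance is
\[
\frac{1}{4N^2}\sum_{j,k}D_jD_k\,\Covar\big[\cos(2g_1^{(j)}\alpha_1),\cos(2g_1^{(k)}\alpha_1)\big].
\]
Taylor-expanding each covariance with the remainder theorem~\ref{thm:taylor} gives a leading term $\tfrac{4}{45}\cdot 4g_1^{(j)2}g_1^{(k)2}r^4$ and a sixth-order remainder bounded by $h_6(x_j,x_k)\,r^6$. The condition $r^2\leq\gamma\,\tfrac{4g_1^2g_1^2}{45h_6}$ keeps every covariance at least $(1-\gamma)$ times its leading term. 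The leading terms form the rank-one matrix $g_1^{(j)2}g_1^{(k)2}$. If all $D_j$ have the same sign, every product $D_jD_k\geq0$, so the sum is bounded below by the diagonal or, better, by $\big(\sum_j g_1^{(j)2}|D_j|\big)^2/N^2$.

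\textbf{Step 4: show every $D_j$ is uniformly negative.} This reuses the argument leading to Eq.~\eqref{eq:upper_bound_on_expval_simpler}:
\[
D_j\leq K_+^{(j)}\big(\text{ideal term}\big)+(1-K_+^{(j)})\|H(x_j)\|_s .
\]
For $j<N$, the ideal term is at most $(1-2\epsilon^2)\Delta^{(j)}_{\rm gap}$ by the fidelity assumption together with $\expval{P_1}{\psi}=0$.

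For $j=N$ there is no trained state at $x_N$. Compare it to $U(\thv^*,x_{N-1})$ and the ground state of $H(x_{N-1})$. Two errors enter:
\begin{itemize}
\item the Hamiltonian change, of size $|\Delta x|\,\|H_1\|_s$;
\item the encoding change. Each gate angle shifts by $\theta_l^*\,\partial_x g_l\,\Delta x$, and a telescoping sum with $\|e^{-iaP}-e^{-ibP}\|\leq|a-b|$ bounds the total state-observable deviation by $M\max_l|\partial_x g_l|\,|\Delta x|\,\|H(x_N)\|_s$, up to the size of $\theta^*$.
\end{itemize}
This is exactly the denominator in the step-size assumption. It leaves $D_N\leq(1-\widetilde\gamma)(1-2\epsilon^2)\Delta_{\rm gap}$, and the $r$ constraint from Step 2 then makes all $D_j\leq -(1-\gamma)(1-\widetilde\gamma)(1-2\epsilon^2)|\Delta_{\rm gap\,min}|$.

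\textbf{Main obstacle.} The hardest step is Step 4 for $j=N$. Controlling the encoding perturbation requires bounding the parameter magnitudes, or more precisely $|\theta^*_l\partial_x g_l|$, which the statement only implicitly absorbs into $\max_l\partial_x g_l$. I would first try to absorb $\theta_l^*$ into a bounded $g_l$ by restricting $\theta^*$ to $[-\pi,\pi]$ using periodicity.

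A secondary difficulty is the sign coherence needed in Step 3. I would handle it by working with the factored positive-semidefinite leading term, so that the sixth-order corrections are dominated entrywise.
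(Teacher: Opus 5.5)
Your outline is the paper's own argument: shift to $\alv$, apply Proposition~\ref{prop:var_decomp}, average $\alpha_2,\dots,\alpha_M$ into $H_{\rm eff}(x_j)$ with dominant weight $K_+^{(j)}$, drop the odd part in $\alpha_1$, Taylor-expand the covariances, force every $D_j<0$, and compare $j=N$ to $x_{N-1}$. Your parity justification and the $g_{\max}$ correction are right.

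The genuine gap is Step~4 for $j<N$. The paper's proof takes the same step in Eq.~\eqref{eq:worst_case_perp_state}. A fidelity $1-\epsilon^2$ says nothing about where the remaining weight $\epsilon^2$ sits, so $\ket{{\rm gs}_j^\perp}$ may have energy up to $\lambda_{\max}$, not $\lambda_1$. What does follow is $\langle H\rangle_{U\psi}\le(1-\epsilon^2)\lambda_{\min}+\epsilon^2\lambda_{\max}$. By Bessel's inequality for the orthonormal pair $U(\thv^*,x_j)\ket{\psi}$, $U(\thv^*,x_j)P_1\ket{\psi}$, you also get $\langle H\rangle_{UP_1\psi}\ge\epsilon^2\lambda_{\min}+(1-\epsilon^2)\lambda_1$. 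Together these give only
\[
\text{ideal term}\le-(1-\epsilon^2)|\Delta_{\rm gap}^{(j)}|+\epsilon^2\|H(x_j)\|_s ,
\]
which need not be negative for $\epsilon\le 1/\sqrt2$.

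In fact the statement is false. Take two qubits with:
\begin{itemize}
\item $\ket{\psi}=\ket{00}$ and $P_1=X\otimes\1$;
\item $H(x_1)=H(x_2)=\mathrm{diag}(0,1,2,4)$, so $N=2$ and $\Delta x=0$;
\item $g_l\equiv1$ and $\thv^*=\vec{0}$;
\item a fixed $\widetilde V_1$ with $\widetilde V_1\ket{00}=\tfrac{\sqrt3}{2}\ket{00}+\tfrac12\ket{11}$ and $\widetilde V_1\ket{10}=\ket{01}$;
\item a last gate $e^{-i\theta_2 Z\otimes\1}$, which commutes with $H$.
\end{itemize}
Every hypothesis holds with $\epsilon^2=1/4$ and $|\Delta_{\rm gap}|=1$. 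Yet $\cos\alpha_1\,\widetilde V_1\ket{00}-i\sin\alpha_1\,\widetilde V_1\ket{10}$ has energy $\cos^2\alpha_1\cdot 1+\sin^2\alpha_1\cdot 1=1$ for all $\alpha_1$. The loss is therefore constant and its variance is zero.

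To close the gap you must strengthen the hypothesis. One option is $\epsilon^2<|\Delta_{\rm gap}|/(|\Delta_{\rm gap}|+\|H\|_s)$. A better one is an energy-error assumption $\langle H(x_j)\rangle-\lambda_{\min}\le\delta<|\Delta_{\rm gap}^{(j)}|/2$, which gives ideal term $\le-|\Delta^{(j)}_{\rm gap}|+2\delta$. This margin then replaces $(1-2\epsilon^2)|\Delta_{\rm gap}|$ everywhere, including in the step-size condition.

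There are two further points. First, you are right that the $j=N$ comparison needs $|\theta_l^*\,\partial_x g_l|$. The paper's chain rule drops $\theta_l^*$ when it bounds $\partial_{g_l}$ by $\|H\|_s$, although $\partial_{g_l}=\theta_l^*\partial_{\phi_l}$ for the angle $\phi_l=\theta_l^*g_l$. However, your periodicity repair fails. $\thv^*$ is shared by all training points, and the period of $\theta_l\mapsto e^{-i\theta_l g_l(x)P_l}$ (up to a phase) is $\pi/|g_l(x)|$, which depends on $x$. Reducing $\theta_l^*$ at one $x_j$ therefore changes $U(\thv^*,x_k)$ wherever $g_l(x_k)\neq g_l(x_j)$, and destroys the fidelity hypotheses there. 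The factor has to appear in the step-size condition, e.g.\ as $2M\max_l|\theta_l^*\partial_x g_l|\,\|H(x_N)\|_s$ (the $2$ because $D_N$ is a difference of two expectation values). Alternatively, you must assume a bound on $\|\thv^*\|_\infty$.

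Second, what your Step~3 actually produces is $\Var\ge(1-\gamma)\frac{4r^4}{45}\big(\frac1N\sum_j g_1^2(x_j)|D_j|\big)^2=\Omega(r^4|\Delta_{\rm gap\,min}|^2)$. The displayed right-hand side has no power of $r$ and is linear in $|\Delta_{\rm gap\,min}|$, so it cannot hold: the variance tends to zero as $r\to0$. Besides $g_{\min}\to g_{\max}$, the radius condition also needs $\|H\|_{s,\max}$ in place of $\|H\|_{s,\min}$. State the corrected conclusion rather than the displayed one.
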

\begin{proof}
Since we have already bounded each one of the terms $\LC_j(\thv, x_j)$ (but without the encoding $x_j$) in Appendix~\ref{app:noencoding}, we have much of the work done.

The parameterized quantum circuit in this case is given by 
    \begin{align}\label{eq:circuit_appendix}
        U(\thv,\paramh_j) = \prod_{i=1}^M  U_i(f_i(\th_i,\paramh_j))V_i\,,
    \end{align}
    where $\thv$ is a vector with the variational parameters $\thv = (\th_1, \th_2,...,\th_M)$ and $f_i(\th_i,\paramh_j)$ are the encoding functions. These functions can be different for every $\theta_i$ but share a common structure $f_i(\th_i,\paramh_j) = g_i(\paramh_j) \th_i$. Finally, each unitary gate takes the form of $U_k(\th_k,\paramh_j) = e^{-i f_k(\th_k,\paramh_j)P_k}$, where $P_k$ is a Pauli string. 

   As in Appendix~\ref{app:noencoding}, the parametrized gates can be re-expressed as perturbations $\alv$ around the previous optimal solution, i.e. $\th_i = \th_i^*+\al_i$ as shown below
    \begin{align}
        U(\thv,\paramh_j) =& \prod_{i=1}^M V_i U_i(f_i(\th_i,\paramh_j)) \\
        = & \prod_{i=1}^M V_i U_i(f_i(\th_i^*,\paramh_j))U_i(f_i(\al_i,\paramh_j)) \\
        =&  \prod_{i=1}^M \widetilde{V}_i(f_i(\th_i^*,\paramh_j))U_i(f_i(\al_i,\paramh_j))\, , 
    \end{align}
    where we used that $f_i(\th_i^* + \al_i, \paramh_j) = f_i(\th_i^*, \paramh_j) + f_i(\al_i, \paramh_j)$ and the fact that $e^{-i (a+b)P} =e^{-i aP}e^{-i bP} $ in the first equality to separate the parametrized unitary into two. In the final equality we absorbed the $\theta^*$ dependent $U_i$ into $V_i$, i.e. $\widetilde{V}_i(f_i(\th_i^*,\paramh_j)) = V_iU_i(f_i(\th_i^*,\paramh_j)) $. To ease the notation, we will denote $\widetilde{V}_i(f_i(\th_i^*,\paramh_j))$ as just $\widetilde{V_i}$.

    We consider a region of parameters around the previous solution of the form
    \begin{align}
        \VC(\thv^*,r) = \{\thv = \thv^* + \alv\,|\,\al_i\in[-r,r]\}
    \end{align}
    where $r$ is the characteristic length of the region. 

    We are interested in the variance of the loss function in Eq.~\eqref{eq:loss_encoding_th} over the region $\VC(\thv^*,r)$ such that the parameters are uniformly sampled 
    \begin{align}
        \DC(\thv^*,r) := {\rm Unif}[ \VC(\thv^*,r)]\,,
    \end{align}
    or in other words we are interested in
    \begin{align}\label{eq:variance_of_loss_app_exact}
     \Var_{\thv\sim\DC(\thv^*, r)} [\LC(\thv, \boldsymbol{x})] = &\Var_{\alv\sim\DC(\vec{0}, r)} [\LC(\thv=\thv^* + \alv, \boldsymbol{x})]\,.
     \end{align}

     Using this we now move to finding a general lower bound for the variance of the loss function.

 As we are interested in independently sampled parameters from the same distribution, we can use Proposition~\ref{prop:var_decomp} to immediately find a lower bound on Eq.~\eqref{eq:variance_of_loss_app_exact}
\begin{align}\label{eq:lowerbound_variance}
    \Var_{\alv\sim\DC(\vec{0},r)}[\LC(\alv, \boldsymbol{x})]\geq \Var_{\al_1\sim\DC(0,r)}[\Ebb_{\alv_{\overline{1}}\sim\DC(\vec{0}_{\overline{1}},r)}[\LC(\alv, \boldsymbol{x})]]\,
\end{align}
where we denote $\alv_{\overline{1}} = (\al_2,...\al_M)$, and similarly, $\vec{0} = {0,...,0}$ is the $0$-th vector of $M-1$ components.

We start by computing the expected values of Eq.~\eqref{eq:lowerbound_variance}. Because the variables are independent, we can compute the different expected values one-by-one, starting with $\al_2$ to $\al_{M}$, that is
\begin{align}
    \Ebb_{\alv_{\overline{1}}\sim\DC(\vec{0}_{\overline{1}},r)}[\LC(\alv, \boldsymbol{x})] = & \Ebb_{\al_2}[... \Ebb_{\al_M}[\LC(\alv, \boldsymbol{x})]...]\,.
\end{align}

As in Appendix~\ref{app:noencoding} we proceed to compute the expectation value with respect $\alpha_M$. We first compute this quantity for a single term of the loss function. The prodedure here it is very similar as in the previous section, but we include it for completeness.

To do this computation, we express each parametrized unitary $U_i(g_i(\paramh_j)\al_i) = e^{-i g_i(\paramh_j)\al_i P_i} = \cos(g_i(\paramh_j)\al_i)\1  -i\sin(g_i(\paramh_j)\al_i)P_i$. Then we can define
\begin{align}
    k_{+}^{(i,j)} :=& \frac{1}{2r}\int_{-r}^{r}\cos^2(g_i(\paramh_j)\al_i) = \frac{1}{2}(1+\sinc(2g_i(\paramh_j)r))\label{eq:deffinition_kp}\\
    k_{-}^{(i,j)} :=& \frac{1}{2r}\int_{-r}^{r}\sin^2(g_i(\paramh_j)\al_i) = \frac{1}{2}(1-\sinc(2g_i(\paramh_j)r))
\end{align}
where now we have index $i$ refering to layer and $j$ to encoding. We have explicitly used that $f_i(\th_i,\paramh_j) = g_i(\paramh_j) \th_i$. Equipped with these definitions, we can now proceed with the calculations. We start by evaluation the expected value of the unitary gate $M$ into one of the summand in Eq.~\eqref{eq:loss_encoding_th}, i.e. in the term in $\mathcal L^{(j)}(\thv , x_j)$. The expected value over the first parameter takes the following form
\begin{align}
    \Ebb_{\al_M\sim\DC(0,r)}[ U_M^\dagger(g_M(\paramh_j)\al_M)\widetilde{V}_M^\dagger H(\paramh_j)\widetilde{V}_M U_M(g_M(\paramh_j)\al_M)] =& \Ebb_{\al_M\sim\DC(0,r)}[\cos^2(g_M(x_j)\al_M)\widetilde{V}_M^\dagger H(\paramh_j)\widetilde{V}_M  \\
    &+\sin^2(g_M(x_j)\al_M)P_M\widetilde{V}_M^\dagger H(\paramh_j)\widetilde{V}_M P_M \\
    &- i\sin(g_M(x_j)\al_M)\cos(g_M(x_j)\al_M)[P_M, \widetilde{V}_M^\dagger H\widetilde{V}_M] ]\label{eq:expansion_in_sin_cos}\\
    =&  k_{+}^{(M,j)} \widetilde{V}_MH(\paramh_j)\widetilde{V}_M^\dagger +  k_{-}^{(M,j)} P_M  \widetilde{V}_M ^\dagger H(\paramh_j)\widetilde{V}_M P_M.
    \label{eq:expval_onegate}
\end{align}
Iin the first equality we expanded the exponential into its trigonometric form. Then, in the second equality, we used that the integral with an odd number of powers of $\sin$ over a symmetric space vanishes. Because all of the variables are independent and follow the same distribution, we reproduce the calculation in Eq.~\eqref{eq:expval_onegate} for all the subsequent parameters. To do so, we denote $H' :=  k_{+}^{(M,j)} V_M H(\paramh_j) V_M^\dagger +  k_{-}^{(M,j)} V_M P_M H(\paramh_j) P_M V_M^\dagger$, and using this notation we can see that
\begin{align}
    &\Ebb_{\al_{M-1}\sim\DC(0,r)}\left[ \left(\prod_{s=1}^{M-1} \widetilde{V}_s U_s(g_s(\paramh_j)\th_s) \right)^\dagger H'(\paramh_j)\left(\prod_{s=1}^{M-1} \widetilde{V}_s U_s(g_s(\paramh_j)\th_s) \right) \right]\\
    &= \left(\prod_{s=1}^{M-2} \widetilde{V}_s U_s(g_s(\paramh_j)\th_s) \right)^\dagger \big(k_{+}^{(M-1,j)} \widetilde{V}_{M-1}^\dagger H'(\paramh_j) \widetilde{V}_{M-1}  \\
    & \qquad +  k_{-}^{(M-1,j)} P_{M-1}  \widetilde{V}_{M-1}^\dagger H'(\paramh_j) \widetilde{V}_{M-1}P_{M-1}\bigg) \left( \prod_{s=1}^{M-1} \widetilde{V}_s U_s(g_s(\paramh_j)\th_s) \right) \\
    \label{eq:exp_value_M-1}
    &= \left(\prod_{s=1}^{M-2} \widetilde{V}_s U_s(g_s(\paramh_j)\th_s) \right)^\dagger  H''(\paramh_j)\left(\prod_{s=1}^{M-2} \widetilde{V}_s U_s(g_s(\paramh_j)\th_s) \right)
\end{align}
where we denoted 
\begin{align}
    H''(\paramh_j) =& k_+^{(M,j)}k_+^{(M-1,j)}\widetilde{V}_{M-1}^\dagger\widetilde{V}_{M}^\dagger H(\paramh_j)\widetilde{V}_{M}\widetilde{V}_{M-1} + k_+^{(M,j)}K_-^{(M-1,j)}P_{M-1}\widetilde{V}_{M-1}\widetilde{V}_{M}H(\paramh_j)\widetilde{V}_{M}^\dagger \widetilde{V}_{M-1}^\dagger P_{M-1}  \\
    &+ k_-^{(M,j)}k_+^{(M-1,j)}\widetilde{V}^\dagger_{M-1}P_{M}\widetilde{V}^\dagger_{M}H(\paramh_j)\widetilde{V}_{M}P_{M}\widetilde{V}_{M-1}  \\
    &+ k_+^{(M,j)}k_+^{(M-1,j)}P_{M-1}\widetilde{V}_{M-1}^\dagger P_{M}\widetilde{V}_{M}^\dagger H(\paramh_j)\widetilde{V}_{M}P_{M} \widetilde{V}_{M-1}P_{M-1}
\end{align}
We define as $H_{\rm eff}(x_j)$ the result of integrating all the remaining variables in Eq~\eqref{eq:exp_value_M-1}: 
\begin{equation}
   H_{\rm eff}(x_j):=  H\Ebb_{\al_{2}\sim\DC(0,r)}\left[\cdots  \Ebb_{\al_{M-2}\sim\DC(0,r)} \left[  \left(\prod_{s=1}^{M-2} \widetilde{V}_s U_s(g_s(\paramh_j)\th_s) \right)^\dagger  H''(\paramh_j)\left(\prod_{s=1}^{M-2} \widetilde{V}_s U_s(g_s(\paramh_j)\th_s) \right)\right] \cdots \right].
\end{equation}
Explicitly, $H_{\rm eff}$ takes the following form
\begin{align}\label{eq:expectedval_hj}
    H_{\rm eff}(\paramh_j) = \widetilde{V}_1^\dagger \sum_{\vec{z}}\bigcirc_{l=2}^M k_{z_l}^{(l,j)}\mathcal{P}_{z_l}^{(l)} (H(\paramh_j)) \widetilde{V}_1
\end{align}
where $z_l\in\{+,-\}$, $\bigcirc$ represents the composition operation ($f\circ g = f(g)$), and the super-operator $\mathcal{P}_{z}^{(l)}(\cdot) $ is defined as 
\begin{align}
    \mathcal{P}_{z}^{(l)}(\cdot) =
    \begin{cases}
        \widetilde{V}_l^\dagger \cdot \widetilde{V}_l {\, \rm if \, } z\to +\\
       P_l \widetilde{V}_l^\dagger \cdot  \widetilde{V}_l P_l {\, \rm if \, } z\to -
    \end{cases}
\end{align}
in analogy to Eq.~\eqref{eq:expectedval_hj_simpler}. 

Since the expected value is linear with the sum (i.e. $\Ebb[a+b] = \Ebb[a] + \Ebb[b]$), we can now use Eq.~\eqref{eq:expectedval_hj} to trivially compute the expected value of the loss function $\LC(\thv)$. Indeed, we can recover the explicit formula of the loss function and use it to express the expected value in Eq.~\eqref{eq:lowerbound_variance}:
\begin{align}
    \Ebb_{\alv_{\overline{1}}\sim\DC(\vec{0}_{\overline{1}},r)}[\LC(\thv)] =& \frac{1}{N}\sum_{j=1}^N \Ebb_{\alv_{\overline{1}}\sim\DC(\vec{0}_{\overline{1}},r)}[\LC^{(j)}(\thv,\paramh_j)]\\
    =& \frac{1}{N}\sum_{j=1}^N \expval{U_1^\dagger(g_1(\paramh_j)\al_1)\widetilde{V}_1^\dagger H_{\rm eff}(\paramh_j)\widetilde{V}_1U_1(g_1(\paramh_j)\al_1) }{\psi}
\end{align}
where in the last equality we used the equation derived in Eq.~\eqref{eq:expectedval_hj}. Furthermore, we stress that the dependency on still one variable is due to the fact that we still need to compute the variance over it.

That is what we proceed to do next. Using the previous equation, we are ready to compute the right hand side of Eq.~\eqref{eq:lowerbound_variance}, a lower bound on the variance over the constrained landscape. To do it, we start by expanding the gate $U_1$ into the $\sin,\cos$ form as we did in Eq.~\eqref{eq:expansion_in_sin_cos}
\begin{align}
     \Var_{\al_1\sim\DC(0,r)}[\Ebb_{\alv_{\overline{1}}\sim\DC(\vec{0}_{\overline{1}},r)}[\LC(\alv)] \geq &\frac{1}{N^2}\Var_{\al_1\sim\DC(0,r)}\Bigg[\sum_{j=1}^N \bra{\psi}\big(\cos^2(g_1(\paramh_j)\th_1)  H_{\rm eff} (\paramh_j)   \\      &+\sin^2(g_1(\paramh_j)\th_1) P_1  H_{\rm eff} (\paramh_j)  P_1 \big)\ket{\psi}\Bigg]  \\
     &+ \frac{1}{N^2}\Var_{\al_1\sim\DC(0,r)}\Bigg[\sum_{j=1}^N \sin(g_1(\paramh_j)\th_1)\cos(g_1(\paramh_j)\th_1) \expval{ i[P_M,  H(\paramh_j)] } {\psi} \Bigg]\label{eq:sincos_term_todrop}\\
     \geq &\frac{1}{N^2}\Var_{\al_1\sim\DC(0,r)}\Bigg[\sum_{j=1}^N \bra{\psi}\bigg(\cos^2(g_1(\paramh_j)\th_1) H_{\rm eff}(\paramh_j)  \\      &+\sin^2(g_1(\paramh_j)\th_1) P_1 H_{\rm eff}(\paramh_j) P_1\bigg) \ket{\psi}\Bigg] \label{eq:lowerbound_sincos_out}
\end{align}
where we expanded the gate $U_1$ using that $U_i(g_i(\paramh_j)\al_i) = e^{-i g_i(\paramh_j)\al_i P_i} = \cos(g_i(\paramh_j)\al_i)\1  -i\sin(g_i(\paramh_j)\al_i)P_i$ similarly to what we did in Eq.~\eqref{eq:expansion_in_sin_cos}. Furthermore, we used that the term that proportional to $\sin(g_1(\paramh_j)\th_1)\cos(g_1(\paramh_j)\th_1)$ is orthogonal to the terms $\sin^2(g_1(\paramh_j)\th_1),\,\cos^2(g_1(\paramh_j)\th_1)$. Therefore, the covariance between the two is zero. In the first inequality in Eq.~\eqref{eq:lowerbound_sincos_out}, we use that the variance of a function is always positive. Therefore we can drop the variance in Eq.~\eqref{eq:sincos_term_todrop}, as it is positive.

We now define the function $h(r, \paramh_j,\paramh_k)$ 
\begin{align}
    h(\paramh_j,\paramh_k) = \frac{1}{4}\left(\sinc[2\left(g_1(\paramh_j) - g_1(\paramh_k)\right)r]+ \sinc[2\left(g_1(\paramh_j) + g_1(\paramh_k)\right)r] -\frac{1}{2}\sinc[2g_1(\paramh_j)r]\sinc[2g_1(\paramh_k)r] \right)\label{eq:h_function}
\end{align}
which corresponds to the respective variances and covariances that we need to compute in Eq.~\eqref{eq:lowerbound_sincos_out}. When obvious by context, we are going to skip the dependence on $r$. In particular, $h(\paramh_j, \paramh_k)$ relates to these functions as 
\begin{align}
   h(\paramh_j,\paramh_k) = 
       &\Covar_{\al_1\sim\DC(0,r)}[\sin^2(g_1(\paramh_j)\th_1),\sin^2(g_1(\paramh_k)\th_1)]   \\=
   &\Covar_{\al_1\sim\DC(0,r)}[\cos^2(g_1(\paramh_j)\th_1),\cos^2(g_1(\paramh_k)\th_1)] \\ =& 
    - \Covar_{\al_1\sim\DC(0,r)}[\cos^2(g_1(\paramh_j)\th_1),\sin^2(g_1(\paramh_j)\th_1)]
  \label{eq:values_of_h_as_covars}
\end{align}
where we define the covariance in the standard way $\Covar_{\al\sim\DC}[\varphi_1(\al)\varphi_2(\al)] = \Ebb_{\al\sim\DC}[\varphi_1(\al)\varphi_2(\al)] - \Ebb_{\al\sim\DC}[\varphi_1(\al)]\Ebb_{\al\sim\DC}[\varphi_2(\al)]$.

Using these results, we can recover Eq.~\eqref{eq:lowerbound_sincos_out} and use the new introduced $ h(\paramh_j,\paramh_k)$ to compute the variance. We start by rewriting the variance as a sum of variances and covariances
\begin{align}
   &\frac{1}{N^2}\Var_{\al_1\sim\DC(0,r)}[\sum_{j=1}^N \bra{\psi}\cos^2(g_1(\paramh_j)\th_1)   H_{\rm eff}(\paramh_j)  +\sin^2(g_1(\paramh_j)\th_1) P_1   H_{\rm eff}(\paramh_j)P_1  \ket{\psi}] \\
    =&\frac{1}{N^2}\sum_{j,k}^N\Big(\Covar_{\al_1\sim\DC(0,r)}[\cos^2(g_1(\paramh_j)\th_1),\cos^2(g_1(\paramh_k)\th_1)]\expval{  H_{\rm eff}(\paramh_j) }{\psi}\expval{  H_{\rm eff}(\paramh_k) }{\psi}\\
    &+\Covar_{\al_1\sim\DC(0,r)}[\sin^2(g_1(\paramh_j)\th_1),\sin^2(g_1(\paramh_k)\th_1)]\expval{  P_1 H_{\rm eff}(\paramh_j)P_1  }{\psi}\expval{  P_1 H_{\rm eff}(\paramh_k)P_1  }{\psi}\\
    &+ \Covar_{\al_1\sim\DC(0,r)}[\sin^2(g_1(\paramh_j)\th_1),\cos^2(g_1(\paramh_k)\th_1)]\expval{  P_1 H_{\rm eff}(\paramh_j)P_1  }{\psi}\expval{  H_{\rm eff}(\paramh_k) }{\psi}\\
    &+ \Covar_{\al_1\sim\DC(0,r)}[\cos^2(g_1(\paramh_j)\th_1),\sin^2(g_1(\paramh_j)\th_1)]\expval{  H_{\rm eff}(\paramh_k) }{\psi}\expval{  P_1 H_{\rm eff}(\paramh_k)P_1  }{\psi}\Big)\\
    =&\frac{1}{N^2}\sum_{j,k}^N h(\paramh_j,\paramh_k)\left( \expval{  H_{\rm eff}(\paramh_j) }{\psi} - \expval{  P_1 H_{\rm eff}(\paramh_j) P_1  }{\psi} \right)\left( \expval{  H_{\rm eff}(\paramh_k) }{\psi} - \expval{  P_1 H_{\rm eff}(\paramh_k) P_1  }{\psi} \right)
    \label{eq:final_calcul_for_lowerbound}
\end{align}
where in the first equality we decomposed the variance of a sum into its covariances, i.e. $\Var[\varphi_1 + \varphi_2] = \Covar[\varphi_1,\varphi_1] + \Covar[\varphi_2,\varphi_2] + 2 \Covar[\varphi_1,\varphi_2]$, using the notation that $\Covar[\varphi_1,\varphi_1] = \Var[\varphi_1]$. In the second equality we used Eq.~\eqref{eq:values_of_h_as_covars}, to simplify the result. Finally we use $\cdot$ to emphasize the product of the two terms in the parenthesis due to the line break.

By recovering Eq.~\eqref{eq:lowerbound_variance}, and using the lower bound derived in Eq.~\eqref{eq:lowerbound_sincos_out} and the result of this term in Eq.~\eqref{eq:final_calcul_for_lowerbound}, we can finally derive a strict lower bound to the variance of the loss function that reads as follows
\begin{align}
    \Var_{\alv\sim\DC(\vec{0},r)}[\LC(\alv)]\geq & \frac{1}{N^2}\sum_{j,k}^N h(\paramh_j,\paramh_k) \expval{  \left (H_{\rm eff}(\paramh_j) - P_1 H_{\rm eff}(\paramh_j) P_1 \right) }{\psi} \label{eq:lower_bound_variance_general}\\
    &\cdot \expval{  \left(H_{\rm eff}(\paramh_k) - 
 P_1 H_{\rm eff}(\paramh_k) P_1 \right)  }{\psi} 
\end{align}

Finally, we want to make sure that this lower-bound is positive, and thus meaningful. 

\subsection{Sufficient conditions for the bound to be non-trivial}\label{sec:sufficient_condition_bound_meaningful}
Here we derive sufficient conditions for the bound to be meaningful. We start by analyzing the term $\expval{ \left (H_{\rm eff}(\paramh_j) - P_1 H_{\rm eff}(\paramh_j) P_1 \right)}{\psi}$. In particular, we want to leverage the known structure of $H_{\rm eff}(\paramh_j)$ to ensure that this term is always negative for all $j$. 

We first focus on all the terms with $j < N$. For these terms, the parameters have already been trained around the corresponding solution. As a consequence, we can invoke our assumption that the variational state has a sufficiently large overlap with the ground state of $H_{\rm eff}(\paramh_j)$, which in turn guarantees the negativity of the above expression, as we discuss now.

In contrast, when we add a new term to the loss function, i.e., the contribution with $j = N$, we do not yet have a guarantee on this overlap. For this last term we must instead rely on additional properties, such as the continuity of the encoding function and the smooth dependence of the optimal parameters on $\paramh_j$, to control the sign of the contribution. Note that the case $j = N$ always refers to the newest term introduced into the cost function at a given stage of the iterative procedure.

\subsubsection{Negativity of the expected values for \texorpdfstring{$j<N$}{j<N}}

We pont out that $H_{\rm eff}(\paramh_j)$ in Eq.~\eqref{eq:expectedval_hj} is a convex combination of rotated Hamiltonians, since the parameters $k_{+}^{(l,j)} + k_{-}^{(l,j)} = 1,\, \forall\, (l,j)$ and the differences in the Hamiltonians depend on unitary matrices exclusively. Furthermore, there is clearly one dominant term out, which is $K_+^{(j)} = \prod_{l=1}^{M-1} k_{+}^{(l,j)}$. This term is the largest for small values of $r$. Indeed, recalling Eq.~\eqref{eq:deffinition_kp} and by with a Taylor expansion, one can see that $k_{+}^{(l,j)}\sim 1 - \frac{g_l^2(\paramh_j)r^2}{12} + \order{r^4}$, $k_{-}^{(l,j)}\sim \frac{g_l^2(\paramh_j)r^2}{12}$. This means that around zero, we expect the term $K_+^{(j)}$ to be dominant. In the next lines we formalize this argument and find strict bounds on the value of $r$, such that the term $\expval{\widetilde{V}_1 \left (H_{\rm eff}(\paramh_j) - P_1 H_{\rm eff}(\paramh_j) P_1 \right)\widetilde{V}_1^\dagger}{\psi}$ is always negative. We start by rewriting this term as a sum of the terms depending on $K_+^{(j)}$ and those that do not
\begin{align}
    \expval{  \left (H_{\rm eff}(\paramh_j) - P_1 H_{\rm eff}(\paramh_j) P_1 \right) }{\psi} =& K_+^{(j)} \expval{U(\thv^*,\paramh_j) H(\paramh_j) U^\dagger (\thv^*,\paramh_j) }{\psi} \\
    &-  K_+^{(j)} \expval{  P_1   U(\thv^*,\paramh_j) H(\paramh_j) U^\dagger (\thv^*,\paramh_j)   P_1   }{\psi}\\
    &+\expval{   \widetilde{V}_1^\dagger \sum_{\vec{z}\neq \{+\}^{\otimes M}}\bigcirc_{l=2}^M k_{z_l}^{(l,j)}\mathcal{P}_{z_l}^{(l)} (H(\paramh_j)) \widetilde{V}_1 }{\psi}\\
    &- \expval{   P_1 \widetilde{V}_1^\dagger \sum_{\vec{z}\neq \{+\}^{\otimes M}}\bigcirc_{l=2}^M k_{z_l}^{(l,j)}\mathcal{P}_{z_l}^{(l)} (H(\paramh_j)) \widetilde{V}_1 P_1 }{\psi}\\
    =&  K_+^{(j)}\bra{\psi}[U(\thv^*,\paramh_j) H(\paramh_j) U^\dagger (\thv^*,\paramh_j) \\
    &-  P_1   U(\thv^*,\paramh_j) H(\paramh_j) U^\dagger (\thv^*,\paramh_j)   P_1  . ]\ket{\psi}  +\xi\label{eq:to_upperbound_xi}
\end{align}
In the first equality we used the definition of $H_{\rm eff}$ introduced in Eq.~\eqref{eq:expectedval_hj}. In the second equality we introduced $\xi$ a quantity that contains all the previous terms. As discussed, $H_{\rm eff}(\paramh_j)$ can be understood as a convex combination of rotated versions of the Hamiltonian $H(\paramh_j)$. Therefore, $\xi$ is upper bounded by the norm of $H(\paramh_j)$. This can be seen by applying H\"older's inequality as follows
\begin{align}
    |\xi|=&\left|\sum_{\vec{z}\neq \{+\}^{\otimes M}} \expval{ \widetilde{V}_1 ^\dagger \bigcirc_{l=2}^M k_{z_l}^{(l,j)}\mathcal{P}_{z_l}^{(l)} (H(\paramh_j))\widetilde{V}_1}{\psi} - \expval{  P_1 \widetilde{V}_1^\dagger\bigcirc_{l=2}^M k_{z_l}^{(l,j)}\mathcal{P}_{z_l}^{(l)} (H(\paramh_j)) \widetilde{V}_1 P_1}{\psi}\right|\\
    \leq&\sum_{\vec{z}\neq \{+\}^{\otimes M}} \prod_{l=2}^M k_{z_l}^{(l,j)} \max_{\vec{z}'\in\{+,-\}^{\otimes M}}\| \widetilde{V}_1^\dagger\bigcirc_{k=2}^M \mathcal{P}_{z'_k}^{(k)} (H(\paramh_j))\widetilde{V}_1 -  P_1\widetilde{V}_1^\dagger \bigcirc_{k=2}^M \mathcal{P}_{z_k}^{(k)} (H(\paramh_j)) \widetilde{V}_1 P_1\|_{\infty}\\
    \leq & \|H(\paramh_j)\|_{s} \sum_{\vec{z}\neq \{+\}^{\otimes M}} \prod_{l=2}^M k_{z_l}^{(l,j)} = (1-K_+^{(j)}) \|H(\paramh_j)\|_{s} \label{eq:bound_with_Hs_firstime}
\end{align}
where we define the semi-norm $\|H\|_{s}:=\lambda_{\max} - \lambda_{\min}$ (see Ref.~\cite{boixo2007generalized}). Using this result, we can upper bound equation Eq.~\eqref{eq:to_upperbound_xi}
\begin{align}
     \expval{ \left (H_{\rm eff}(\paramh_j) - P_1 H_{\rm eff}(\paramh_j) P_1 \right)}{\psi} \leq& K_+^{(j)}\bra{\psi}[U^\dagger (\thv^*,\paramh_j) H(\paramh_j) U (\thv^*,\paramh_j) \\
    &- P_1 U^\dagger(\thv^*,\paramh_j) H(\paramh_j) U (\thv^*,\paramh_j) P_1 ]\ket{\psi}  + (1-K_+^{(j)})\|H(\paramh_j)\|_s\label{eq:h0_h1_difference}
\end{align}

Next, we focus on the term proportional to $K_+^{(j)}$. To analyze it, we recall that $\thv^*$ are the set of variational parameters obtained when optimizing the previous loss landscape with $N-1$ terms. Therefore, if we assume that we trained well enough for all $j\neq N$, $\thv^*$ is a \textit{good enough} solution. Using this, we can write $U(\thv^*,\paramh_j)\ket{\psi} = \sqrt{1-\epsilon^2}\ket{{\rm gs}_{j}} + \epsilon \ket{{\rm gs}_{j}^\perp}$, where $\ket{{\rm gs}_{j}}$ is the ground state of $H(\paramh_j)$ and $\ket{{\rm gs}_{j}^\perp}$ is orthogonal to the ground state of $H(\paramh_j)$. Trivially, we can impose that $|\expval{P_1}{\psi}|^2=0$, as it just means that the first generator of the circuit acts non-trivially on the initial input state. This also implies that $\ket{\psi},P_1\ket{\psi}$ are orthogonal, and therefore we can write $U (\thv^*,\paramh_j)P_1\ket{\psi} = \epsilon\ket{{\rm gs}_{j}} -
 \sqrt{1-\epsilon^2}\ket{{\rm gs_{j}^\perp}}$, then we can upper bound the following term
\begin{align}
    \bra{\psi}[U^\dagger(\thv^*,\paramh_j) H(\paramh_j) U (\thv^*,\paramh_j)- P_1  U^\dagger(\thv^*,\paramh_j) H(\paramh_j) U (\thv^*,\paramh_j)  P_1  ]\ket{\psi}=& (1-\epsilon^2)\lambda_{\min} + (\lambda_{\min}-\Delta_{\rm gap})\epsilon^2 \\
    &- \epsilon^2\lambda_{\min} - (1-\epsilon^2)(\lambda_{\min}-\Delta_{\rm gap})\label{eq:worst_case_perp_state}\\
    =& (1-2\epsilon^2)\Delta_{\rm gap}^{(j)}\label{eq:exval_with_epsilon_val}
\end{align}
where we assume that the observable $H(\paramh_j)$ has a gap between the ground state and the first excited state, that we denoted via $\Delta_{\rm gap} = \lambda_{\min} - \lambda_1\leq0$, where $\lambda_{1}$ is the first excited level. Note that, instead of obtaining that $P_1$ maps the state $\ket{\psi}$ to a state of the form $U (\thv^*,\paramh_j)P_1\ket{\psi} = \epsilon\ket{{\rm gs}_{j}} -
 \sqrt{1-\epsilon^2}\ket{{\rm gs}_{j}^\perp}$, we could also obtain a state that is outside of the subspace spanned by $\left\{ \ket{{\rm gs}_j},\ket{gs_{j}^\perp} \right\}$. In such case, we would obtain a smaller value in Eq.~\eqref{eq:worst_case_perp_state}. Therefore, we just assume the worst case scenario.

Finally, we can recover Eq.~\eqref{eq:h0_h1_difference} and find a condition on $K_{+}^{(j)}$ and $\epsilon$ such that the quantity is always negative
\begin{align}\label{eq:lower_bound_expval_jleqN}
     \expval{ \left (H_{\rm eff}(\paramh_j) - P_1 H_{\rm eff}(\paramh_j) P_1 \right)}{\psi} \leq K_+^{(j)}(1-2\epsilon^2)\Delta_{\rm gap}^{(j)} + (1-K_+^{(j)})\|H(\paramh_j)\|_s.
\end{align}
We then need $\epsilon^2\leq 1/2$, and a  value of $K_{+}^{(j)}$ that fulfills the following:
\begin{align}\label{eq:lowerbound_on_K}
    K_{+}^{(j)} > \frac{\|H(\paramh_j)\|_s}{\|H(\paramh_j)\|_s - (1-2\epsilon^2)\Delta_{\rm gap}^{(j)}}.
\end{align}
Finding a condition for $r$ such that the condition on $K_+^{(j)}$ in this equation is fulfilled is not trivial. However, we can leverage the Taylor Reminder ( see Theorem~\ref{thm:taylor}) to derive it. To see this, we start with the definition of $ K_{+}^{(j)}(r)$ where we explicitly write the $r$ dependence:
\begin{align}
    K_{+}^{(j)}(r) =&  \prod_{l=2}^{M}k_{+}^{(l,j)}(r)\\
    \geq & \prod_{l=2}^{M}\left( 1-\frac{g_l^2(\paramh_j)r^2}{3}\right)\\
    \geq &\min_{l\in\{2,...,M\}}  \left( 1-\frac{g_l^2(\paramh_j)r^2}{3}\right)^{M-1}\\
    \geq & 1-\frac{g_{\max}^2(\paramh_j)r^2(M-1)}{3} \label{eq:lowerbound_on_K_wrt_r}.
\end{align}
In the first inequality we used the Taylor Reminder Theorem in the function $k_{+}^{(l,j)}(r)$ defined in Eq~\eqref{eq:deffinition_kp}, in the second equality we used that the product of $M-1$ terms is minimized by taking the power $M-1$ of the smallest one, and finally in the last inequality we used Bernoulli's identity; $(1+x)^k\geq 1-xk$ if $ x\geq0$. Also, we introduced the notation $g_{\max}(\paramh_j) := \max_l g_{l}(\paramh_j)$.

We can now input this condition in Eq.~\eqref{eq:lowerbound_on_K} to find a sufficient condition for it to be fulfilled
\begin{equation}\label{eq:lowerbound_on_K_and_r_via_H}
    K_{+}^{(j)} \geq  1-\frac{g_{\max}^2(\paramh_j)r^2(M-1)}{3}  > \frac{\|H(\paramh_j)\|_s}{\|H(\paramh_j)\|_s - (1-2\epsilon^2)\Delta_{\rm gap}^{(j)}},
\end{equation}
where the first inequality is due to the previous derivation in Eq.~\eqref{eq:lowerbound_on_K_wrt_r}, and the second inequality is imposed such that the condition in Eq.~\eqref{eq:lowerbound_on_K} is fulfilled.

Using the second inequality in Eq.~\eqref{eq:lowerbound_on_K_and_r_via_H} , we can easily derive a condition on $r$ such that Eq.~\eqref{eq:lowerbound_on_K} is satisfied. To do so, we simple isolate $r$ to obtain the following condition
\begin{align}
    1-\frac{g_{\max}^2(\paramh_j)r^2(M-1)}{3}  > \frac{\|H(\paramh_j)\|_s}{\|H(\paramh_j)\|_s - (1-2\epsilon^2)\Delta_{\rm gap}^{(j)}}\Rightarrow r< \frac{3}{g_{\max}^2(x_j)(M-1)}\frac{(1-2\epsilon^2)|\Delta_{\rm gap}^{(j)}|}{\|H(\paramh_j)\|_s + (1-2\epsilon^2)|\Delta_{\rm gap}^{(j)}|}.
\end{align}

Finally, if we want a condition that  works for all $j\leq N-1$, we need to find the tightest condition, i.e. 
\begin{align}\label{eq:condition_on_r_jleqN}
    r< \frac{3}{g_{\max}^2(M-1)}\frac{(1-2\epsilon^2)|\Delta_{\rm gap\, min}|}{\|H\|_{s,{\rm max}} + (1-2\epsilon^2)|\Delta_{\rm gap\, min}|}.
\end{align} 
Here we introduced some notation to ease the reading: $\|H\|_{s,{\rm max}} := \max_j \|H(\paramh_j)\|_{s}$, and $|\Delta_{\rm gap\, min}| := \min_j |\Delta_{\rm gap}^{(j)}|$.

We have now found a sufficient condition to ensure that the terms $j\neq N$ in Eq.~\eqref{eq:lower_bound_variance_general} are negative, and thus the lower-bound from such equation is not trivial. However, we need to extend this condition to the only term we have not considered yet, $j=N$. 

\subsubsection{Extending the condition to \texorpdfstring{$j=N$}{j=N}}

Finally, we need to extend this condition to $j = N$. Indeed, in this setting we cannot ensure that we have a \textit{good enough} fidelity with the ground state. To do so, we start from Eq.~\eqref{eq:h0_h1_difference}, as the analysis up to that point still applies. In particular, we bound the following term
\begin{align}
    \bra{\psi}[U^\dagger(\thv^*,\paramh_N) H(\paramh_N) U (\thv^*,\paramh_N)- P_1  U^\dagger(\thv^*,\paramh_N) H(\paramh_N) U (\thv^*,\paramh_N)  P_1  ]\ket{\psi}.
\end{align}

To do so, we start by adding and subtracting the following term $\bra{\psi}[U^\dagger(\thv^*,\paramh_{N-1}) H(\paramh_N) U (\thv^*,\paramh_{N-1})- P_1  U^\dagger(\thv^*,\paramh_{N-1}) H(\paramh_N) U (\thv^*,\paramh_{N-1})  P_1  ]\ket{\psi}$ to obtain
\begin{align}
    &\bra{\psi}[U^\dagger(\thv^*,\paramh_N) H(\paramh_N) U (\thv^*,\paramh_N)- P_1  U^\dagger(\thv^*,\paramh_N) H(\paramh_N) U (\thv^*,\paramh_N)  P_1  ]\ket{\psi}\\ \label{eq:togrpoup_in_jeqN2}   & =\bra{\psi}[U^\dagger(\thv^*,\paramh_N) H(\paramh_N) U (\thv^*,\paramh_N)- P_1  U^\dagger(\thv^*,\paramh_N) H(\paramh_N) U (\thv^*,\paramh_N)  P_1  ]\ket{\psi} \\
    &+ \bra{\psi}[U^\dagger(\thv^*,\paramh_{N-1}) H(\paramh_N) U (\thv^*,\paramh_{N-1})- P_1  U^\dagger(\thv^*,\paramh_{N-1}) H(\paramh_N) U (\thv^*,\paramh_{N-1})  P_1  ]\ket{\psi}\\
   & -\bra{\psi}[U^\dagger(\thv^*,\paramh_{N-1}) H(\paramh_N) U (\thv^*,\paramh_{N-1})- P_1  U^\dagger(\thv^*,\paramh_{N-1}) H(\paramh_N) U (\thv^*,\paramh_{N-1})  P_1  ]\ket{\psi} \label{eq:togrpoup_in_jeqN1}\\
   \leq & \bra{\psi}[U^\dagger(\thv^*,\paramh_{N-1}) H(\paramh_N) U (\thv^*,\paramh_{N-1})- P_1  U^\dagger(\thv^*,\paramh_{N-1}) H(\paramh_N) U (\thv^*,\paramh_{N-1})  P_1  ]\ket{\psi} \\
   &+ |\paramh_N-\paramh_{N-1}|\partial_{\paramh}(\bra{\psi}[U^\dagger(\thv^*,\paramh) H(\paramh_N) U (\thv^*,\paramh)- P_1  U^\dagger(\thv^*,\paramh) H(\paramh_N) U (\thv^*,\paramh)  P_1  ]\ket{\psi} ) \\
   \leq & \bra{\psi}[U^\dagger(\thv^*,\paramh_{N-1}) H(\paramh_N) U (\thv^*,\paramh_{N-1})- P_1  U^\dagger(\thv^*,\paramh_{N-1}) H(\paramh_N) U (\thv^*,\paramh_{N-1})  P_1  ]\ket{\psi}  \\
   &+ |\paramh_N-\paramh_{N-1}|\sum_i\partial_{g_i}(\bra{\psi}[U^\dagger(\thv^*,\paramh_N) H(\paramh_N) U (\thv^*,\paramh_N)- P_1  U^\dagger(\thv^*,\paramh_N) H(\paramh_N) U (\thv^*,\paramh_N)  P_1  ]\ket{\psi} )\partial_\paramh g_i(\paramh) .\label{eq:to_recover_jeqN_prior_to_derivative}
\end{align}
In the first inequality we grouped Eqs~(\ref{eq:togrpoup_in_jeqN2},\ref{eq:togrpoup_in_jeqN1}) to leverage that an expectation value parametrized with a circuit of the form of Eq.~\eqref{eq:circuit_appendix} is Lipschitz continuous (as long as the function $g$ is). Indeed, a function of this type is always differentiable and with a bounded derivative. In the final inequality, we apply the chain rule by first differentiating the expectation value with respect to each function $g_i$, and then differentiating each $g_i$ with respect to $\paramh$. Next, we compute the derivative of the expected value with respect to $g$, to further lower bound this quantity. We assume that at most $M$ gates in the ansatz are parametrized by $\paramh$, with parameters $\{g_i(\paramh)\}_{i=1}^M$. Then, we can write
\begin{align}
   & \sum_i^{M}\partial_{g_i}(\bra{\psi}[U^\dagger(\thv^*,\paramh_N) H(\paramh_N) U (\thv^*,\paramh_N)- P_1  U^\dagger(\thv^*,\paramh_N) H(\paramh_N) U (\thv^*,\paramh_N)  P_1  ]\ket{\psi} )\partial_\paramh g_i(\paramh)\\
    \leq &M \max_i| \partial_{g_i}\bra{\psi}[U^\dagger(\thv^*,\paramh_N) H(\paramh_N) U (\thv^*,\paramh_N)- P_1  U^\dagger(\thv^*,\paramh_N) H(\paramh_N) U (\thv^*,\paramh_N)  P_1  ]\ket{\psi} \partial_\paramh  g_i(\paramh)|\\
    \leq & M \max_i |\partial_{g_i}\bra{\psi}[U^\dagger(\thv^*,\paramh_N) H(\paramh_N) U (\thv^*,\paramh_N)- P_1  U^\dagger(\thv^*,\paramh_N) H(\paramh_N) U (\thv^*,\paramh_N)  P_1  ]\ket{\psi}| \max_l|\partial_\paramh  g_l(\paramh)| \\
    \leq & M \|H(\paramh_N)\|_s\max_l|\partial_\paramh  g_l(\paramh)|.
\end{align}
In the first inequality we bound the sum by $M$ times the maximum term. In the second we use $\max_i |a_i b_i| \leq (\max_i |a_i|)(\max_i |b_i|)$. Finally, in the last inequality we use that, for circuits of the form in Eq.~\eqref{eq:circuit_appendix}, the derivatives with respect to the gate parameters $g_i$ can be computed via the parameter-shift rule, and are bounded in magnitude by the seminorm $\|H(\paramh_N)\|_s$, as shown in Eq.~\eqref{eq:bound_with_Hs_firstime}.

We can now return to Eq.~\eqref{eq:to_recover_jeqN_prior_to_derivative}, and substitute the derivative by the above bound. Thus, we obtain
\begin{align}
    &\bra{\psi}[U^\dagger(\thv^*,\paramh_N) H(\paramh_N) U (\thv^*,\paramh_N)- P_1  U^\dagger(\thv^*,\paramh_N) H(\paramh_N) U (\thv^*,\paramh_N)  P_1  ]\ket{\psi}\\
    \leq &\bra{\psi}[U^\dagger(\thv^*,\paramh_{N-1}) H(\paramh_N) U (\thv^*,\paramh_{N-1})- P_1  U^\dagger(\thv^*,\paramh_{N-1}) H(\paramh_N) U (\thv^*,\paramh_{N-1})  P_1  ]\ket{\psi}\\
    &+|\paramh_N-\paramh_{N-1}|M \max_l|\partial_\paramh g_l(\paramh)| \|H(\paramh_N)\|_s \label{eq:bound_the_derivative_of_expval}\\
    = & \bra{\psi}[U^\dagger(\thv^*,\paramh_{N-1}) H(\paramh_{N-1}) U (\thv^*,\paramh_{N-1})- P_1  U^\dagger(\thv^*,\paramh_{N-1}) H(\paramh_{N-1}) U (\thv^*,\paramh_{N-1})  P_1  ]\ket{\psi}\label{eq:sepparate_h_in_old_and_sum1} \\
    &+ \bra{\psi}[U^\dagger(\thv^*,\paramh_{N-1}) \{H(\paramh_N)-H(\paramh_{N-1})\} U (\thv^*,\paramh_{N-1})\\
    &- P_1  U^\dagger(\thv^*,\paramh_{N-1}) \{H(\paramh_N)-H(\paramh_{N-1})\} U (\thv^*,\paramh_{N-1})  P_1  ]\ket{\psi} \label{eq:sepparate_h_in_old_and_sum2}\\
    &+|\paramh_N-\paramh_{N-1}|M \max_l|\partial_\paramh g_l(\paramh)| \|H(\paramh_N)\|_s,
\end{align}
where we expanded $H(\paramh_N) = H(\paramh_{N-1})+H(\paramh_N)- H(\paramh_{N-1})$. Then, using that $H(\paramh_N)-H(\paramh_{N-1}) = \Delta\paramh H_1$ we can upper bound the remaining terms as follows, starting by Eq.~\eqref{eq:sepparate_h_in_old_and_sum2}
\begin{align}
     &\bra{\psi}[U^\dagger(\thv^*,\paramh_{N-1}) \{H(\paramh_N)-H(\paramh_{N-1})\} U (\thv^*,\paramh_{N-1})\\
     &- P_1  U^\dagger(\thv^*,\paramh_{N-1}) \{H(\paramh_N)-H(\paramh_{N-1})\} U (\thv^*,\paramh_{N-1})  P_1  ]\ket{\psi}\\
     =&  \bra{\psi}[U^\dagger(\thv^*,\paramh_{N-1})  \Delta\paramh H_1 U (\thv^*,\paramh_{N-1})- P_1  U^\dagger(\thv^*,\paramh_{N-1})  \Delta\paramh H_1 U (\thv^*,\paramh_{N-1})  P_1  ]\ket{\psi}\\
     \leq & |\Delta\paramh|\|H_1\|_s\label{eq:sepparate_h_in_old_and_sum2_bound}
\end{align}
where we bounded the difference in operators in the same way that we did previously as in Eq.~\eqref{eq:bound_with_Hs_firstime}.

Next we focus on Eq.~\eqref{eq:sepparate_h_in_old_and_sum1}
\begin{align}
    \bra{\psi}[U^\dagger(\thv^*,\paramh_{N-1}) H(\paramh_{N-1}) U (\thv^*,\paramh_{N-1})- P_1  U^\dagger(\thv^*,\paramh_{N-1}) H(\paramh_{N-1}) U (\thv^*,\paramh_{N-1})  P_1  ]\ket{\psi}\leq (1-2\epsilon^2)\Delta_{\rm gap}^{(N-1)}\label{eq:sepparate_h_in_old_and_sum1_bound}
\end{align}
where we used the previously derived Eq.~\eqref{eq:exval_with_epsilon_val}.

Finally we can recover Eq.~\eqref{eq:sepparate_h_in_old_and_sum1} and apply the two bounds derived now in Eqs.~(\ref{eq:sepparate_h_in_old_and_sum2_bound},\ref{eq:sepparate_h_in_old_and_sum1_bound}) to upper bound it by 
\begin{align}
 &\bra{\psi}[U^\dagger(\thv^*,\paramh_N) H(\paramh_N) U (\thv^*,\paramh_N)- P_1  U^\dagger(\thv^*,\paramh_N) H(\paramh_N) U (\thv^*,\paramh_N)  P_1  ]\ket{\psi}\\
    \leq& (1-2\epsilon^2)\Delta_{\rm gap}^{(N-1)} + |\Delta\paramh|M \max_l|\partial_\paramh g_l(\paramh)| \|H(\paramh_N)\|_s + |\Delta\paramh|\|H_1\|_s|\\
    \leq & (1-2\epsilon^2)\Delta_{\rm gap}^{(N-1)} + |\Delta \paramh|M \max_l|\partial_\paramh g_l(\paramh)| (\|H(\paramh_{N-1})\|_s + \|H_1\|_s|\Delta\paramh|) + |\Delta\paramh|\|H_1\|_s.
\label{eq:bound_expval_jeqN}
\end{align}
In the last inequality we expressed $H(\paramh_N) = H(\paramh_{N-1})+\Delta\paramh H_1$ and applied triangular inequality to split the term into two: $\|H(\paramh_N)\| \leq \|H(\paramh_N+1)\|_s+|\Delta\paramh| \|H_1\|_s$. Note that this is not straightforward as $\|\cdot\|_s$ is not a norm. However, we can show that $\lambda_{\max}[H(\paramh_N)]< \lambda_{\max}[ H(\paramh_{N-1})] + \lambda_{\max}[\Delta\paramh H_1]$ and conversely  $\lambda_{\min}[H(\paramh_N)]>\lambda_{\min}[ H(\paramh_{N-1})] + \lambda_{\min}[\Delta\paramh H_1]$. With this, we can write 
\begin{align}
    |\lambda_{\max}[H(\paramh_N)]-\lambda_{\min}[H(\paramh_N)]|\leq& | \lambda_{\max}[ H(\paramh_{N-1})] + \lambda_{\max}[\Delta\paramh H_1] - \lambda_{\min}[ H(\paramh_{N-1})] - \lambda_{\min}[\Delta\paramh H_1] |\\
    = & | (\lambda_{\max}[ H(\paramh_{N-1})] - \lambda_{\min}[ H(\paramh_{N-1})])+ (\lambda_{\max}[\Delta\paramh H_1]  - \lambda_{\min}[\Delta\paramh H_1]) |\\
    \leq &  |\lambda_{\max}[ H(\paramh_{N-1})] - \lambda_{\min}[ H(\paramh_{N-1})]|+ |\lambda_{\max}[\Delta\paramh H_1]  - \lambda_{\min}[\Delta\paramh H_1]|\\
    = &\| H(\paramh_{N-1})\|_s + \|\Delta\paramh H_1\|_s \leq \| H(\paramh_{N-1})\|_s + |\Delta\paramh| \|H_1\|_s,
\end{align}
where in the first inequality we used that by definition $\lambda_{\max}[H(\paramh_N)]\geq \lambda_{\min}[H(\paramh_N)]$. 

We want to obtain a general condition that will work for an entire protocol. We can do so by taking the maximum/minimum over the different parameters at hand. Defining $ g_{\partial\max } := \max_l\partial_\paramh g_l(\paramh)$, the condition to fulfill is given by
\begin{gather}\label{eq:lower_bound_delta_paramh}
    (1-2\epsilon^2)\Delta_{\rm gap \, min} + |\Delta\paramh| M g_{\partial\max} (\|H\|_{s,\max} + \|H_1\|_{s,\max}|\Delta\paramh|)  + \|H_1\|_{s,\max}|\Delta\paramh|\leq 0\\
    \Downarrow\\
    |\Delta\paramh|<\frac{ \widetilde{\gamma} }{2} \left(\sqrt{\frac{(g_{\partial\max} \|H\|_{s,\max} M+\|H_1\|_{s,\max})^2+4 \Delta_{\rm gap \, min}  g_{\partial\max} M +\|H_1\|_{s,\max} \left(1-2 \epsilon
   ^2\right)}{(g_{\partial\max})^2 M^2 +\|H_1\|_{s,\max}^2}}-\frac{1}{g_{\partial\max} M}-\frac{\|H\|_{s,\max}}{\|H_1\|_{s,\max}}\right) ,
\end{gather}
where $\widetilde{\gamma}\in(0,1)$. We find a condition on $\Delta\paramh$ such that we can always ensure that the terms $\expval{  \left (H_{\rm eff}(\paramh_j) - P_1 H_{\rm eff}(\paramh_j) P_1 \right) }{\psi} $ in Eq.~\eqref{eq:lower_bound_variance_general} are always negative (and thus the product of the two is positive). 

Similarly to how we found $r$ in Eq.~\eqref{eq:condition_on_r_jleqN} $j\leq N-1$, we now find
\begin{align}\label{eq:condition_on_r_jeqN}
    r^2< \frac{3}{g_{\min}^2(M-1)}\frac{(1-\widetilde{\gamma})(1-2\epsilon^2)|\Delta_{\rm gap\, min}|}{\|H\|_{s,{\max}} + (1-\widetilde{\gamma})(1-2\epsilon^2)|\Delta_{\rm gap\, min}|}
\end{align} 

Note that Eq.~\eqref{eq:condition_on_r_jeqN} is simply a constrained version of Eq~\eqref{eq:condition_on_r_jleqN}, but it scales in exactly the same way with all relevant parameters, since $\widetilde{\gamma}$ is a constant. Because this condition yields a smaller admissible region in $r$ than the previous one, we will use it.

We have now shown that the expected values in Eq.~\eqref{eq:lower_bound_variance_general} are always negative within a certain region, and therefore the products over which we sum are always positive. This is almost sufficient to guarantee that the bound is strictly positive and hence non-trivial. In the next section, we analyze the function $h(\paramh_j,\paramh_k, r)$ in Eq.~\eqref{eq:lower_bound_variance_general} is positive. We now make the dependence on $r$ explicit, as it is the parameter that we study next.

\subsubsection{Positivity of the function \texorpdfstring{$h(\paramh_j,\paramh_k,r)$}{h(x_j,x_k,r)}}

Leveraging once again the Taylor Reminder Theorem~\ref{thm:taylor} we can show that
\begin{align}\label{eq:taylor_expand_h}
     h(\paramh_j,\paramh_k, r) = \frac{1}{4!}\partial_{r'}^4 h(\paramh_j,\paramh_k, r') \bigg|_{r' =0}r^4 + \frac{1}{6!}\partial_{r'}^6h(\paramh_j,\paramh_k, r')\bigg|_{r'=\nu\in[0,r]}r^6 
\end{align}
where we used that the 1st, 2nd and 3rd derivatives are zero, as well as the 5th derivative, which can be checked by direct calculation. We can find a lower bound on the function by considering the maximum (in $\nu$) absolute value of the $6$-th derivative with a negative sign up front. We do this by recalling the form of $h$ in Eq.~\eqref{eq:h_function} and then
\begin{align}
    \left| \partial_{r'}^6h(\paramh_j,\paramh_k, r')\bigg|_{r'=\nu\in[0,r]} \right|\leq &\frac{(2 g_{1}(\paramh_j))^6}{7} + \frac 12 (2 g_{1}(\paramh_j))^5 (2 g_{1}(\paramh_k)) + (2 g_{1}(\paramh_j))^4(2 g_{1}(\paramh_k))^2 + \frac 54 (2 g_{1}(\paramh_j))^3 (2 g_{1}(\paramh_k))^3 \\
    &+ (2 g_{1}(\paramh_k))^4(2 g_{1}(\paramh_j))^2 + \frac 12 (2 g_{1}(\paramh_k))^5 (2 g_{1}(\paramh_j)) +\frac{(2 g_{1}(\paramh_k))^6}{7}
    \\ 
    &+\frac 17 (2( g_{1}(\paramh_j)- g_{1}(\paramh_k)))^6+ \frac 17 (2( g_{1}(\paramh_j)+ g_{1}(\paramh_k)))^6    \\
    =& 2^6\Bigg( \frac{3  g_{1}(\paramh_j)^6}{7}+ \frac{ g_{1}(\paramh_j)^5  g_{1}(\paramh_k)}{2}+ \frac{37  g_{1}(\paramh_j)^4  g_{1}(\paramh_k)^2}{7}+ \frac{5 g_{1}(\paramh_j)^3 g_{1}(\paramh_k)^3}{4}\\
    &+ \frac{37  g_{1}(\paramh_j)^2  g_{1}(\paramh_k)^4}{7}+ \frac{ g_{1}(\paramh_j)  g_{1}(\paramh_k)^5}{7}+ \frac{ g_{1}(\paramh_j)  g_{1}(\paramh_k)^5}{2}+ \frac{3  g_{1}(\paramh_k)^6}{7}\Bigg) := h_6(\paramh_j,\paramh_k).
\end{align}
Here we computed the derivative using the chain rule and applied the well-known upper bound on the derivative of the $\sinc$ function, $|\partial_x^k\sinc(x)|<1/(1+k)$ (see Ref.~\cite{gronwall1920calculus} for details). To ease notation, we denote the resulting function by $h_6(\paramh_j,\paramh_k)$. 

Therefore, the only thing left to do is recover Eq~\eqref{eq:taylor_expand_h} and find the lower bound accordingly
\begin{align}
     h(\paramh_j,\paramh_k, r) \geq& \frac{1}{4!}\partial_{r'}^4 h(\paramh_j,\paramh_k, r') \bigg|_{r' =0}r^4 -\frac{1}{6!} h_6(\paramh_j,\paramh_k)r^6 \\
     =& \frac{4}{45}g_1^2(\paramh_j)g_1^2(\paramh_k) r^4 - h_6(\paramh_j,\paramh_k)r^6
\end{align}
by imposing this bound to be larger than zero we find that
\begin{gather}
    \frac{4}{45}g_1^2(\paramh_j)g_1^2(\paramh_k) r^4 - h_6(\paramh_j,\paramh_k)r^6>0\\
    \Downarrow\\
    r^2 <\frac{4 g_1^2(\paramh_j)g_1^2(\paramh_k)}{45 h_6(\paramh_j,\paramh_k)}\label{eq:second_condition_r_h}
\end{gather}

What remains is to combine all the conditions with the bound to obtain a final, concise expression.

\subsection{Final bound with all the necessary conditions.}

We have found two conditions on $r$, in Eqs~(\ref{eq:condition_on_r_jeqN},\ref{eq:second_condition_r_h}) respectively. We define a threshold $r_{\rm t}$ to be a range that fulfills both conditions, i.e. 
\begin{align}\label{eq:conditions_on_r}
    r_{\rm t}^2\leq \gamma \min\left\{\min_{j,k}\frac{4 g_1^2(\paramh_j)g_1^2(\paramh_k)}{45 h_6(\paramh_j,\paramh_k)}, \frac{3}{g_{\max}^2(M-1)}\frac{\widetilde{\gamma}(1-2\epsilon^2)|\Delta_{\rm gap\, min}|}{\|H\|_{s,{\rm max}} + (1-2\epsilon^2)|\Delta_{\rm gap\, min}|}\right\},
\end{align}
where again $\gamma\in(0,1)$. Note that both conditions are polynomial vanishing in the number of qubits under mild assumptions. Indeed, if the parameters of the problem are polynomial in the number of qubits, i.e. $g_{\max},M,\|H\|_{\rm max}\in\Theta({\rm poly}(n))$, $|\Delta_{\rm gap\, min}|\in\Theta(1/{\rm poly}(n))$, then the threshold conditions will also be vanishing polynomially at worst, i.e. $ r_{\rm t}\in\Theta(1/{\rm poly}(n))$.

To simplify the notation further down the line, we introduce the following definitions 
\begin{align}
    r_{\rm t_1} = & \gamma \min_{j,k}\frac{4 g_1^2(\paramh_j)g_1^2(\paramh_k)}{45 h_6(\paramh_j,\paramh_k)}\\
     r_{\rm t_2}^2 = & \gamma \frac{3}{g_{\min}^2(M-1)}\frac{\widetilde{\gamma}(1-2\epsilon^2)|\Delta_{\rm gap\, min}|}{\|H\|_{s,{\rm min}} + (1-2\epsilon^2)|\Delta_{\rm gap\, min}|}
\end{align}
denoting the two limiting thresholds for $r$ as $r_{\rm t_1},r_{\rm t_2}$.

Now we are equipped to recover Eq.~\eqref{eq:lower_bound_variance_general} and simplify the expression in the regime that $\alv\sim\DC(\vec{0},r_{\rm t})$
\begin{align}
    \Var_{\alv\sim\DC(\vec{0},r)}[\LC(\alv)]\geq & \frac{1}{N^2}\sum_{j,k}^N h(\paramh_j,\paramh_k) \expval{  \left (H_{\rm eff}(\paramh_j) - P_1 H_{\rm eff}(\paramh_j) P_1 \right) }{\psi} \\
    &\cdot \expval{  \left(H_{\rm eff}(\paramh_k) - 
 P_1 H_{\rm eff}(\paramh_k) P_1 \right)  }{\psi} \\
  \geq &\frac{1}{N^2}\min_{\paramh_{j'},\paramh_{k'}} h(\paramh_{j'},\paramh_{k'})\sum_{j,k}^N\expval{  \left (H_{\rm eff}(\paramh_j) - P_1 H_{\rm eff}(\paramh_j) P_1 \right) }{\psi}\\
  &\cdot \expval{  \left(H_{\rm eff}(\paramh_k) - 
 P_1 H_{\rm eff}(\paramh_k) P_1 \right) }{\psi} \\
 =&\frac{1}{N^2}\min_{\paramh_{j'},\paramh_{k'}} h(\paramh_{j'},\paramh_{k'}) \left( \sum_{j}^N \expval{  \left (H_{\rm eff}(\paramh_j) - P_1 H_{\rm eff}(\paramh_j) P_1 \right) }{\psi} \right)^2.\label{eq:lower_bound_var_toanalyze}
 \end{align}
In the first inequality, we just recovered Eq.~\eqref{eq:lower_bound_variance_general}. In the second inequality we used that all the terms in the sum are positive. Therefore,  the function $h$ can be pulled out of the sum, and we lower bound the expression by the minimum value o $h$ multiplied by the sum of the expectation-value terms. We also used that $\sum_{j,k}a_j a_k = \left(\sum_{j}a_j\right)^2$.

 Now we can analyze the two terms in Eq.~\eqref{eq:lower_bound_var_toanalyze} separately. First, we analyze the function $\min_{\paramh_{j'},\paramh_{k'}}h(\paramh_{j'},\paramh_{k'})$:
 \begin{align}
\min_{\paramh_{j'},\paramh_{k'}}h(\paramh_{j'},\paramh_{k'})\geq& r_{\rm t}^4\left( \frac{4}{45}g_1^2(\paramh_j)g_1^2(\paramh_k)  - h_6(\paramh_j,\paramh_k)r_{\rm t}^2\right)\\
\geq & r_{\rm t_{2}}^4\left( \frac{4}{45}g_1^2(\paramh_j)g_1^2(\paramh_k)  - h_6(\paramh_j,\paramh_k)r_{\rm t}^2\right)\in\Omega\left( 1/{\rm Poly(n)} \right),
 \end{align}
 where we used that $r_{\rm t}\in\Omega(1/{\rm Poly}(n))$ as all the terms in Eq.~\eqref{eq:conditions_on_r} are polynomial (and non zero). Furthermore, we used that $r_{\rm t_{2}} < r_{\rm t_{1}}$. We analyze further the term: 
\begin{align}\label{eq:scalability_of_the_hfunc}
     \left( \frac{4}{45}g_1^2(\paramh_j)g_1^2(\paramh_k)  - h_6(\paramh_j,\paramh_k)r_{\rm t}^2\right)
     \begin{cases}
         =(1-\gamma)\frac{4}{45}g_1^2(\paramh_j)g_1^2(\paramh_k)\in\Omega\left( 1/{\rm Poly}(n) \right) {\, \rm if \, }r_{\rm t} = r_{\rm t_1}\\
         >(1-\gamma)\frac{4}{45}g_1^2(\paramh_j)g_1^2(\paramh_k)\in\Omega\left( 1/{\rm Poly}(n) \right) {\, \rm if \, }r_{\rm t} = r_{\rm t_2},
     \end{cases}
 \end{align}
 where we used that if the threshold value is $r_{\rm t}=\gamma \min_{j,k}\frac{4g_1^2(\paramh_j)g_1^2(\paramh_k)}{45h_6(\paramh_j,\paramh_k)}$, then we can easily compute the value. Otherwise, it would imply that $r_{\rm t}$ is smaller than $\gamma \min_{j,k}\frac{4g_1^2(\paramh_j)g_1^2(\paramh_k)}{45h_6(\paramh_j,\paramh_k)}$, therefore, the quantity $\frac{4}{45}g_1^2(\paramh_j)g_1^2(\paramh_k)  - h_6(\paramh_j,\paramh_k)r_{\rm t}^2$ will be larger than in the previous case. 

 Let us now focus on the other term in Eq.~\eqref{eq:lower_bound_var_toanalyze}, i.e. the sum over the different expected values. Particularly, 
 \begin{align}
     \frac{1}{N^2}\left( \sum_{j}^N \expval{  \left (H_{\rm eff}(\paramh_j) - P_1 H_{\rm eff}(\paramh_j) P_1 \right) }{\psi} \right)^2 \geq& \frac{1}{N^2}\Bigg( \sum_{j=1}^{N-1} K_+^{(j)}(1-2\epsilon^2)\Delta_g^{(j)} + K_+^{(N)}\Big(\Delta_{\rm gap}^{(N-1)}(1-2\epsilon^2) \\
     &+ |\Delta\paramh| M \max_l|\partial_\paramh g_l(\paramh)| (\|H(\paramh_N)\|_s + |\Delta\paramh|\|H_1\|_s) + \|H_1\|_s|\Delta\paramh|\Big) \\
     &+ \sum_{j=1}^{N}(1-K_+^{(j)})\|H(\paramh_j)\|_s\Bigg)^2\\
     \geq & \frac{1}{N^2}\Bigg( \sum_{j=1}^{N-1} K_+^{(j)}(1-2\epsilon^2)\Delta_g^{(j)} + K_+^{(N)}\Big(\Delta_{\rm gap}^{(N-1)}(1-2\epsilon^2)\widetilde{\gamma} \\
     &+ \sum_{j=1}^{N}(1-K_+^{(j)})\|H(\paramh_j)\|_s\Bigg)^2\\
     \geq & \frac{1}{N^2}\Bigg( \sum_{j=1}^{N} K_+^{(j)}\widetilde{\gamma}(1-2\epsilon^2)\Delta_{\rm gap \, min} + \sum_{j=1}^{N}(1-K_+^{(j)})\|H(\paramh_j)\|_s\Bigg)^2\\
     \geq &  \Bigg(  \left( 1-\frac{g_{\max}^2(\paramh_j)r_{\rm t}^2(M-1)}{3}  \right)\widetilde{\gamma}(1-2\epsilon^2)\Delta_{\rm gap\, min} \\
     &+  \frac{g_{\max}^2(\paramh_j)r_{\rm t}^2(M-1)}{3} \|H_{\max}\|_s\Bigg)^2.
 \end{align}
In the first inequality we lower bounded this expected values by the bounds obtained before, specifically the expressions in Eqs.~(\ref{eq:lower_bound_expval_jleqN}, \ref{eq:bound_expval_jeqN}). Note that these equations provide upper bounds on negative quantities and therefore correspond to lower bounds on their absolute values. In the second inequality, we substitute the lower bound for $\Delta\paramh$ found in Eq.~\eqref{eq:lower_bound_delta_paramh}. In the second-to-last inequality we lower bounded the $\Delta_{\rm gap}^{(j)}$ by the minimum gap. Finally, in the last inequality we lower-bounded the $K_+$ with the lower-bound introduced in Eq.~\eqref{eq:lowerbound_on_K_wrt_r}. We can then analyze the resulting scaling as a function of $r_{\rm t}$exactly the same way as in Eq.~\eqref{eq:scalability_of_the_hfunc}, obtaining
\begin{align}\label{eq:scalability_of_the_expval}
       &\left|\widetilde{\gamma}(1-2\epsilon^2)\Delta_{\rm gap\, min} + \frac{g_{\max}^2(\paramh_j)r_{\rm t}^2(M-1)}{3}(\|H_{\max}\|_s- (1-\widetilde{\gamma})(1-2\epsilon^2)\Delta_{\rm gap\, min})\right|\\
       &\begin{cases}
       =  (1-\gamma)(1-\widetilde{\gamma})(1-2\epsilon^2)|\Delta_{\rm gap \, min}|\in \Omega(1/{\rm Poly}(n))\\
       \geq (1-\gamma)(1-\widetilde{\gamma})(1-2\epsilon^2)|\Delta_{\rm gap \, min}| \in \Omega(1/{\rm Poly}(n)).
       \end{cases}
 \end{align}
 We recall that  $\Delta_{\rm gap \, min}<0$, which is why we take the absolute value: the entire term is negative. Therefore, we can recover Eq.~\eqref{eq:lower_bound_var_toanalyze}, and lower bound it by polynomially vanishing terms. In particular, by combining the lower bounds on the two quantities that appear in the sum, as calculated in Eqs.~(\ref{eq:scalability_of_the_hfunc},\ref{eq:scalability_of_the_expval}), we obtain
\begin{align}
    \Var_{\alv\sim\DC(\vec{0},r)}[\LC(\alv)]\geq (1-\gamma)\frac{4 r_{{\rm t}_2}^4}{45}g_1^2(\paramh_j)g_1^2(\paramh_k)(1-\gamma)(1-\widetilde{\gamma})(1-2\epsilon^2)|\Delta_{\rm gap \, min}|\in\Omega\left( 1/{\rm Poly}(n) \right).
\end{align}

\end{proof}

\section{Synthetic shot-noise model}
\label{ap:shot_noise}

In this work we study training landscapes and scaling trends under finite-sampling conditions. On quantum hardware, expectation values are not accessed exactly, but rather they are estimated from a finite number of measurement shots, which introduces a stochastic error (shot noise). Simulating this effect is important if we want realistic loss fluctuations and fair comparisons across problem sizes and parameter counts at fixed measurement resources. However, explicitly sampling measurement outcomes at every optimization step adds a substantial computational overhead when we scale the number of qubits and repeat the procedure over many instances and training steps. To keep the study scalable while preserving the correct finite-shot statistics, we use a \emph{synthetic shot-noise model} that injects the corresponding fluctuations at the expectation-value level.

Concretely, our cost functions are expectation values of Hamiltonians that can be expressed as weighted sums of Pauli strings,
\begin{equation}
    H(\mathbf{x})=\sum_{\alpha=1}^{N_{\rm terms}} c_\alpha(\mathbf{x})\,P_\alpha,
    \qquad P_\alpha^2=\id,
\end{equation}
where $P_\alpha$ are Pauli operators, and $c_\alpha(x)$ are the Hamiltonian coefficients that may depend on $x$. Suppose we partition the index set into $G$ disjoint measurement groups
\(
\{\mathcal{I}_1,\ldots,\mathcal{I}_G\}
\)
(with $\mathcal{I}_g\cap \mathcal{I}_{g'}=\emptyset$ for $g\neq g'$ and $\cup_g \mathcal{I}_g=\{1,\dots,N_{\rm terms}\}$).
Define the grouped operators
\begin{equation}
    H_g(\mathbf{x}) := \sum_{\alpha\in \mathcal{I}_g} c_\alpha(\mathbf{x})\,P_\alpha,
    \qquad
    H(\mathbf{x})=\sum_{g=1}^G H_g(\mathbf{x}).
\end{equation}
For a (pure or mixed) state $\rho(\boldsymbol{\theta},\mathbf{x})$ the exact loss is
\begin{equation}
    \mathcal{L}(\boldsymbol{\theta};\mathbf{x})
:=\Tr\!\big[\rho(\boldsymbol{\theta},\mathbf{x})\,H(\mathbf{x})\big]
    =\sum_{g=1}^G \mathcal{L}_g(\boldsymbol{\theta};\mathbf{x}),
    \qquad
\mathcal{L}_g(\boldsymbol{\theta};\mathbf{x})
:=\Tr\!\big[\rho(\boldsymbol{\theta},\mathbf{x})\,H_g(\mathbf{x})\big], 
\end{equation}
where each group $g\in\{1,\dots,G\}$ collects Pauli terms that are estimated using the same measurement setting (i.e., the same choice of local measurement bases, or equivalently the same ``measurement circuit'' up to single-qubit basis changes). In a shot-based implementation, the contributions $\mathcal{L}_g$ are therefore obtained from independent batches of measurements, with a group-dependent shot budget $S_g$ (or, more generally, an allocation $\{S_\alpha\}_{\alpha\in\mathcal{I}_g}$ within the group). This measurement-group structure is the reason we model finite-shot fluctuations at the \emph{group level}: for each evaluation of $\mathcal{L}_g(\boldsymbol{\theta};\mathbf{x})$ we draw a single random fluctuation $\xi_g$ and define a synthetic estimator
\begin{equation}
    \widehat{\mathcal{L}}_g(\boldsymbol{\theta};\mathbf{x})
    :=
    \mathcal{L}_g(\boldsymbol{\theta};\mathbf{x})+\xi_g,
    \qquad
    \xi_g\sim\mathcal{N}\!\big(0,V_g(\boldsymbol{\theta};\mathbf{x})\big),
\end{equation}
where $V_g(\boldsymbol{\theta};\mathbf{x})$ is chosen to reproduce the leading-order shot-noise variance of the group estimator. Under the standard assumption that different terms in the group are estimated from independent shot sets (or that intra-group covariances are negligible), one has
\begin{equation}
    V_g(\boldsymbol{\theta};\mathbf{x})
    \approx
    \sum_{\alpha\in\mathcal{I}_g} c_\alpha(\mathbf{x})^2\,
    \Var\!\big[\widehat{\langle P_\alpha\rangle}_{\boldsymbol{\theta},\mathbf{x}}\big]
    =
    \sum_{\alpha\in\mathcal{I}_g} c_\alpha(\mathbf{x})^2\,
    \frac{1-\langle P_\alpha\rangle_{\boldsymbol{\theta},\mathbf{x}}^{\,2}}{S_\alpha},
    \label{eq:group_var_general_app}
\end{equation}
with $S_\alpha$ the number of shots allocated to term $P_\alpha$ (e.g., $S_\alpha=S_g/|\mathcal{I}_g|$ for uniform allocation). Finally, drawing $\{\xi_g\}_{g=1}^G$ independently across groups (reflecting independent measurement settings), we define the synthetic shot-noisy loss
\begin{equation}
    \widehat{\mathcal{L}}(\boldsymbol{\theta};\mathbf{x})
    :=
    \sum_{g=1}^G \widehat{\mathcal{L}}_g(\boldsymbol{\theta};\mathbf{x})
    =
    \mathcal{L}(\boldsymbol{\theta};\mathbf{x})+\sum_{g=1}^G \xi_g,
\end{equation}
which satisfies $\mathbb{E}[\widehat{\mathcal{L}}]=\mathcal{L}$ and
$\Var[\widehat{\mathcal{L}}]\approx \sum_{g=1}^G V_g$ by construction.

Equivalently, one may write the group estimator in the ``standard normal'' form
\begin{equation}
    \widehat{\mathcal{L}}_g(\boldsymbol{\theta};\mathbf{x})
    =
    \mathcal{L}_g(\boldsymbol{\theta};\mathbf{x})
    + \sqrt{V_g(\boldsymbol{\theta};\mathbf{x})}\,\zeta_g,
    \qquad
    \zeta_g\sim\mathcal{N}(0,1),
\end{equation}
which is identical to the previous expression since $\sqrt V_g\zeta_g\sim\mathcal{N}(0,V_g)$.
In practice, each time the optimizer requests a loss evaluation (either to compute $\widehat{\mathcal{L}}$ itself or as part of a gradient-estimation rule such as parameter shift), we draw independent $\{\zeta_g\}_{g=1}^G$. This matches the shot-based setting, where each circuit evaluation corresponds to an independent batch of measurement outcomes. Importantly, gradients are obtained by applying the same finite-shot procedure to the noisy loss evaluations (e.g., combining two independently noised shifted losses in the parameter-shift rule), so that the resulting gradient estimator remains unbiased, $\mathbb{E}[\widehat{\nabla_{\boldsymbol{\theta}}\mathcal{L}}]=\nabla_{\boldsymbol{\theta}}\mathcal{L}$, rather than by differentiating through a fixed realization of the random fluctuations.

\end{document}